\newtheorem{definition}{Definition}
\newtheorem{proposition}{Proposition}
\newtheorem{theorem}{Theorem}
\newtheorem{corollary}{Corollary}
\def\eg{\emph{e.g.}}
\def\ie{\emph{i.e.}}
\def\cf{\emph{cf.}}
\def\etal{\textit{et al.}}
\newcommand\vldbdoi{10.14778/3579075.3579086}
\newcommand\vldbpages{1126 - 1139}
\newcommand\vldbvolume{16}
\newcommand\vldbissue{5}
\newcommand\vldbyear{2023}
\newcommand\vldbauthors{\authors}
\newcommand\vldbtitle{\shorttitle} 
\newcommand\vldbavailabilityurl{URL_TO_YOUR_ARTIFACTS}
\newcommand\vldbpagestyle{empty}
\begin{document}
\sloppy

\title{On the Risks of Collecting Multidimensional Data\\Under Local Differential Privacy}

\author{Héber H. Arcolezi}
\affiliation{
  \institution{Inria and École Polytechnique (IPP)}
}
\email{heber.hwang-arcolezi@inria.fr}

\author{Sébastien Gambs}
\affiliation{
  \institution{Université du Québec à Montréal, UQAM}
}
\email{gambs.sebastien@uqam.ca}

\author{Jean-François Couchot}
\affiliation{
  \institution{Femto-ST Institute, Univ. Bourg. Franche-Comt\'e, CNRS}
}
\email{jean-francois.couchot@univ-fcomte.fr}

\author{Catuscia Palamidessi}
\affiliation{
  \institution{Inria and École Polytechnique (IPP)}
}
\email{catuscia@lix.polytechnique.fr}

\begin{abstract}
The private collection of multiple statistics from a  population is a fundamental statistical problem. One possible approach to realize this is to rely on the local model of differential privacy (LDP). 
Numerous LDP protocols have been developed for the task of frequency estimation of single and multiple attributes. These studies mainly focused on improving the utility of the algorithms to ensure the server performs the estimations accurately. 
In this paper, we investigate privacy threats (re-identification and attribute inference attacks) against LDP protocols for multidimensional data following two state-of-the-art solutions for frequency estimation of multiple attributes. 
To broaden the scope of our study, we have also experimentally assessed five widely used LDP protocols, namely, generalized randomized response, optimal local hashing, subset selection, RAPPOR and optimal unary encoding. 
Finally, we also proposed a countermeasure that improves both utility and robustness against the identified threats. 
Our contributions can help practitioners aiming to collect users' statistics privately to decide which LDP mechanism best fits their needs.
\end{abstract}

\maketitle

\pagestyle{\vldbpagestyle}
\begingroup\small\noindent\raggedright\textbf{PVLDB Reference Format:}\\
\vldbauthors. \vldbtitle. PVLDB, \vldbvolume(\vldbissue): \vldbpages, \vldbyear.\\
\href{https://doi.org/\vldbdoi}{doi:\vldbdoi}
\endgroup
\begingroup
\renewcommand\thefootnote{}\footnote{\noindent
This work is licensed under the Creative Commons BY-NC-ND 4.0 International License. Visit \url{https://creativecommons.org/licenses/by-nc-nd/4.0/} to view a copy of this license. For any use beyond those covered by this license, obtain permission by emailing \href{mailto:info@vldb.org}{info@vldb.org}. Copyright is held by the owner/author(s). Publication rights licensed to the VLDB Endowment. \\
\raggedright Proceedings of the VLDB Endowment, Vol. \vldbvolume, No. \vldbissue\ %
ISSN 2150-8097. \\
\href{https://doi.org/\vldbdoi}{doi:\vldbdoi} \\
}\addtocounter{footnote}{-1}\endgroup

\ifdefempty{\vldbavailabilityurl}{}{
\vspace{.3cm}
\begingroup\small\noindent\raggedright\textbf{PVLDB Artifact Availability:}\\
The source code, data, and/or other artifacts have been made available at \url{https://github.com/hharcolezi/risks-ldp}.
\endgroup
}

\pagenumbering{arabic}

\section{Introduction} \label{sec:introduction}

Private and public organizations regularly collect and analyze digital data about their collaborators, volunteers, clients, etc. 
However, due to the sensitive nature of this personal data, the collection of users' raw data on a centralized server should be avoided. 
The distributed version of Differential Privacy (DP)~\cite{Dwork2006,Dwork2006DP,dwork2014algorithmic}, known as Local DP (LDP)~\cite{Duchi2013,first_ldp}, aims to address such a challenge. 
Indeed, using an LDP mechanism, a user can sanitize her profile locally before transmitting it to the server, which leads to strong privacy protection even if the server used for the aggregation is malicious. 
The LDP model has a close connection with the concept of randomized response~\cite{Warner1965}, which provides ``plausible deniability'' to users' reports. 
For this reason, LDP has been already implemented in large-scale systems by Google~\cite{rappor}, Microsoft~\cite{microsoft} and Apple~\cite{apple}.

A fundamental task under LDP guarantees is frequency estimation~\cite{tianhao2017,kairouz2016discrete,rappor,microsoft,kairouz2016extremal,apple,Wang2017,Cormode2021}, in which the data collector estimates the number of users for each possible value of one attribute based on the sanitized data of the users. 
More recently, a new line of research started investigating \textit{security}~\cite{Cheu2021,wu2021poisoning,cao2021data,li2022fine} and \textit{privacy~\cite{chatzikokolakis2020bayes,Murakami2021,Gursoy2022,Gadotti2022} threats} to LDP protocols (mainly for frequency estimation), which are discussed in detail in Section~\ref{sec:rel_work}.

In this paper, we further investigate the \textbf{privacy threats for the users} when the server aims to perform \textit{frequency estimation of multiple attributes} under LDP guarantees. 
In this setting~\cite{xiao2,tianhao2017,wang2019,Arcolezi2021_rs_fd,Varma2022}, the profile of each user is characterized by $d$ attributes $\mathcal{A}=\{A_1,A_2,\ldots,A_d\}$, in which each attribute $A_j$ has a discrete domain of size $k_j=|A_j|$, for $j \in [d]$. 
There are $n$ users $\mathcal{U}=\{u_1,\ldots,u_n\}$, and each user $u_i$, for $i \in [n]$, holds a private tuple $\textbf{v}^{(i)}=\left[v^{(i)}_{1},v^{(i)}_{2},\ldots,v^{(i)}_{d}\right]$, in which $v^{(i)}_{j}$ represents the value of attribute $A_j$ in record $\textbf{v}^{(i)}$. 
Thus, for each attribute $A_j \in \mathcal{A}$, for $j \in [d]$, the aggregator's goal is to estimate a $k_j$-bins histogram.

To the best of our knowledge, for the task considered\footnote{This is a different task of joint distribution estimation under LDP guarantees~\cite{Zhang2018,Ren2018}.}, there are mainly three solutions for satisfying LDP by randomizing the user's tuple $\textbf{v}=[v_{1},v_{2}, \ldots, v_{d}]$\footnote{For simplicity, we omit the index notation $\textbf{v}^{(i)}$ and focus on one arbitrary user $u_i$.}, which are described in the following:

\begin{itemize}
\item  \textbf{Splitting (SPL).} This naïve solution directly splits the privacy budget $\epsilon$ by $d$ attributes and reports all attributes with $\frac{\epsilon}{d}$-LDP, thus incurring a high estimation error~\cite{xiao2,tianhao2017,wang2019,Arcolezi2021_allomfree}.

\item \textbf{Sampling (SMP).} Instead of splitting the privacy budget, one state-of-the-art solution allows users to randomly sample a single attribute and report it with $\epsilon$-LDP~\cite{xiao2,wang2019,Arcolezi2021_allomfree,tianhao2017}. 

\item \textbf{Random Sampling Plus Fake Data (RS+FD)~\cite{Arcolezi2021_rs_fd}.} One of the weakness of the SMP solution is that it discloses the sampled attribute, which might not be fair to all users (\eg, some users will sample age but others will sample sensitive attribute such as disease). 
The objective of the state-of-the-art RS+FD solution is precisely to enable users to ``hide'' the sampled attribute (\ie, $\epsilon$-LDP value) by also generating one uniformly random fake data for each non-sampled attribute. 
Thus, RS+FD creates \textit{uncertainty} on the server-side.
\end{itemize}

Focusing on the state-of-the-art solutions SMP and RS+FD, first, we empirically demonstrate through extensive experiments that \textbf{the SMP solution is vulnerable to re-identification attacks} when collecting users' multidimensional data several times with $\epsilon$ values commonly used by industry nowadays~\cite{linkedin,desfontaines_dp_real_world}. 
For instance, assume a user has multiple mobile applications each surveying the user with the SMP solution on different attributes. 
Another possible scenario is the situation in which the same mobile application is used on a regular basis but surveys users with different attributes.
This enables the user to sample a (possibly different) attribute each time, thus resulting in sending their sampled attribute along with their $\epsilon$-LDP report.
Nevertheless, we show that an adversary who can see every tuple containing $\langle$sampled attribute, $\epsilon$-LDP report$\rangle$ can construct a partial or complete profile of the user, which can possibly be unique (or in a small anonymity set of $k$ individuals) in the population considered. 
Therefore, once the set of $k$ individuals (referred to as top-$k$ in this paper) is characterized, one can leverage well-known attacks (\eg, homogeneity)~\cite{Samarati2001,Cohen2022,sweeney2015only,SWEENEY2002,samarati1998protecting,Machanavajjhala2006,Li2007}.

More specifically, to attack the SMP solution, our adversarial analysis focuses on the reduced \textbf{``plausible deniability''}~\cite{Warner1965,domingo2018connecting} of using the whole privacy budget $\epsilon$ to report a single attribute out of $d$ ones. 
In this setting, the adversary has a higher chance to infer the users' true value for each data collection performed.
Consequently, in multiple data collections, the adversary can build partial or even sometimes complete profiles of each user, then using it to perform a re-identification attack.
However, this depends on the LDP protocol being used as the encoding and randomization vary across them~\cite{tianhao2017,Cormode2021}. 
In our experiments, we have assessed five widely used LDP protocols for frequency estimation (\emph{a.k.a.} frequency oracle protocols~\cite{Wang2018,Wang2020_post_process}), namely Generalized Randomized Response (GRR)~\cite{kairouz2016discrete,kairouz2016extremal}, Optimal Local Hashing (OLH)~\cite{tianhao2017}, Subset Selection (SS)~\cite{wang2016mutual,Min2018} and two Unary Encoding (UE) protocols (Basic One-time RAPPOR~\cite{rappor} and Optimal UE~\cite{tianhao2017}). 
To assess the risks of re-identification we have also considered two privacy models, usual LDP and the relaxed version of LDP developed in~\cite{Murakami2021} (the latter in Appendix~\ref{appC:add_re_ident_smp}) for measuring re-identification risks.

Secondly, we observe that \textbf{since the RS+FD solution generates fake data uniformly at random in~\cite{Arcolezi2021_rs_fd,Varma2022}, it is possible to uncover the sampled attribute of users} in certain conditions. 
In this context, we evaluated the effectiveness of the RS+FD solution in hiding the sampled attribute to the aggregator by varying the privacy budget $\epsilon$, the LDP protocol and the fake data generation procedure. 
In particular, if the aggregator is able to break RS+FD into the SMP solution, the RS+FD solution might also be subject to the same vulnerability to re-identification attacks on multiple collections. 
Thus, we have proposed three attack models to uncover the sampled attribute of users using the RS+FD solution and evaluated its risks to re-identification attacks.
Lastly, as shown in our results, RS+FD is, to some extent, a natural countermeasure to re-identification attacks due to chaining errors from incorrectly predicting the sampled attribute and user's value in multiple collections.
Building on this, we have designed a stronger countermeasure that adapts RS+FD to generate fake data following non-uniform distributions, \textbf{almost fully preventing the inference of the sampled attribute while preserving utility}.

To summarize, this paper makes the following contributions:

\begin{itemize}
    \item   We investigate privacy threats against LDP protocols for multidimensional data following two state-of-the-art solutions for frequency estimation of multiple attributes, SMP~\cite{xiao2,tianhao2017,wang2019,Arcolezi2021_allomfree} and RS+FD~\cite{Arcolezi2021_rs_fd}, providing insightful adversarial analysis to help in LDP protocol selection.

    \item   We demonstrate through extensive experiments that the SMP solution is vulnerable to re-identification attacks due to the disclosure of the sampled attribute and lower ``plausible deniability''~\cite{Warner1965,domingo2018connecting} when using the whole privacy budget to report a single attribute.
    
    \item   We propose three attack models to predict the sampled attribute of users when collecting multidimensional data with the RS+FD solution with about a 2-20 fold increment over a random guess baseline model.
    
    \item   We show through empirical results that the RS+FD solution can prevent (to some extent) re-identification attacks
    
    \item   Finally, we present an adaptation of the RS+FD solution that can serve as a countermeasure to the identified privacy threats while improving both privacy and utility. 
\end{itemize}

\textbf{Outline.} In Section~\ref{sec:preliminaries}, we review the LDP privacy model, the LDP protocols and solutions for collecting multidimensional data investigated in this paper. 
Afterwards, in Section~\ref{sec:attacks}, we present the system overview and adversarial setting for both SMP and RS+FD solutions. 
In Section~\ref{sec:results}, we present our experimental evaluation and analyze our results before in Section~\ref{sec:countermeasure} presenting an improvement of the RS+FD as a countermeasure. 
Next, we provide a general discussion in Section~\ref{sec:disc}.
Finally in Section~\ref{sec:rel_work}, we review related work before concluding with future perspectives of this work in Section~\ref{sec:conclusion}.

\section{Preliminaries} 
\label{sec:preliminaries}

This section briefly reviews the LDP model, state-of-the-art LDP frequency estimation protocols and three solutions for multiple attribute frequency estimation under LDP.

\subsection{Local Differential Privacy} 
\label{sub:privacy_metrics}

In this paper, we use LDP (Local Differential Privacy)~\cite{first_ldp,Duchi2013} as the privacy model considered, which is formalized as:

\begin{definition}[$\epsilon$-Local Differential Privacy]\label{def:ldp} A randomized algorithm ${\mathcal{M}}$ satisfies $\epsilon$-local-differential-privacy ($\epsilon$-LDP), where $\epsilon>0$, if for any pair of input values $v_1, v_2 \in Domain(\mathcal{M})$ and any possible output $y$ of ${\mathcal{M}}$:

\begin{equation} \label{eq:ldp}
    \Pr[{\mathcal{M}}(v_1) = y] \leq e^\epsilon \cdot \Pr[{\mathcal{M}}(v_2) = y] \textrm{.}
\end{equation}
\end{definition}

In essence, LDP guarantees that it is unlikely for the data aggregator to reconstruct the data source regardless of the prior knowledge.
The privacy budget $\epsilon$ controls the privacy-utility trade-off for which lower values of $\epsilon$ result in tighter privacy protection. 
Similar to central DP, LDP also has several important properties, such as immunity to post-processing and composability~\cite{dwork2014algorithmic}. 

\subsection{LDP Frequency Estimation Protocols} 
\label{sub:ldp_protocols}

In this subsection, we review five state-of-the-art LDP protocols, which enables the aggregator to estimate the frequency of any value $v_i \in A_j$, for $i \in [k_j]$, under LDP guarantees. 
 
\subsubsection{Generalized Randomized Response} 
\label{sub:kRR}

Randomized response (RR)~\cite{Warner1965} is the classical technique for achieving LDP, which provides ``plausible deniability'' for individuals responding to embarrassing (binary) questions in a survey.
The Generalized RR (GRR)~\cite{kairouz2016discrete,kairouz2016extremal} protocol extends RR to the case of $k_j \geq 2$ while satisfying $\epsilon$-LDP. 
Given a value $v_i \in A_j$, for $i \in [k_j]$, \textit{GRR($v_i$)} outputs the true value with probability $p$, and any other value $v \in A_j \setminus \{v_i\}$ with probability $1-p$. 
More formally, the perturbation function is:

\begin{equation*}
    \forall{y \in A_j}  : \quad \Pr[y=a] = \begin{cases} p=\frac{e^{\epsilon}}{e^{\epsilon}+k_j-1} , \textrm{ if } a = v\\ q=\frac{1}{e^{\epsilon}+k_j-1}, \textrm{ otherwise} \textrm{,} \end{cases}
\end{equation*}

\noindent in which $y$ is the perturbed value sent to the aggregator. 
The GRR protocol satisfy $\epsilon$-LDP since $\frac{p}{q}=e^{\epsilon}$. 
To estimate the normalized frequency of $v_i \in A_j$, for $i \in [k_j]$, one counts how many times $v_i$ is reported, expressed as $C(v_i)$, and then computes~\cite{tianhao2017}:

\begin{equation}\label{eq:est_pure}
    \hat{f}(v_i) = \frac{C(v_i) - nq}{n(p - q)} \textrm{,}
\end{equation}

\noindent in which $n$ is the total number of users. 
In~\cite{tianhao2017}, it was proven that Eq.~\eqref{eq:est_pure} is an unbiased estimator (\ie, $\mathbb{E}(\hat{f}(v_i))=f(v_i)$).

\subsubsection{Optimal Local Hashing} \label{sub:olh}

Local hashing (LH) protocols can handle a large domain size $k_j$ by first using hash functions to map an input value to a smaller domain of size $g_j$ (typically $g_j \ll k_j)$, and then applying GRR to the hashed value in the smaller domain. 

The authors in~\cite{tianhao2017} have proposed Optimal LH (OLH), which selects $g_j=e^{\epsilon} + 1$. 
Given a value $v_i \in A_j$, for $i \in [k_j]$, in OLH, one reports $\langle H, GRR(H(v_i)) \rangle$ in which $H$ 
is randomly chosen from a family of universal hash functions that hash each value in $A_j$ to $[g_j]=\{1,\ldots,g_j\}$, which is the domain that GRR($\cdot$) will operate on. 
The hash values will remain unchanged with probability $p'$ and switch to a different value in $[g_j]$ with probability $q'$, as:

\begin{equation*}
    \forall{y \in [g_j]}  : \quad \Pr[y=\left( H,a \right)] = \begin{cases} p'=\frac{e^{\epsilon}}{e^{\epsilon}+g_j-1} , \textrm{ if } a = H(v)\\ q'=\frac{1}{e^{\epsilon}+g_j-1}, \textrm{ otherwise} \textrm{,} \end{cases}
\end{equation*}

\noindent in which $y$ is the hash function and perturbed value sent to the aggregator. 
From this, the aggregator can obtain the unbiased estimation of $v_i \in A_j$, for $i \in [k_j]$, with Eq.~\eqref{eq:est_pure} by setting $p=p'$ and $q=\frac{1}{g_j} \cdot p' + \left( 1 -  \frac{1}{g_j} \right) \cdot q' = \frac{1}{g_j}$~\cite{tianhao2017}. 

\subsubsection{Subset Selection} \label{sub:kSS}

The main idea of $\omega$-Subset Selection ($\omega$-SS)~\cite{wang2016mutual,Min2018} is to randomly select $\omega$ items within the input domain to report a subset of values (\ie, $\Omega \subseteq A_j$). 
The user's true value $v_i \in A_j$, for $i \in [k_j]$, has higher probability of being included in the subset $\Omega$, compared to other values in $A_j \setminus \{v_i\}$ that are sampled uniformly at random (without replacement). 
The optimal subset size $\omega=|\Omega|$ that minimizes the variance is $\omega=\frac{k_j}{e^{\epsilon}+1}$~\cite{wang2016mutual,Min2018}. 

Given a value $v_i \in A_j$, for $i \in [k_j]$, the $\omega$-SS protocol starts by initializing an empty subset $\Omega$. 
Afterwards, the true value $v_i$ is added to $\Omega$ with probability $p=\frac{\omega e^{\epsilon}}{\omega e^{\epsilon} + k_j - \omega}$. 
Finally, it adds values to $\Omega$ as follows~\cite{wang2016mutual,Min2018}:

\begin{itemize}
    \item If $v_i$ has been added to $\Omega$ in the previous step, then $\omega - 1$ values are sampled from $A_j \setminus \{v_i\}$ uniformly at random (without replacement) and are added to $\Omega$;
    
    \item If $v_i$ has not been added to $\Omega$ in the previous step, then $\omega$ values are sampled from $A_j \setminus \{v_i\}$ uniformly at random (without replacement) and are added to $\Omega$.
\end{itemize}

From this, the aggregator can obtain the unbiased estimation of $v_i \in A_j$, for $i \in [k_j]$, with Eq.~\eqref{eq:est_pure} by setting $p=\frac{\omega e^{\epsilon}}{\omega e^{\epsilon} + k_j - \omega}$ and $q=\frac{\omega e^{\epsilon}(\omega-1) + (k_j - \omega)\omega}{(k_j - 1)(\omega e^{\epsilon} + k_j - \omega)}$~\cite{wang2016mutual,Min2018}.

\subsubsection{Unary Encoding Protocols} \label{sub:ue_protocols}

Unary encoding (UE) protocols interpret the user's input $v_i \in A_j$, for $i \in [k_j]$ as a one-hot $k_j$-dimensional vector.
More specifically, $B=UE(v_i)$ is a binary vector with only the bit at the position $v_i$ sets to 1 and the other bits set to 0. 
One well-known UE-based protocol is the Basic One-time RAPPOR~\cite{rappor}, hereafter referred to as symmetric UE (SUE)~\cite{tianhao2017}, which randomizes the bits from $B$ independently with probabilities:

\begin{equation}  \label{eq:rappor_parameters}
    \forall{i \in [k_j]} : \quad \Pr[B_i'=1] =\begin{cases} p=\frac{e^{\epsilon/2}}{e^{\epsilon/2}+1}, \textrm{ if } B_i=1 \\ q=\frac{1}{e^{\epsilon/2}+1}, \textrm{ if } B_i=0 \textrm{.}\end{cases}
\end{equation}
Afterwards, the client sends $B'$ to the aggregator. 
More recently, to minimize the variance of the SUE protocol, the authors in~\cite{tianhao2017} proposed Optimal UE (OUE), which selects probabilities $p=\frac{1}{2}$ and $q=\frac{1}{e^{\epsilon}+1}$ in Eq.~\eqref{eq:rappor_parameters} asymmetrically (\ie, $p+q \neq 1$). 
The estimation method used in Eq.~\eqref{eq:est_pure} applies equally to both SUE and OUE protocols, in which both satisfy $\epsilon$-LDP for $\epsilon = ln\left( \frac{p(1-q)}{(1-p)q} \right )$~\cite{rappor,tianhao2017}.

\subsection{Multidimensional Frequency Estimation} \label{sub:multi_freq_est}

Let $n$ be the total number of users, $d\geq 2$ be the total number of attributes, $\textbf{k}=[k_1,k_2,\ldots,k_d]$ be the domain size of each attribute, $\mathcal{M}$ be a local randomizer and $\epsilon$ be the privacy budget. 
Each user holds a tuple $\textbf{v}=[v_1,v_2,\ldots,v_d]$, (\ie, a private discrete value per attribute). 
The two next subsections describes the SPL, SMP and RS+FD solutions for frequency estimation of multiple attributes.

\subsubsection{Standard Solutions} \label{sub:spl_smp_sol}

Previous works in the local DP setting considered the following approaches~\cite{tianhao2017,xiao2,wang2019,Arcolezi2021_allomfree}:

\begin{itemize}
    
    \item  \textbf{SPL.} On the one hand, due to the sequential composition theorem~\cite{dwork2014algorithmic}, users can split the privacy budget $\epsilon$ over the number of attributes $d$ and send all randomized values $y_j$, for $j \in [d]$, with $\frac{\epsilon}{d}$-LDP to the aggregator (\ie, a tuple $\textbf{y}=[y_1,y_2,\ldots,y_d]$). However, this naïve SPL solution leads to high estimation error~\cite{xiao2,wang2019,Arcolezi2021_allomfree,tianhao2017}.
    
    \item  \textbf{SMP.} Instead of splitting the privacy budget $\epsilon$, this state-of-the-art solution allows each user to sample a single attribute $j \in [d]$ at random and uses all the privacy budget to send it with $\epsilon$-LDP~\cite{xiao2,wang2019,Arcolezi2021_allomfree,tianhao2017}. 
    In this case, each user tells the aggregator which attribute is sampled, and what is the perturbed value for it ensuring $\epsilon$-LDP (\ie, $\langle j, y_j \rangle$).

\end{itemize}

\subsubsection{Random Sampling Plus Fake Data (RS+FD)} \label{sub:rspfd_sol}

Because the SMP solution discloses the sampled attribute, one can say that it is not fair to all users (\eg, some users will sample age while others will sample disease). 
To address this issue, the recently proposed RS+FD~\cite{Arcolezi2021_rs_fd} solution is composed of two steps, namely local randomization and fake data generation. 
More precisely, each user samples a unique attribute uniformly at random $j=Uniform\left([d]\right)$ and uses an $\epsilon$-LDP protocol to sanitize its value $v_j$. 
Next, for each non-sampled attribute $i \in [d] \setminus \{j\}$, the user generates uniform random fake data following $A_i$. 
Finally, each user sends the (LDP or fake) value of each attribute to the aggregator (\ie, a tuple $\textbf{y}=[y_1,y_2,\ldots,y_d]$). 
In this manner, the sampling result is not disclosed to the aggregator, thus increasing the \textit{uncertainty}. 
For this reason, to satisfy $\epsilon$-LDP, following the parallel composition theorem~\cite{dwork2014algorithmic} and the amplification by sampling result~\cite{Li2012}, RS+FD utilizes an amplified privacy budget $\epsilon'=\ln{\left( d \cdot (e^{\epsilon} - 1) + 1 \right)}$ for the sampled attribute~\cite{Arcolezi2021_rs_fd}. 

With the RS+FD solution, the estimator should remove the bias introduced by the local randomizer $\mathcal{M}$ and uniform fake data. In~\cite{Arcolezi2021_rs_fd}, the authors used GRR and OUE as LDP protocols within the RS+FD solution, which results in RS+FD[GRR], RS+FD[OUE-z] and RS+FD[OUE-r]. 
We briefly recall how these three protocols, generalizing OUE to UE as one can select either SUE or OUE (\cf{} Section~\ref{sub:ue_protocols}) as local randomizers~\cite{Arcolezi2021_rs_fd,Varma2022}. 

For all three protocols, on the \textit{client-side}, each user randomly samples an attribute $j$ and uses $\mathcal{M}$ to sanitize the value $v_j$ with an amplified privacy parameter $\epsilon'=\ln{\left( d \cdot (e^{\epsilon} - 1) + 1 \right)}$. 
Next, the fake data generation procedure and the unbiased estimator for the frequency of each value $v_i \in A_j$, for $i \in [k_j]$, are as follows:

\begin{itemize}
    
\item \textbf{RS+FD[GRR]~\cite{Arcolezi2021_rs_fd}.} 
For each non-sampled attribute $i \in [d] \setminus \{j\}$, the user generates fake data uniformly at random according to the domain size $k_i$. 
On the \textit{server-side}, the unbiased estimator for this protocol is: $\hat{f}(v_i) = \frac{C(v_i) dk_j - n(d - 1 + qk_j)}{nk_j(p-q)}$, in which $C(v_i)$ is the number of times $v_i$ has been reported, $p=\frac{e^{\epsilon'}}{e^{\epsilon'} + k_j - 1}$ and $q=\frac{1-p}{k_j-1}$.
    
\item  \textbf{RS+FD[UE-z]~\cite{Arcolezi2021_rs_fd}.} For each non-sampled attribute $i \in [d] \setminus \{j\}$, the user generates fake data by applying an UE protocol to zero-vectors (\ie, $[0,0,\ldots,0]$) of size $k_i$. 
On the \textit{server-side}, the unbiased estimator for this protocol is: $\hat{f}(v_i) =  \frac{d(C(v_i)  - nq)}{n(p-q)}$, in which $C(v_i)$ is the number of times $v_i$ has been reported and parameters $p$ and $q$ can be selected following the SUE~\cite{rappor} or OUE~\cite{tianhao2017} protocols.
    
\item  \textbf{RS+FD[UE-r]~\cite{Arcolezi2021_rs_fd}.} For each non-sampled attribute $i \in [d] \setminus \{j\}$, the user generates fake data by applying an UE protocol to one-hot-encoded fake data (uniform at random) of size $k_i$. 
On the \textit{server-side}, the unbiased estimator for this protocol is: $\hat{f}(v_i) =  \frac{C(v_i) d k_j - n\left[ qk_j + (p-q)(d-1) + qk_j(d-1)) \right]}{nk_j(p-q)}$, in which $C(v_i)$ is the number of times $v_i$ has been reported and parameters $p$ and $q$ can be selected following the SUE~\cite{rappor} or OUE~\cite{tianhao2017} protocols.

\end{itemize}

\section{System Overview \& Privacy Threats} \label{sec:attacks}

Hereafter, we describe the system and adversary models before presenting our adversarial analyses of SMP and RS+FD.

\subsection{System Overview} \label{sub:system_overview}

We consider the situation in which a (possibly untrusted) server collects users' multidimensional data $d\geq 2$ for frequency estimation under $\epsilon$-LDP guarantees multiple times.
Particularly, in each data collection (\ie, survey), the server can select a different number of attributes. 
For instance, through a mobile app the server may collect private frequency estimation for different users' demographic data and different application usage (\eg, how much time spent on the application, preferred widget, etc). 
Users could be encouraged to share their private data through the exchange of discount coupons, statistics to compare usage with other users, etc. 
For the sake of simplicity, we assume that the set of users $\mathcal{U}$ is unique across all surveys, although this can be relaxed in real-life allowing users to opt-in or opt-out of a given survey. 
We assume that the server uses one of the state-of-the-art LDP solutions (\eg, SMP or RS+FD) to collect one random attribute per user.
Thus, we do not consider the SPL solution in our attacks as all attributes would be collected at once, thus resulting in a low level of utility~\cite{xiao2,wang2019,Arcolezi2021_allomfree,tianhao2017} 

\textbf{Adversary model.} Following the LDP assumptions~\cite{first_ldp,Duchi2013}, we assume that the server knows the users' pseudonymized IDs, \textit{but not their private data or their real identity}. 
This also implies that the server has no knowledge about the real data distributions. 
However, we assume that the server might have some background knowledge $\mathcal{D}_{BK}$ coming from public available source, such as Census data~\cite{census}.
This background knowledge could for instance contain partial or complete profiles of users along with their true identities.
Thus, the adversary could be for example the server itself, an attacker who intercepts the communication between the client and the server (\eg, through a man-in-the-middle attack) or a third-party analyst with whom the server may have shared the collected data.

\subsection{Attacking SMP: Plausible Deniability and Risks of Re-Identification} \label{sub:atk_models_smp}

\textbf{Plausible deniability.} Let $v_y$ be an embarrassing value of $A_j=\{v_y,v_n\}$ (\eg, a value ``Yes'' for an attribute $A_j$ denoting whether someone cheated on their partner). As long as $\Pr \left[ \mathcal{M}(v_y)=v_y \right] < 1$, the user can deny to have $A_j=v_y$~\cite{domingo2018connecting}.

The LDP protocols of Section~\ref{sub:ldp_protocols} are based on RR~\cite{Warner1965}, which provides ``plausible deniability'' for users' reports. 
However, increasing $\epsilon$ to improve utility of LDP protocols compromises the ``plausible deniability'' of the users' reports. 
Indeed, common $\epsilon$ values used daily by users in high-scale industrial systems nowadays range from small $\epsilon\leq1$ to high values $\epsilon \geq 8$~\cite{desfontaines_dp_real_world,linkedin,tang2017privacy}. 
Thus, we conduct an adversarial analysis to the SMP solution (\cf{} Section~\ref{sub:spl_smp_sol}) in which the user randomly samples a single attribute among $d\geq 2$ ones and uses the whole privacy budget $\epsilon$ to report it. 
Consequently, since the whole privacy budget will be allocated to a single attribute, the ``plausible deniability'' for this attribute will be lower, which can lead an attacker to predict the users' true value as the most likely value after randomization (see details in Section~\ref{sub:plausible_deniability}). 
In this setting, in which many surveys are proposed by the server to the same set of users with possibly different number of attributes (\eg, demographic, preference, application usage, \ldots), an attacker knowing the tuple $\langle$sampled attribute, $\epsilon$-LDP report$\rangle$ will be able to profile each user throughout time. 
Therefore, once a partial or complete profile of the target user is built (see details in Sections~\ref{sub:plausible_deniability_uniform},~\ref{sub:plausible_deniability_non_uniform} and~\ref{sub:re_ident_attack_models}), the adversary could use his background knowledge $\mathcal{D}_{BK}$ to possibly re-identify a user within population~\cite{Samarati2001,SWEENEY2002,samarati1998protecting,Machanavajjhala2006,sweeney2015only,Li2007}, possibly also inferring all other available attributes.
The next four subsections analyze the ``plausible deniability'' of LDP protocols in single and multiple collections, and describes the proposed re-identification attack models, respectively.

\subsubsection{Plausible Deniability of LDP protocols} \label{sub:plausible_deniability}

Given a user's true value $v \in A_j$, different LDP protocols $\mathcal{M}$ have different type of output $y_i=\mathcal{M}\left(v, \epsilon \right)$~\cite{tianhao2017}. 
For instance, UE protocols output unary encoded vectors, $\omega$-SS outputs a subset $\Omega$ of $\omega$ non-encoded values and so on (\cf{} Section~\ref{sub:ldp_protocols}). 
Thus, for each user $u_i \in \mathcal{U}$, for $i \in [n]$, given $y_i$, the adversary's goal is to predict $v_i$, which is denoted as $\hat{v}_i$. 
The attacker's accuracy (ACC) for LDP protocols is measured by the number of correct predictions $v = \hat{v}$ over the number of users $n$: $ACC_{FO} (\%) = 100 \cdot \frac{\sum_{i=1}^{n} f\left(v_i, \hat{v}_i \right)}{n}$, in which $f\left(v,\hat{v}\right)=1$ if $v=\hat{v}$ and $0$ otherwise. 
Following the ``plausible deniability'' intuition and the fact that for all LDP protocols the probability $p$ of reporting the true value $v_i$ (or bit $i$) is higher than any other value $v \in A_j \setminus \{v_i\}$, we now describe our attack strategy to each LDP protocol.
By the time of completing this paper, we learned about a recent work showing that the expectation of our attacks could be analytically formalized with the Bayes adversary of~\cite{Gursoy2022}.
We believe this work is complementary to our ``plausible deniability'' attacking interpretation.

\textbf{Plausible Deniability of GRR.} Since no specific encoding is used with GRR, the most likely value after randomization is the user's $u_i$ own true value $v$. Thus, an attacker can assume that the reported value $y$ is the true one (\ie, $\hat{v}=y$), which gives on expectation an $ACC_{GRR} (\%)= 100 \cdot \frac{e^{\epsilon}}{e^{\epsilon} + k_j - 1}$. 

\textbf{Plausible Deniability of OLH.} 
Since the output of OLH for user $u_i$ is the hash function $H_i$ used to hash the user's value $v$ and the hashed value $h_i=H_i\left(v\right)$, the most likely value after randomization is one within the subset of all values $v \in A_j$ that hash to $h_i$ (\ie, $A_{j_{H}}= \{v|v\in A_j, H_i(v) = h_i\}$). 
Thus, the attacker's best guess is a random choice $\hat{v}=Uniform\left( A_{j_{H}} \right)$. 
On expectation~\cite{Gursoy2022}, one achieves: $ACC_{OLH} (\%)=100 \cdot \frac{1}{2 \cdot \max \left( \frac{k_j}{e^{\epsilon}+1}, 1 \right)}$.

\textbf{Plausible Deniability of $\omega$-SS.} Since the output of $\omega$-SS for user $u_i$ is a set $\Omega \subseteq A_j$, the most likely value after randomization is one within the subset $\Omega$. 
Thus, the attacker's best guess is a random choice $\hat{v}=Uniform\left( \Omega \right)$. 
Selecting $\omega = \frac{k_j}{e^{\epsilon}+1}$ in $\omega$-SS~\cite{wang2016mutual,Min2018}, on expectation~\cite{Gursoy2022}, one achieves: $ACC_{\omega\textrm{-}SS} (\%)=100 \cdot \frac{e^{\epsilon}+1}{2 k_j}$.

\textbf{Plausible Deniability of UE protocols.} Since the output of UE protocols for user $u_i$ is a sanitized unary encoded vector $B$ of size $k_j$, there are three possibilities: 1) a single bit $b$ in $B$ is set to 1, in which the attacker's best guess is to predict the bit as the true value as $\hat{v}=B_b$; 2) more than one bit in $B$ is set to 1, in which the attacker's best guess is a random choice of the bits set to 1 as $\hat{v}=Uniform\left( \left\{b | b \in [k_j] \textrm{ if } B_b=1 \right\} \right)$; and 3) no bit in $B$ is set to 1, in which the attacker's best guess is a random choice of the domain $\hat{v}=Uniform\left( A_j \right)$. 
Therefore, on expectation~\cite{Gursoy2022}, the attacker's accuracy for SUE is: $ACC_{SUE} (\%) = 100 \cdot \frac{1}{k_j \left(e^{\epsilon/2}+1 \right)} \cdot \frac{e^{\epsilon/2}}{e^{\epsilon/2}+1}^{k_j - 1} + \sum_{i=1}^{k_j} \frac{e^{\epsilon/2}}{(e^{\epsilon/2}+1)i} \cdot \textrm{Bin} \left (  i-1 ; k_j - 1 , \frac{1}{e^{\epsilon/2}+1} \right )$, in which $\textrm{Bin}(.)$ denotes a Binomial distribution with $k_j-1$ trials, success probability $\frac{1}{e^{\epsilon/2}+1}$ and exactly $i-1$ successes. On the other hand, on expectation~\cite{Gursoy2022}, the attacker's accuracy for OUE is: $ACC_{OUE} (\%) = 100 \cdot \frac{1}{2 k_j} \cdot \frac{e^{\epsilon}}{e^{\epsilon}+1}^{k_j - 1} + \sum_{i=1}^{k_j} \frac{1}{2i} \cdot \textrm{Bin} \left (  i-1 ; k_j - 1 , \frac{1}{e^{\epsilon}+1} \right )$. 

\subsubsection{Plausible Deniability on Multiple Data Collections: Uniform Privacy Metric} \label{sub:plausible_deniability_uniform}

When collecting multidimensional data $d\geq 2$ with the SMP solution multiple times, the server could implement that all users sample attributes without replacement.
This way, each user will randomly select a new attribute in each data collection (\ie, survey), ensuring a \textit{uniform privacy metric across all users}. 
Since for all LDP protocols the expected $ACC_{FO}$ depends on $\epsilon$ and $k_j$, our analysis focuses on a generic LDP protocol here. 
Therefore, depending on the LDP protocol, the expected ACC with uniform privacy metric after $\#\textrm{surveys}=d$, denoted as $ACC_{FO}^{U}$, now follows: 

\begin{equation} \label{eq:uniform_priv_metric}
    ACC_{FO}^{U} (\%) = 100 \cdot \prod_{j=1}^{d} ACC_{FO} \left ( \epsilon, k_j \right) \textrm{.}
\end{equation}

Since each survey is independent and users sample without replacement, Eq.~\eqref{eq:uniform_priv_metric} represents the expected probability of accurately profiling users with exactly $d$ attributes.

\subsubsection{Plausible Deniability on Multiple Data Collections: Non-Uniform Privacy Metric} \label{sub:plausible_deniability_non_uniform}

On the other hand, when collecting multidimensional data $d\geq 2$ with the SMP solution multiple times, the server can allow users to sample attributes with replacements in each data collection. 
In case of a repeated attribute, the user can report the previous randomized value (\emph{a.k.a.} memoization~\cite{rappor,microsoft,Arcolezi2021_allomfree}).
This way, \textit{users will have a non-uniform privacy metric}. 
Depending on the LDP protocol, the expected ACC with non-uniform privacy metric after $\#\textrm{surveys}=d$, denoted as $ACC_{FO}^{NU}$, now follows: 

\begin{equation} \label{eq:non_uniform_priv_metric}
    ACC_{FO}^{NU} (\%) = 100 \cdot \prod_{j=1}^{d} \frac{d + 1 - j}{d} ACC_{FO} \left ( \epsilon, k_j \right) \textrm{.}
\end{equation}

Since each survey is independent but attributes are sampled with replacement, Eq.~\eqref{eq:non_uniform_priv_metric} denotes the overall adversary's accuracy only considering users that reports a different attribute in each survey (\ie, of accurately profiling users with exactly $d$ attributes). 
Thus, in this setting, users can also end-up with partial profiles.

\textbf{Analytical analysis of expected ACC.} 
In Fig.~\ref{fig:analytical_ACC_uniform_VS_non_uniform}, we illustrate the expected $ACC_{FO}^{U}$ following Eq.~\eqref{eq:uniform_priv_metric} and the $ACC_{FO}^{NU}$ following Eq.~\eqref{eq:non_uniform_priv_metric} of each LDP protocol with the following parameters (taken from Section~\ref{sec:results}): $\epsilon=[1, 2,\ldots,9, 10]$, $d=3$, $\textbf{k}=[74, 7, 16]$, and $\# \textrm{surveys}=d$. 
From Fig.~\ref{fig:analytical_ACC_uniform_VS_non_uniform} (a), one can notice that GRR, $\omega$-SS and SUE have the highest attacker's accuracy, which would enable an adversary to accurately infer a \textbf{complete profile} after $\# \textrm{surveys}=d$.
Allowing users to have non-uniform privacy metrics in the plot (b), minimizes the attacker's accuracy to infer complete profiles as the probability of selecting different attributes in all $d$ surveys is $\frac{d!}{d^d}$.
Note that the expected $ACC_{FO}$ in both Eqs.~\eqref{eq:uniform_priv_metric} and~\eqref{eq:non_uniform_priv_metric} decreases with the $\#\textrm{surveys}$ since the probability of accurately inferring the users' true value is independent in each survey.

\begin{figure}[!ht]
\begin{subfigure}{.5\columnwidth}
  \centering
  \includegraphics[width=0.9\linewidth]{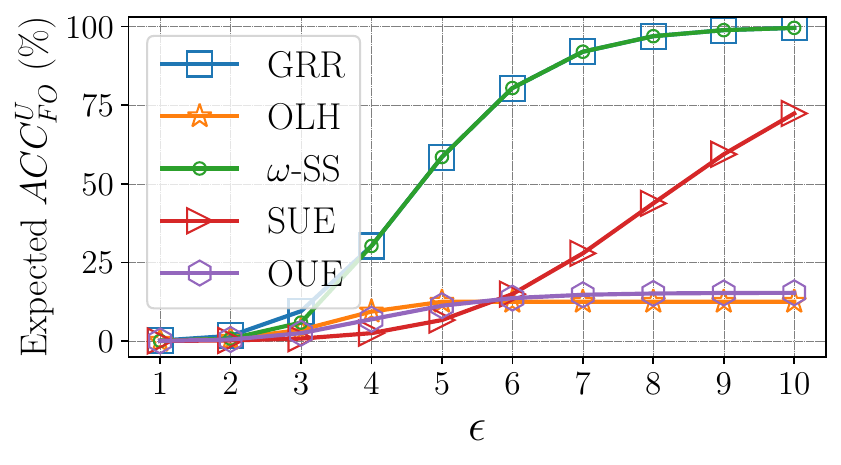}
  \caption{Uniform privacy metric.}
\end{subfigure}%
\begin{subfigure}{.5\columnwidth}
  \centering
  \includegraphics[width=0.9\linewidth]{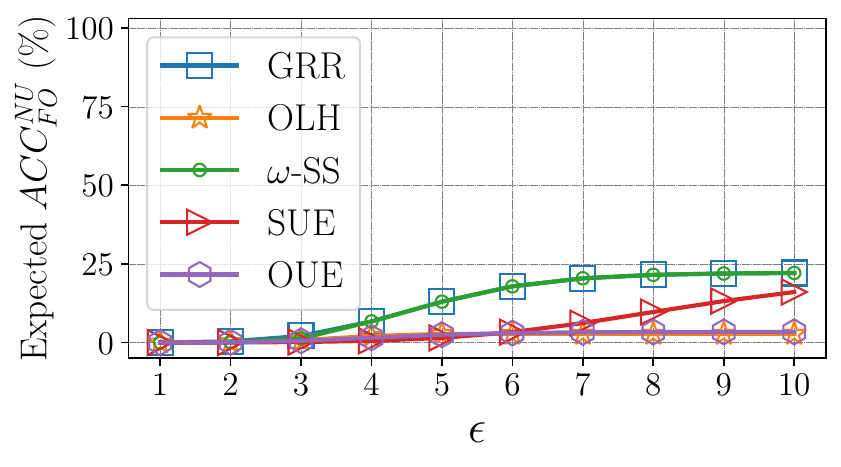}
  \caption{Non-uniform privacy metric.}
\end{subfigure}
\caption{Analytical attacker's accuracy when collecting multidimensional data ($d=3$) with the SMP solution multiple times ($\#surveys=3$) with attributes' domain size $\textbf{k}=[74, 7, 16]$: (a) uniform privacy metric across users with Eq.~\eqref{eq:uniform_priv_metric} and (b) non-uniform privacy metric across users with Eq.~\eqref{eq:non_uniform_priv_metric}.}
\label{fig:analytical_ACC_uniform_VS_non_uniform}
\end{figure}

\subsubsection{Re-Identification Attack Models} \label{sub:re_ident_attack_models}

Following the system overview of Section~\ref{sub:system_overview}, we consider two re-identification attack models: \textbf{full-knowledge re-identification (FK-RI)} and \textbf{partial knowledge re-identification (PK-RI)}, that we detail in the following. 
The first FK-RI model considers that the attacker has access to the complete background knowledge $\mathcal{D}_{BK}$ to re-identify the target user. 
The latter PK-RI model considers that the attacker only has access to a subset $\mathcal{D}_{PK} \subseteq \mathcal{D}_{BK}$ for her re-identification attack.
The re-identification success of both FK-RI and PK-RI models will depend on the results of Sections~\ref{sub:plausible_deniability_uniform} and~\ref{sub:plausible_deniability_non_uniform} to accurately profile the target user, which is impacted by the LDP protocol considered. 

In particular, after $\# \textrm{surveys}$, the attacker will have a profile $\textbf{y}_i$ of at most $\# \textrm{surveys}$ sanitized values for the target user $u_i \in \mathcal{U}$. 
The number of attributes inferred per target user depends on the setting used (\ie, uniform or non-uniform privacy metrics). 
Therefore, the re-identification attack starts with a \textit{matching algorithm} $\mathcal{R}$, which takes as input the sanitized profile $\textbf{y}_i$ and the background knowledge $\mathcal{D}_{BK}$ (or $\mathcal{D}_{PK}$ for PK-RI), and outputs a score $c_i \in \mathbb{R}$. 
More precisely, the score $c_i$ measures the distance between the target $\textbf{y}_i$ and all samples $\textbf{r} \in \mathcal{D}_{BK}$. 
Since the LDP protocols from Section~\ref{sub:ldp_protocols} do not have a notion of ``distance'' when randomizing a value, when an attribute in $\textbf{y}_i \neq \textbf{r}$ the distance is 1 and 0 otherwise. 
A smaller distance between $\textbf{y}_i$ and a profile in $\mathcal{D}_{BK}$ indicates that is highly likely that $\textbf{y}_i$ has been re-identified through the uniqueness combination of $\# \textrm{surveys}$ attributes~\cite{Samarati2001,SWEENEY2002,samarati1998protecting,Machanavajjhala2006,sweeney2015only,Li2007}. 
Finally, a \textit{decision algorithm} $\mathcal{G}$ takes as input the computed distances and outputs a list of top-$k$ possible profiles (or IDs) in $\mathcal{D}_{BK}$ that corresponds to the target user $u_i \in \mathcal{U}$. 
The attacker's re-identification accuracy (RID-ACC) is measured by the number of correct re-identification $u_{id}=\hat{u}_{id}$ over the number of users $n$: $RID\textrm{-}ACC (\%) = 100 \cdot \frac{\sum_{i=1}^{n} f\left(u_{id_{i}}, \hat{u}_{id_{i}} \right)}{n}$, in which $f\left(u_{id},\hat{u}_{id}\right)=1$ if $u_{id}=\hat{u}_{id}$ and $0$ otherwise.
The attacker's RID-ACC depends on the accuracy of partially or completely profiling the target user (\ie, as measured by Eqs.~\eqref{eq:uniform_priv_metric} and~\eqref{eq:non_uniform_priv_metric}) and the ``uniqueness'' of users with respect to the collected attributes (unknown by the server) and in the background knowledge $\mathcal{D}_{BK}$.

\subsection{Attacking RS+FD: Uncovering the Sampled Attribute ($\rightarrow$ SMP)} \label{sub:atk_models_rspfd}

Because the objective of the RS+FD solution is to hide the LDP value among fake data~\cite{Arcolezi2021_rs_fd}, discovering the sampled attribute of each user would convert RS+FD into the SMP solution again. 
Even more, unlike SMP (and SPL), RS+FD utilizes an amplified $\epsilon' > \epsilon$, which decreases the ``plausible deniability'' of the user's report (\cf{} Section~\ref{sub:plausible_deniability}) and could thus be leveraged for re-identification attacks (\cf{} Section~\ref{sub:re_ident_attack_models}) under multiple data collections.

For instance, consider the scenario in which a given user $u \in \mathcal{U}$ whose sampled attribute is $t \in [d]$ produces an RS+FD's output tuple as $\textbf{y}=[y_1, y_2,\ldots, y_d]$. 
In this situation, the \textbf{baseline classification model} is just a random guess $\hat{t}=Uniform([d])$. 
In addition, we propose a \textbf{classifier learning setting} in which an attacker aims to train a classifier over a learning dataset $\textbf{D}_l=\{(\textbf{y}_i, t_i) | \textrm{  } i \in [r]\}$ of $r$ rows and $c=d+1$ columns. 
That is, for each user $u_i$, $\textbf{y}_i$ is the output tuple of the RS+FD solution (LDP/fake values, \ie, a full profile of $d$ attributes) and $t_i$ is the sampled attribute (target is a class within $[d]$). 
\textbf{Because the sampled attribute $t_i$ of users should be unknown to the attacker}, in this work, we propose three settings to build a learning dataset $\textbf{D}_l$, which depends on the attack model. 
In all these settings, we assume that the attacker has the knowledge of the privacy budget $\epsilon$ and the LDP protocol used by users with the RS+FD solution.
Finally, the attacker's attribute inference accuracy (AIF-ACC) is measured by the number of correct predictions $t=\hat{t}$ over the number of users in the testing dataset $n_t$: $AIF\textrm{-}ACC (\%) = 100 \cdot \frac{\sum_{i=1}^{n_t} f\left(t_i, \hat{t}_i \right)}{n_t}$, in which $f\left(t,\hat{t}\right)=1$ if $t=\hat{t}$ and $0$ otherwise.

\subsubsection{No Knowledge: Training a Classifier Over Synthetic Profiles}

With no knowledge of the real sampled attribute of the $n$ users $u \in \mathcal{U}$ and after aggregating users' LDP data, an attacker could use the estimated frequencies $\hat{\textbf{f}}=[\hat{f_1}, \hat{f_2}, \ldots, \hat{f_d}]$ to generate $s$ \textbf{synthetic profiles} $\textbf{s}_i=[s_1, s_2, \ldots, s_d]$, for $i \in [s]$, \ie, mimic the real profiles with one value per attribute. 
Afterwards, for all $s$ synthetic profiles, the attacker could follow the same protocol used by the real users (\emph{i.e}, RS+FD with an LDP protocol) to generate the learning set $\textbf{D}_l$. 
Notice that the attacker has full control over the training set size $s$, which can be seen as a trade-off between computational costs (\ie, generating $s$ synthetic profiles and use as training set) and the attacker's AIF-ACC. 
In this \textbf{no knowledge (NK)} model, the testing set $\textbf{D}_t$ is composed of all the real RS+FD's sanitized tuples $\textbf{y}$ of users $u \in \mathcal{U}$, and the objective is to accurately classify their sampled attribute $t \in [d]$. 

\subsubsection{Partial-Knowledge: Training a Classifier Over Real (Known) Profiles}

This second setting considers the scenario in which the attacker has knowledge about the sampled attribute of $n_{pk} < n$ real users, \ie, the subset $\mathcal{U}_{pk} \subset \mathcal{U}$\footnote{If $\mathcal{U}_{pk} \subseteq \mathcal{U}$, this will correspond to a full-knowledge model in which the adversary has knowledge of all users' sampled attribute (\ie, SMP solution).}. 
This setting corresponds in situations in which some users disclose the sampled attribute by preference (\eg, less ``sensitive'' attributes) or due to security breaches. 
In this \textbf{partial-knowledge (PK)} model, the learning set $\textbf{D}_l$ depends on the number of (compromised) profiles $n_{pk}$ the attacker has access to and the testing set $\textbf{D}_t$ has $n - n_{pk}$ sanitized tuples $\textbf{y}$ of users $u \in \mathcal{U} \setminus \mathcal{U}_{pk}$, in which the objective is to accurately classify their sampled attribute $t \in [d]$.

\subsubsection{Partial-Knowledge Plus Synthetic Profiles}

This last setting combines both NK and PK models, in which the attacker has knowledge about the sampled attribute of $n_{pk} < n$ real users and augments the subset $\mathcal{U}_{pk} \subset \mathcal{U}$ with $s$ synthetic profiles. 
In this \textbf{hybrid model (HM)}, the learning set $\textbf{D}_l$ is dependent on both the number of synthetic profiles $s$ the attacker generates and the number of (compromised) profiles $n_{pk}$ the attacker has access to. 
Similarly to the PK model, the testing set $\textbf{D}_t$ has $n - n_{pk}$ sanitized tuples $\textbf{y}$ of users $u \in \mathcal{U} \setminus \mathcal{U}_{pk}$, and the goal is to accurately classify their sampled attribute $t \in [d]$.

\section{Experimental Evaluation} \label{sec:results}

In this section, we introduce the general setup of our experiments. 
Next, we present the experimental setting and results on the risks of re-identification of the SMP solution. 
Afterwards, we describe the setup of experiments carried out to uncover the sampled attribute of the RS+FD solution. 
Finally, we detail the experimental setting and results on the risks of re-identification of the RS+FD solution.

\subsection{Experimental Setup} \label{sub:setup}

\noindent\textbf{Environment.} All algorithms were implemented in Python 3.
In all experiments, we report the results averaged over 20 runs.

\noindent\textbf{Datasets.} For ease of reproducibility, we conduct our experiments on two census-based multidimensional and open datasets. 

\begin{itemize}
\item  \textbf{ACSEmployement.} This dataset is generated from the Folktables Python package~\cite{ding2021retiring} that provides access to datasets derived from the US Census. 
We have selected the ``Montana'' state only, which results in $n=10,336$ samples with $d=18$ discrete attributes (target included) and domain size $\textbf{k}=[92, 25, 5, 2, 2, 9, 4, 5, 5, 4, 2, 18, 2, 2, 3, 9, 3, 6]$.
    
\item  \textbf{Adult.} This is a classical dataset from the UCI ML repository~\cite{uci} with $n=45,222$ samples after cleaning. 
We selected $d=10$ attributes (``age'',	``workclass'', ``education'',	``marital-status'', ``occupation'', ``relationship'', ``race'', ``sex'', ``native-country'' and ``salary'') with domain size $\textbf{k}=[74, 7, 16, 7, 14, 6, 5, 2, 41, 2]$, respectively. 
\end{itemize}

\subsection{Re-identification Risk of the SMP Solution} \label{sub:results_re_ident_smp}

\noindent\textbf{Methods evaluated.} We consider for evaluation all five LDP protocols described in Section~\ref{sub:ldp_protocols}: GRR, OLH, $\omega$-SS, SUE and OUE. 

\noindent\textbf{Privacy protection.} We vary the privacy budget in the interval $\epsilon=[1, 2,\ldots, 9, 10]$, which corresponds to values used by industry nowadays~\cite{linkedin,desfontaines_dp_real_world} and experiments found in the LDP attacking literature with single~\cite{chatzikokolakis2020bayes,Murakami2021,Gursoy2022} and multiple~\cite{Gadotti2022} collections.

\noindent\textbf{Attack performance metric.} We measure the quality of the re-identification attack with the attacker's re-identification accuracy (RID-ACC) metric, which corresponds to how many times the user is correctly re-identified in the top-$k$ groups, for top-$k$ $\in \{1, 10\}$.

\noindent\textbf{Baseline.} For each top-$k$, the baseline re-identification model follows top-$k$ random guesses (\ie, $\hat{u}_{id}=Uniform([n])$) without replacement with expected RID-ACC: top-$k/n$.

\noindent\textbf{Experimental evaluation.} We set $\# \textrm{surveys}=5$, in which each survey $sv \in [\# \textrm{surveys}]$, has a different number of attributes $d_{sv}=Uniform\left(\frac{d}{2},\ldots,d\right)$ (\ie, with at least $\frac{d}{2}$ attributes). 
The attributes are also selected at random per survey. 
Due to space constraints, we only present here the experiments with the FK-RI model (\cf{} Section~\ref{sub:re_ident_attack_models}), considering the $d$-dimensional dataset as background knowledge $\mathcal{D}_{BK}$, and with the uniform privacy metric setting from Section~\ref{sub:plausible_deniability_uniform}.
Finally, we measure the attacker's RID-ACC after $\# \textrm{surveys}\geq 2$, which results in the inferred profile of each user having respectively $2, 3, 4 \textrm{ or }5$ attributes, to be used for the re-identification attack.

\noindent\textbf{Results.} Fig.~\ref{fig:reident_smp} illustrates the attacker's RID-ACC metric on the Adult dataset for top-$k$ re-identification using the SMP solution, the FK-RI model with uniform $\epsilon$-LDP privacy metric across users, by varying the LDP protocol and the number of surveys. 
Additional results with all LDP protocols, Adult and ACSEmployement datasets, FK-RI and PK-RI models, uniform and non-uniform privacy metric settings as well as with the relaxed LDP metric of~\cite{Murakami2021} are presented in Appendix~\ref{appC:add_re_ident_smp}.

\begin{figure*}[!ht]
\begin{subfigure}{.5\textwidth}
  \centering
  \includegraphics[width=0.9\linewidth]{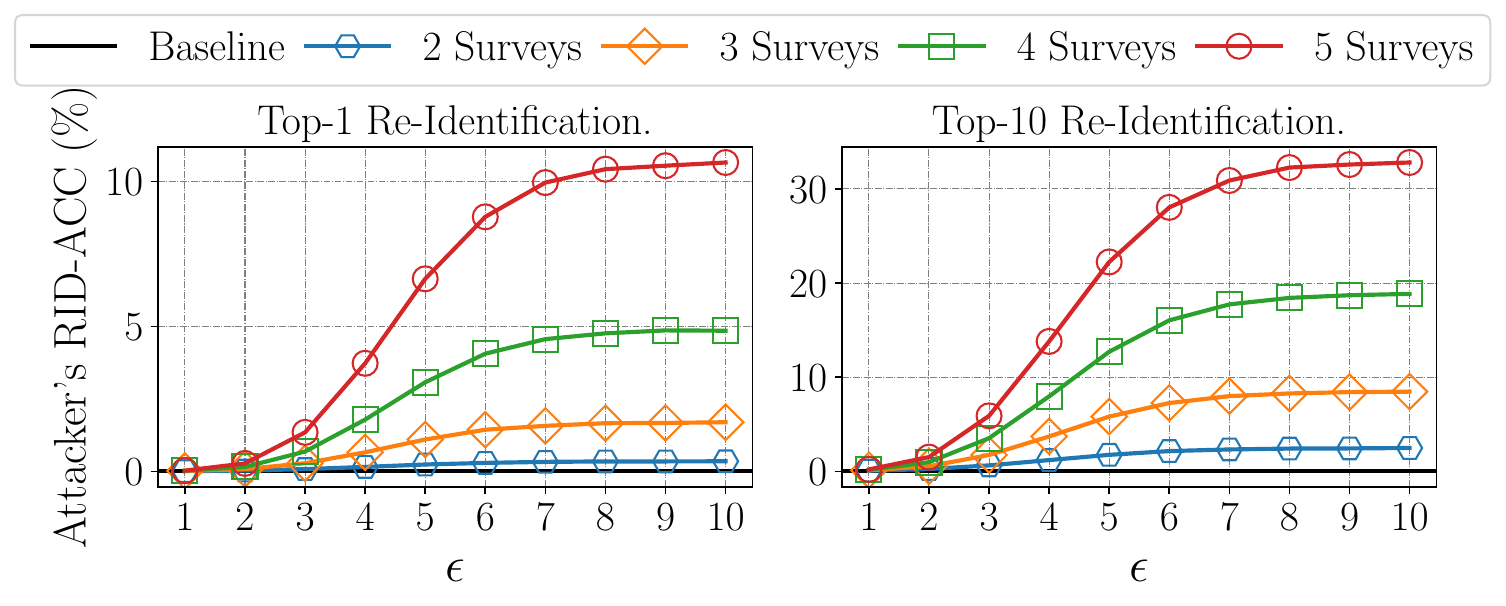}
  \caption{Re-identification risk of the GRR~\cite{kairouz2016discrete,kairouz2016extremal} protocol.}
\end{subfigure}%
\begin{subfigure}{.5\textwidth}
  \centering
  \includegraphics[width=0.9\linewidth]{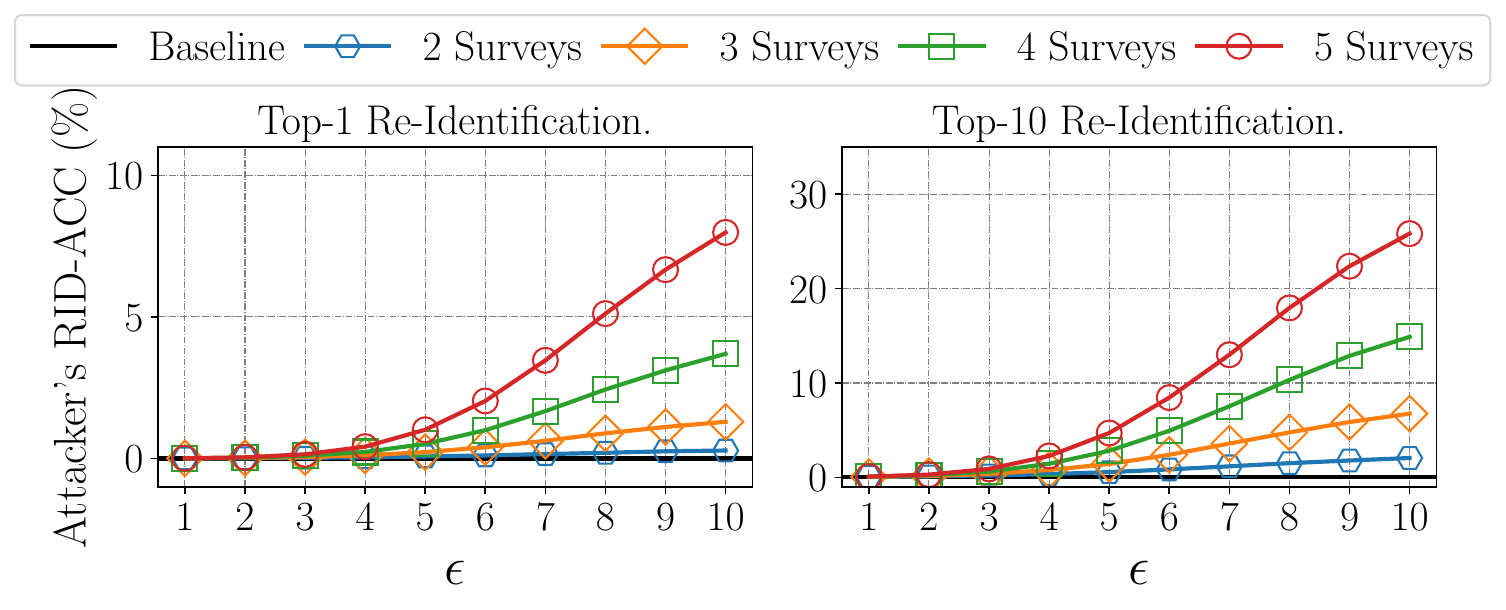}
  \caption{Re-identification risk of the SUE (\emph{a.k.a.} RAPPOR)~\cite{rappor} protocol.}
\end{subfigure}
\\
\begin{subfigure}{.5\textwidth}
  \centering
  \includegraphics[width=0.9\linewidth]{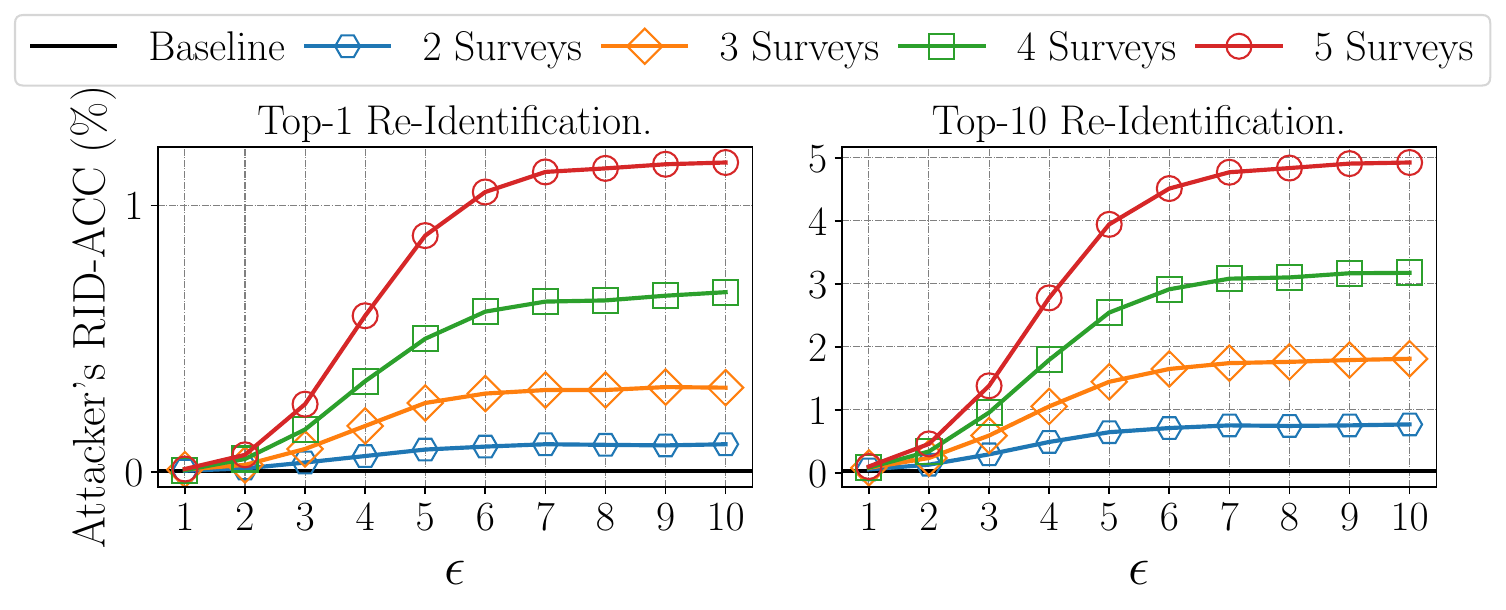}
  \caption{Re-identification risk of the OLH~\cite{tianhao2017} protocol.}
\end{subfigure}%
\begin{subfigure}{.5\textwidth}
  \centering
  \includegraphics[width=0.9\linewidth]{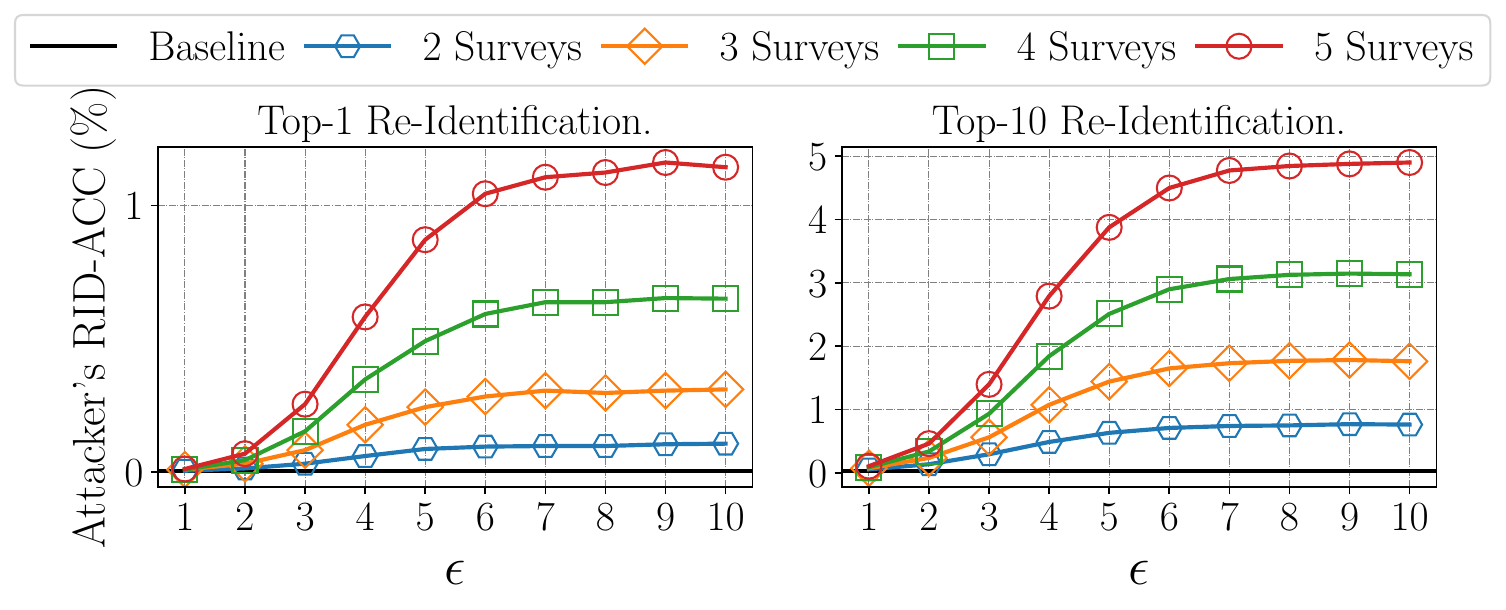}
  \caption{Re-identification risk of the OUE~\cite{tianhao2017} protocol.}
\end{subfigure}
\caption{Attacker's re-identification accuracy (RID-ACC) on the Adult dataset for top-$k$ re-identification on using the SMP solution, the full knowledge FK-RI model with uniform $\epsilon$-LDP privacy metric across users, and by varying the LDP protocol and the number of surveys (\ie, data collections). Omitted results for the $\omega$-SS protocol~\cite{wang2016mutual,Min2018} is due to similarity to plot (a).}
\label{fig:reident_smp}
\end{figure*}

\noindent\textbf{Analysis.} In general, the experimental results of Fig.~\ref{fig:reident_smp} match the numerical results of the expected values from Fig.~\ref{fig:analytical_ACC_uniform_VS_non_uniform}.
From Fig.~\ref{fig:reident_smp}, one can observe that our re-identification attacks present significant improvement over a random baseline model that has $RID\textrm{-}ACC \ll 1 \%$ (\ie, $\textrm{top-}k/n$). 
For instance, with a single shot (\ie, top-$1$), the attacker's RID-ACC is already significant for GRR (and $\omega$-SS) and SUE after about $\# \textrm{surveys} \geq 4$, with at most $\sim 10\%$ of RID-ACC.
In comparison, both OUE and OLH protocols have about 10x less RID-ACC, (\ie, at most $\sim 1\%$ of RID-ACC for top-$1$). 
On the other hand, when there is a set of top-$10$ profiles, the adversary achieves $RID\textrm{-}ACC \geq 2.5 \%$ for GRR (and $\omega$-SS) after only 2 surveys with an upper bound of about $33\%$ of RID-ACC after 5 surveys. 
Though with slightly smaller RID-ACC, the SUE protocol also achieves about $28\%$ of RID-ACC after 5 surveys, and both OUE and OLH are upper bounded by about $5\%$ of RID-ACC.
Although the user is not uniquely re-identified, this still represents a threat due to the possibility of performing, \eg, homogeneity attacks~\cite{Cohen2022,Machanavajjhala2006,Li2007}. 

Overall, these ``high'' re-identification rates may be explained by many factors. 
First, the combination of multiple attributes within the Adult dataset leads to several unique people or small groups of people (this is also the case for the ACSEmployement dataset in Fig. 9 of Appendix~\ref{appC:add_re_ident_smp}).
Additionally, the uniform privacy metric setting require the users to always sample a new attribute, increasing the privacy leakage.
In a more realistic scenario, the non-uniform privacy metric setting minimizes the RID-ACC (see Fig. 11 of Appendix~\ref{appC:add_re_ident_smp}) as already shown in Fig.~\ref{fig:analytical_ACC_uniform_VS_non_uniform}.
Furthermore, the FK-RI model allows the attacker to use the whole background knowledge $\mathcal{D}_{BK}$ to match the inferred profiles. 
For instance, the attacker's RID-ACC metric decreased by almost half when considering the PK-RI model (\cf{} Fig. 10 of Appendix~\ref{appC:add_re_ident_smp}) since there are fewer attributes as background information to use for the matching algorithm $\mathcal{R}$ (see Section~\ref{sub:re_ident_attack_models}).
Lastly, we used the same dataset for private data collection and as (partial) background knowledge. 
A different set of experiments could mix demographic attributes and (synthetic) application usage in each survey, limiting the number of demographic attributes per user to constitute a profile.

\subsection{Uncovering the Sampled Attribute of the RS+FD Solution ($\rightarrow$ SMP)} \label{sub:results_att_inf_rspfd}

\noindent\textbf{Classifier.} We use the state-of-the-art XGBoost~\cite{XGBoost} algorithm to predict the sampled attribute of users in a multiclass classification framework (\ie, $d$ attributes) with default parameters.

\noindent\textbf{Methods evaluated.} We consider for evaluation five protocols within the RS+FD solution from Section~\ref{sub:rspfd_sol}, namely RS+FD[GRR], RS+FD[SUE-z], RS+FD[SUE-r], RS+FD[OUE-z] and RS+FD[OUE-r]. 

\noindent\textbf{Metrics.} Similar to Section~\ref{sub:results_re_ident_smp}, we vary the privacy budget in the interval $\epsilon=[1, 2,\ldots,9, 10]$. 
Besides, we use the attacker's attribute inference accuracy (AIF-ACC) metric to measure the quality of the attack, which corresponds to how many times the attacker can correctly predict the users’ sampled attribute.

\noindent\textbf{Baseline.} The baseline classification model is a random guess $\hat{t}=Uniform([d])$ with expected AIF-ACC: $1/d$.

\noindent\textbf{Experimental evaluation.} All five protocols are evaluated with the three settings of Section~\ref{sub:atk_models_rspfd}, namely No Knowledge (NK), Partial-Knowledge (PK) and Hybrid Model (HM). 
For the NK model, we vary the number of synthetic profiles $s$ the attacker generates in the interval $s=[1n, 3n, 5n]$. For the PK model, we vary the number of compromised profiles $n_{pk}$ the attacker has access to in the interval $n_{pk}=[0.1n, 0.3n, 0.5n]$. 
Finally, for the HM setting, we combined both intervals, \ie, $(s,n_{pk})=[(1n, 0.1n),(3n, 0.3n),(5n, 0.5n)]$.

\noindent\textbf{Results.} Fig.~\ref{fig:attack_rspfd} illustrates the attacker's AIF-ACC metric on the ACSEmployement dataset with the three attack models (\ie, NK, PK and HM) and all five protocols (\ie, RS+FD[GRR], RS+FD[SUE-z], RS+FD[OUE-z], RS+FD[SUE-r] and RS+FD[OUE-r]), varying $\epsilon$, the number of synthetic profiles $s$ and the number of compromised profiles $n_{pk}$. 
Additional results (Adult and Nursery datasets~\cite{uci}) are presented in Appendix~\ref{appD:add_RSpFD}.

\noindent\textbf{Analysis.} From Fig.~\ref{fig:attack_rspfd}, one can notice that the proposed attack models, namely, NK, PK and HM present significant 2-20 fold increments in the attacker's AIF-ACC over the Baseline model. 
Surprisingly, even under an NK model in which the attacker has access only to the estimated frequencies satisfying $\epsilon$-LDP, generating $s=[1n,3n,5n]$ synthetic profiles to train a classifier provides higher attacker's AIF-ACC than having compromised $n_{pk}=0.5n$ profiles in the PK model. 
On the other hand, increasing the number of synthetic profiles $s$ that the attacker generates in the NK model has less impact than increasing the number of compromised profiles $n_{pk}$ that the attacker has access to in the PK model. 
Due to this, results for both NK and HM models are quite similar. 

In this adversarial analysis, the attacker's AIF-ACC now depends on both the LDP protocol and how fake data are generated. 
For the former (\ie, \textit{different LDP protocols}), the difference between RS+FD[GRR] and RS+FD[UE-r] protocols lies in the encoding and randomization steps, which directly affects the attacker's AIF-ACC with a difference of about $5\%$ favoring the RS+FD[GRR] protocol. 
Since GRR requires no particular encoding, there is less noise compared to a randomized unary encoded vector. 
Furthermore, with respect to \textit{different fake data generation procedures}, when fake data are generated with a uniformly random (encoded) value (\ie, RS+FD[GRR] and RS+FD[UE-r]), the attacker's AIF-ACC is upper-bounded by about $25\%$. 
On the other hand, generating fake data through applying a UE protocol on zero-vectors led to an attacker's AIF-ACC of about $50\%$ with RS+FD[OUE-z] and almost $100\%$ with RS+FD[SUE-z] when $\epsilon=10$. 
This high accuracy with RS+FD[UE-z] protocols is because there is only one parameter to perturb each bit when generating fake data, \ie, $\Pr[0\rightarrow 1]=q$ (\cf{} Section~\ref{sub:ue_protocols}). 
When using different UE protocols, the randomization parameters $p$ and $q$ (\cf{} Section~\ref{sub:ue_protocols}) also influence the attacker's AIF-ACC, which led RS+FD[SUE] protocols to have lower attacker's AIF-ACC when $\epsilon$ is small, but higher attacker's AIF-ACC in low privacy regimes. 

Lastly, we remark that due to the original formulation of RS+FD in~\cite{Arcolezi2021_rs_fd} to generate fake data uniformly at random, a classifier was able to learn the sampled attribute from the users, as the distribution of the attributes was not always uniform with the ACSEmployement dataset. 
Nevertheless, when the attributes follow uniform-like distribution, none of the three attack models NK, PK or HM achieves a meaningful increment over the Baseline model (\cf{} results with the Nursery dataset~\cite{uci} in Appendix~\ref{appD:add_RSpFD}).

\begin{figure*}
\begin{subfigure}{.33\textwidth}
  \centering
  \includegraphics[width= 0.712\linewidth]{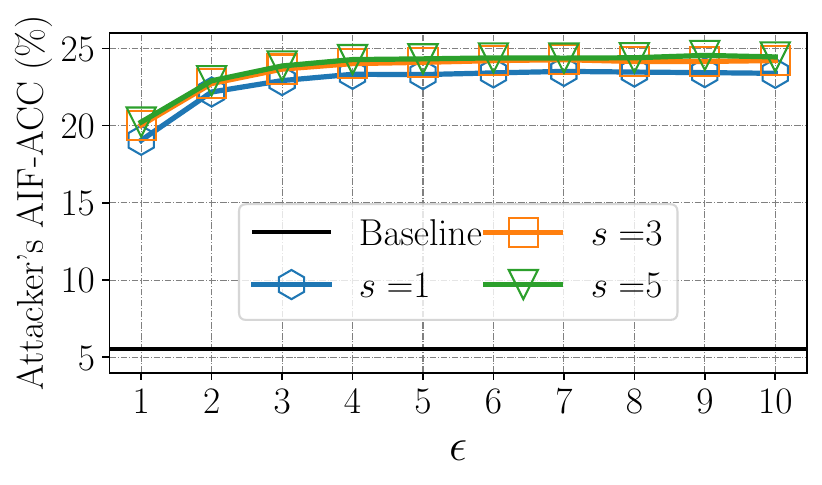}
  \caption{NK model with RS+FD[GRR] protocol.}
\end{subfigure}%
\begin{subfigure}{.33\textwidth}
  \centering
  \includegraphics[width= 0.712\linewidth]{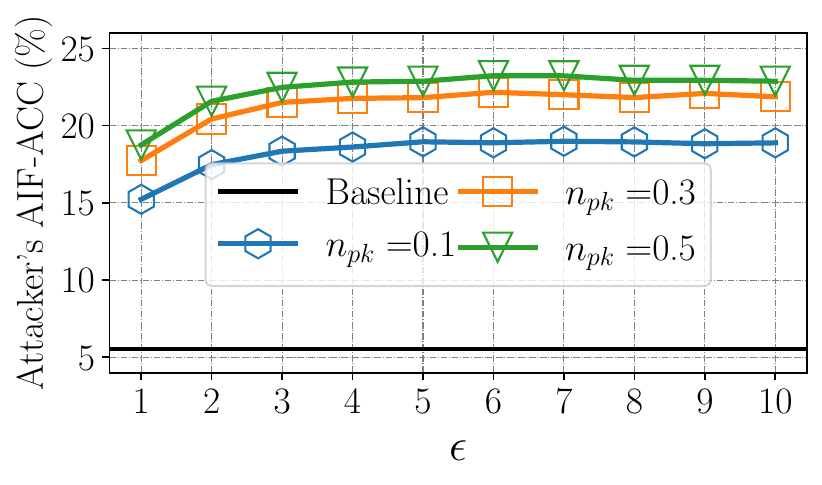}
  \caption{PK model with RS+FD[GRR] protocol.}
\end{subfigure}
\begin{subfigure}{.33\textwidth}
  \centering
  \includegraphics[width= 0.712\linewidth]{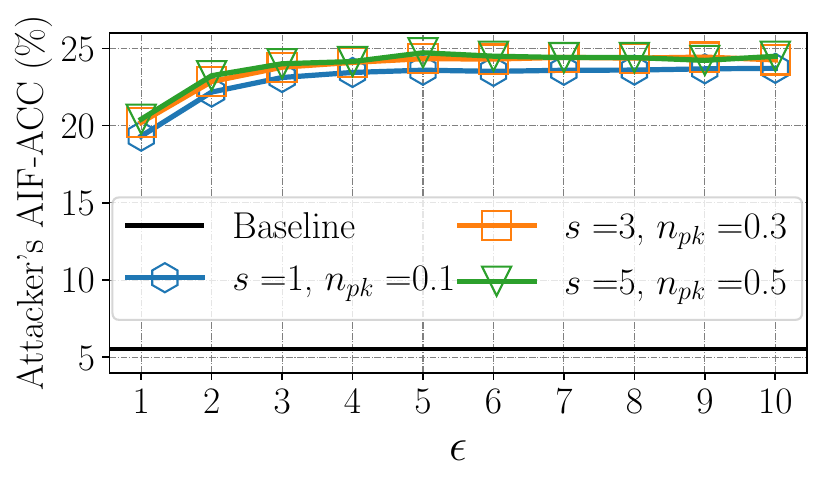}
  \caption{Hybrid model with RS+FD[GRR] protocol.}
\end{subfigure}
\\
\begin{subfigure}{.33\textwidth}
  \centering
  \includegraphics[width= 0.712\linewidth]{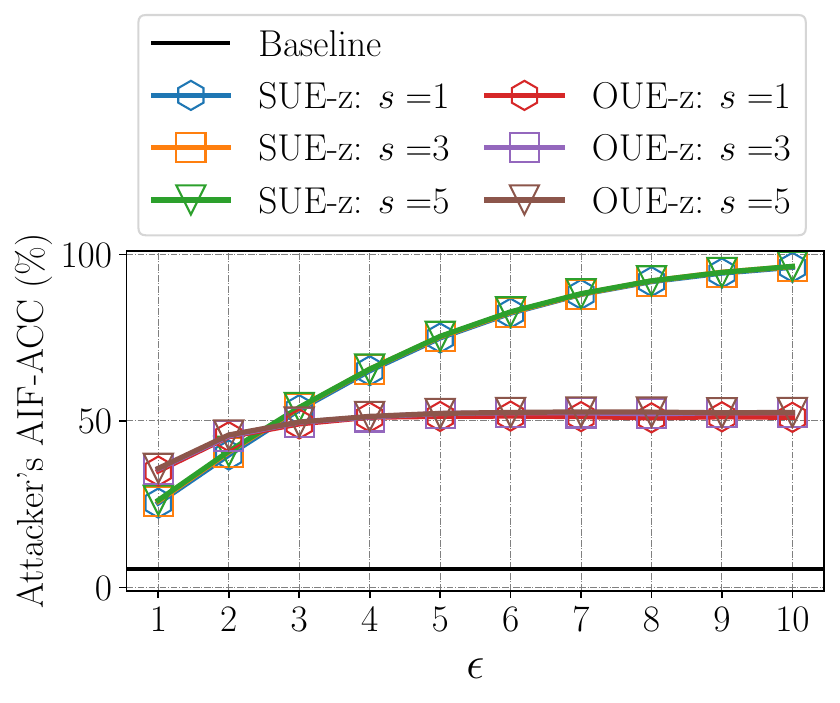}
  \caption{NK model with RS+FD[UE-z] protocols.}
\end{subfigure}%
\begin{subfigure}{.33\textwidth}
  \centering
  \includegraphics[width= 0.712\linewidth]{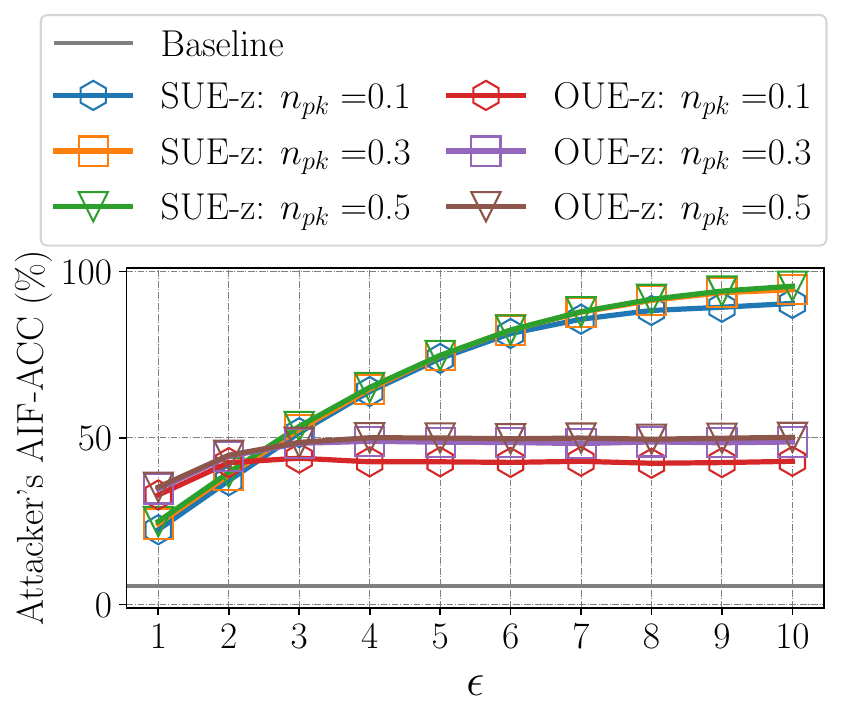}
  \caption{PK model with RS+FD[UE-z] protocols.}
\end{subfigure}
\begin{subfigure}{.33\textwidth}
  \centering
  \includegraphics[width= 0.712\linewidth]{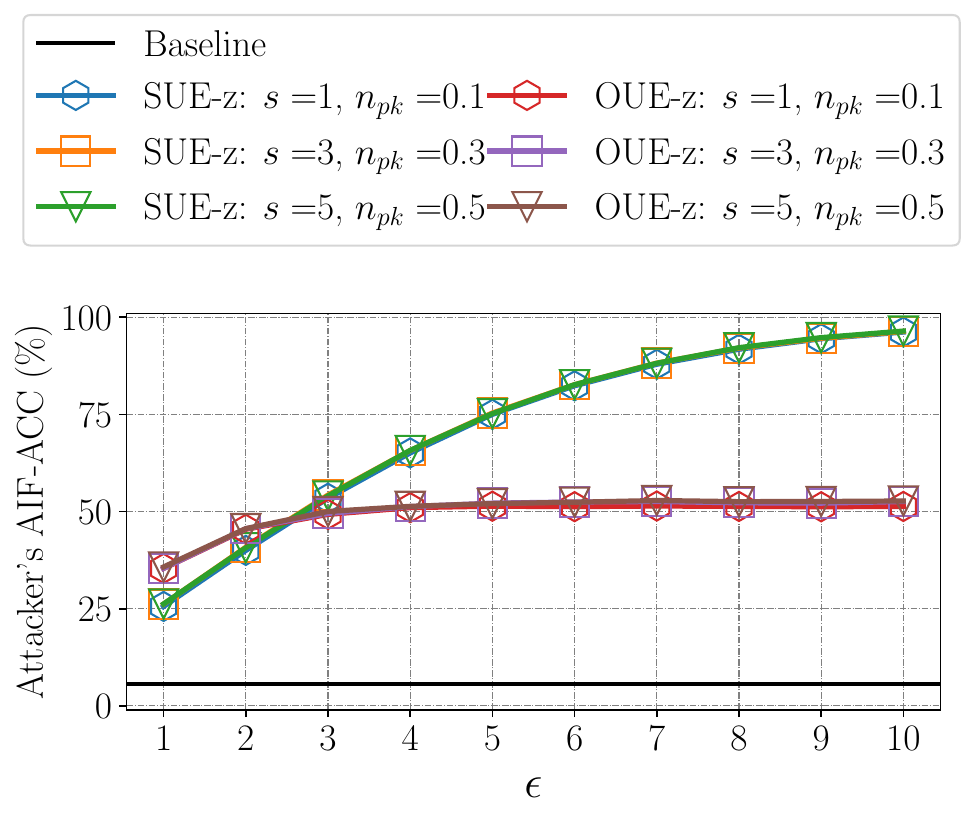}
  \caption{Hybrid model with RS+FD[UE-z] protocols.}
\end{subfigure}
\\
\begin{subfigure}{.33\textwidth}
  \centering
  \includegraphics[width= 0.712\linewidth]{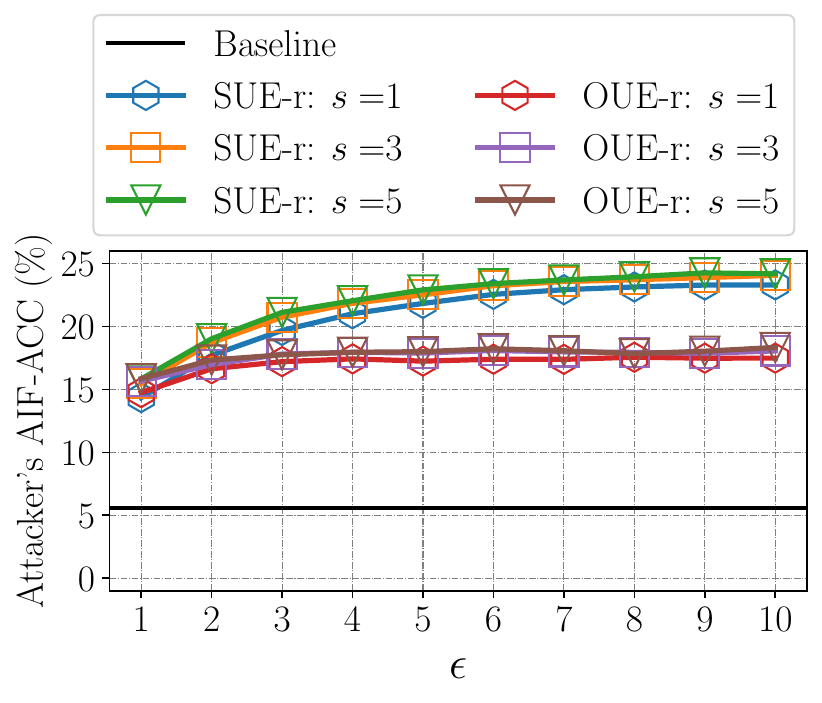}
  \caption{NK model with RS+FD[UE-r] protocols.}
\end{subfigure}%
\begin{subfigure}{.33\textwidth}
  \centering
  \includegraphics[width= 0.712\linewidth]{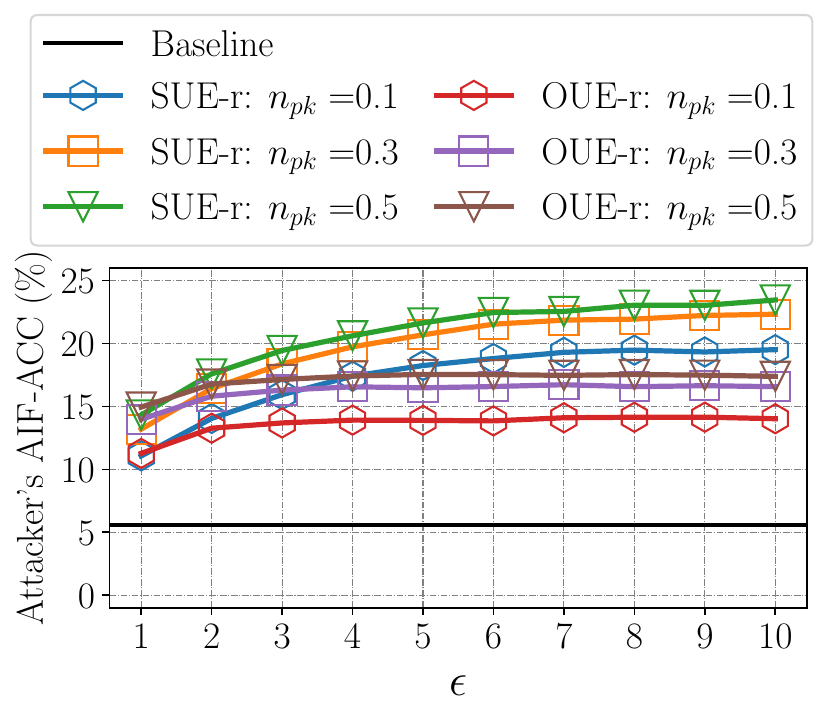}
  \caption{PK model with RS+FD[UE-r] protocols.}
\end{subfigure}
\begin{subfigure}{.33\textwidth}
  \centering
  \includegraphics[width= 0.712\linewidth]{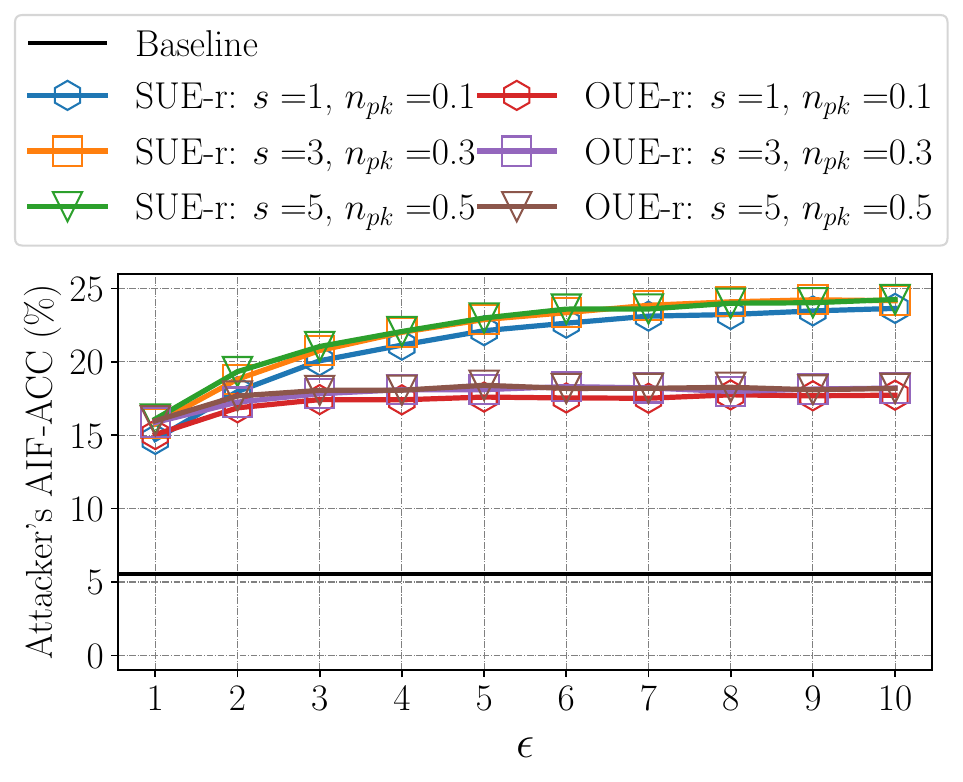}
  \caption{Hybrid model with RS+FD[UE-r] protocols.}
\end{subfigure}
\caption{Attacker's AIF-ACC on the ACSEmployement dataset with three attack models (\ie, NK, PK and hybrid) and five protocols (\ie, RS+FD[GRR], RS+FD[SUE-z], RS+FD[OUE-z], RS+FD[SUE-r] and RS+FD[OUE-r]), varying $\epsilon$, the number of synthetic profiles $s$ the attacker generates and the number of compromised profiles $n_{pk}$ the attacker has access to.}
\label{fig:attack_rspfd}
\end{figure*}

\subsection{Re-identification Risk of the RS+FD Solution} \label{sub:results_re_ident_rspfd}

In this section, we experiment with multiple data collections following the RS+FD solution to measure the attacker's RID-ACC. 
We follow a similar \textbf{experimental evaluation} of Section~\ref{sub:results_re_ident_smp} with the addition of the attribute's inference attack (\cf{} Section~\ref{sub:results_att_inf_rspfd}) in each data collection (\ie, survey). 
To this end, we use the NK model by generating $s=1n$ profiles as accuracy did not substantially increased with higher $s$ (\cf{} Fig.~\ref{fig:attack_rspfd}).
We selected the RS+FD[GRR]~\cite{Arcolezi2021_rs_fd} protocol as it provides an intermediate guarantee between RS+FD[UE-r] (lower bound) and RS+FD[UE-z] (upper bound) protocols. 
We only evaluated the FK-RI model with $\mathcal{D}_{BK}$ and uniform $\epsilon$-LDP privacy metric across users (\ie, users select a new attribute for each survey) as they led to higher re-identification rates using the SMP solution.

\noindent\textbf{Results.} Fig.~\ref{fig:reident_rspfd} illustrates the attacker's RID-ACC metric on the Adult dataset for top-$k$ re-identification using the FK-RI model and the RS+FD[GRR] protocol and by varying the uniform $\epsilon$-LDP privacy metric and the number of surveys. 

\begin{figure}[h!]
    \centering
    \includegraphics[width=0.9\linewidth]{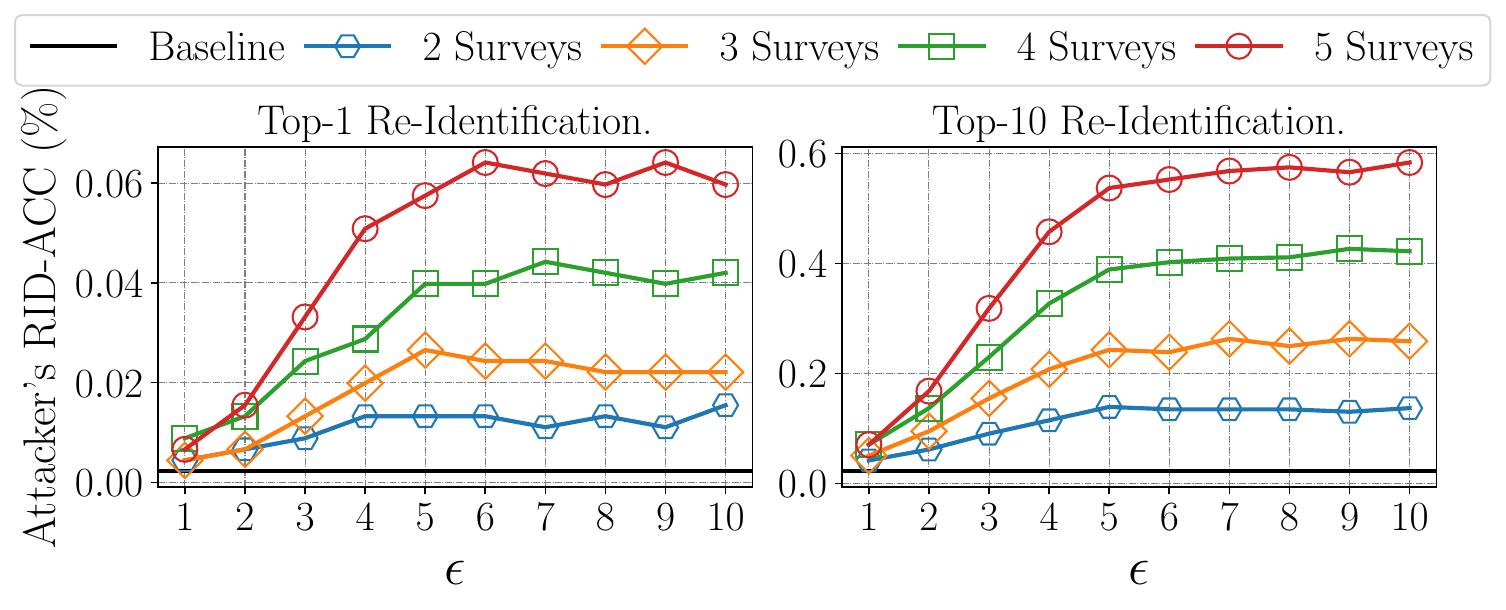}
    \caption{Attacker's re-identification accuracy (RID-ACC) on the Adult dataset for top-$k$ re-identification using the FK-RI model and the RS+FD[GRR] protocol and by varying the uniform $\epsilon$-LDP privacy metric and the number of surveys.}
    \label{fig:reident_rspfd}
\end{figure}

\noindent\textbf{Analysis.} From Fig.~\ref{fig:reident_rspfd}, one can note that the re-identification rates with RS+FD has drastically decreased in comparison with the results of the SMP solution in Fig.~\ref{fig:reident_smp}. 
Re-identification attacks on the RS+FD solution are not trivial, as the attacker has no guarantee that the predicted attribute is correct.
Indeed, from Fig. 14 in Appendix~\ref{appD:add_RSpFD}, the attacker's AIF-ACC on the Adult dataset with the RS+FD[GRR] protocol is upper bounded in $40\%$, which leads to chained errors when profiling a target user in multiple collections. 
For instance, the attacker's RID-ACC for the top-$1$ group is nearly equal the random Baseline model. 
Even for the top-$10$ group the attacker's RID-ACC has meaningful improvement over the Baseline model. 
These results with the RS+FD[GRR] protocol indicates that RS+FD is already (to some extent) a countermeasure to re-identification attacks, except for RS+FD[SUE-z] in which the attacker can predict the attribute with high confidence with high $\epsilon$.

\section{Countermeasure} \label{sec:countermeasure}

As shown in Section~\ref{sub:results_re_ident_rspfd}, the RS+FD solution already provides some resistance to re-identification attacks. 
Thus, we now present an improvement of the RS+FD solution and the experimental results.

\subsection{Random Sampling Plus Realistic Fake Data} \label{sub:RSpRFD}

As briefly described in Section~\ref{sub:multi_freq_est}, the client-side of RS+FD~\cite{Arcolezi2021_rs_fd} is split into two steps (\ie, local randomization and \textit{uniform} fake data generation). 
We now present an improvement of RS+FD, which we call Random Sampling Plus Realistic Fake Data (RS+RFD) as fake data will follow (potentially prior) \textit{non-uniform} distributions. 
For instance, several demographic attributes have national statistics released by the Census~\cite{census} the previous year. 
Therefore, more ``realistic'' profiles can be generated by users to counter the inference of the sampled attribute and consequently the risk of re-identification.

\noindent\textbf{Client-Side.} Alg.~\ref{alg:rsprfd} displays the pseudocode of our RS+RFD solution at the client-side. 
The input of RS+RFD is the user's true tuple of values $\textbf{v} = [v_1,v_2,\ldots, v_d]$, the domain size of attributes $\textbf{k}=[k_1,k_2,\ldots,k_d]$, the attributes' prior distributions $\mathbf{\tilde{f}}=[\tilde{f}_1,\tilde{f}_2,\ldots,\tilde{f}_d]$ (transmitted by the server in advance), the privacy parameter $\epsilon$ and a local randomizer $\mathcal{M}$. 
The output is a tuple $\textbf{y}=[y_1,y_2,\ldots,y_d]$ of values (LDP and fake). 
In Alg.~\ref{alg:rsprfd}, line 6, \textit{Sample} means a random sample is generated following prior $\tilde{f}_i$ of the attribute $i \in [d] \setminus \{j\}$.

\begin{algorithm}[!ht]

\caption{\underline{R}andom \underline{S}ampling \underline{plus} \underline{R}ealistic \underline{F}ake \underline{D}ata (RS+RFD)}
\label{alg:rsprfd}
\begin{algorithmic}[1]
\small
\Statex \textbf{Input :} tuple $\textbf{v} = [v_1,v_2,\ldots, v_d]$, domain size of attributes $\textbf{k}=[k_1,k_2,\ldots,k_d]$, prior distribution of attributes $\mathbf{\tilde{f}}=[\tilde{f}_1,\tilde{f}_2,\ldots,\tilde{f}_d]$, privacy parameter $\epsilon$ and local randomizer $\mathcal{M}$. 
\Statex \textbf{Output :} sanitized tuple $\textbf{y}=[y_1,y_2,\ldots,y_d]$.

\State $\epsilon' = \ln{\left( d \cdot (e^{\epsilon} - 1) + 1 \right)}$ \Comment{Amplification by sampling~\cite{Li2012}} 

\State $j \gets Uniform([d])$ \Comment{Selection of attribute to sanitize}

\State $B_j \gets \texttt{Encode}(v_j)$ \Comment{Encode (if needed)}

\State $y_j \gets \mathcal{M}(B_j, k_j, \epsilon')$ \Comment{Sanitize data of the sampled attribute}

\State \textbf{for} $i \in [d] \setminus \{j\}$ \textbf{do}\Comment{For each non-sampled attributes} 

\State  \hskip1em $y_i \gets \textrm{Sample}(\{1,\ldots,k_i\}, \tilde{f}_i) $ \Comment{Generate one fake data}

\State \textbf{end for}

\Statex \textbf{return :} $\textbf{y}=[y_1,y_2,\ldots,y_d]$ \Comment{Sanitized tuple}
\end{algorithmic}
\end{algorithm}

\noindent\textbf{Server-Side.} The aggregator performs multiple frequency estimation on the collected data by removing bias introduced by the local randomizer $\mathcal{M}$ and fake data. 
The new estimators of using RS+RFD with GRR or UE-based protocols (\eg, SUE~\cite{rappor} or OUE~\cite{tianhao2017}) as local randomizer $\mathcal{M}$ in Alg.~\ref{alg:rsprfd} is presented in the following.
For each attribute $j\in[d]$, the aggregator estimates $\hat{f}(v_i)$ for the frequency of each value $i \in [k_j]$ as:

\begin{itemize}
    
    \item \textbf{RS+RFD[GRR].} The RS+RFD[GRR] estimator is: 
    
    \begin{equation} \label{eq:est_rspfd_grr}
        \hat{f}_{\textrm{GRR}}(v_i) = \frac{ d C(v_i) - n \left (q + (d - 1) \tilde{f}_{j}(v_i) \right )}{n(p-q)} \textrm{,}
    \end{equation}
    
    \noindent in which $C(v_i)$ is the number of times $v_i$ has been reported, $\tilde{f}_{j}(v_i)$ is the prior distribution of value $v_i \in A_j$, $\epsilon'=\ln{\left( d \cdot (e^{\epsilon} - 1) + 1 \right)}$, $p=\frac{e^{\epsilon'}}{e^{\epsilon'} + k_j - 1}$ and $q=\frac{1-p}{k_j-1}$. 
    The probability tree of the RS+RFD[GRR] protocol, the proof that the estimator in Eq.~\eqref{eq:est_rspfd_grr} is unbiased and its variance computation are provided in Appendix~\ref{appA:RSpFD_GRR}.
    
    \item \textbf{RS+RFD[UE-r].} 
    Similar to the RS+FD[UE-r] protocol in Section~\ref{sub:rspfd_sol}, in Line 6 of Alg.~\ref{alg:rsprfd}, for each non-sampled attribute $i$, for $i \in [d] \setminus \{j\}$, the user generates fake data by applying an UE protocol to encoded random data following prior distribution $\tilde{f}_i$. 
    The RS+RFD[UE-r] estimator is: 
        
    \begin{equation} \label{eq:est_rspfd_ue}
        \hat{f}_{\textrm{UE-r}}(v_i) = \frac{ d C(v_i) - n \left ( q+ (p-q) (d-1)  \tilde{f}_{j}(v_i) + q (d-1)  \right )}{n(p-q)} \textrm{,}
    \end{equation}
    
    \noindent in which $C(v_i)$ is the number of times $v_i$ has been reported, $\epsilon'=\ln{\left( d \cdot (e^{\epsilon} - 1) + 1 \right)}$ and $\tilde{f}_{j}(v_i)$ is the prior distribution of value $v_i \in A_j$. 
    Parameters $p$ and $q$ can be selected following the SUE~\cite{rappor} protocol ($p=\frac{e^{\epsilon'/2}}{e^{\epsilon'/2} + 1}$ and $q=\frac{1}{e^{\epsilon'/2} + 1}$) or OUE~\cite{tianhao2017} protocol ($p=\frac{1}{2}$ and $q=\frac{1}{e^{\epsilon'} + 1}$).
    The probability tree of the RS+RFD[UE-r] protocol, the proof that the estimator in Eq.~\eqref{eq:est_rspfd_ue} is unbiased and its variance calculation is provided in Appendix~\ref{appB:RSpFD_UE}.
\end{itemize}

\noindent\textbf{Privacy analysis.} Similar to the RS+FD solution~\cite{Arcolezi2021_rs_fd}, let $\mathcal{M}$ be any existing LDP mechanism, Alg.~\ref{alg:rsprfd} satisfies $\epsilon$-LDP, in a way that $\epsilon'=\ln{\left( d \cdot (e^{\epsilon} - 1) + 1 \right)}$, in which $d$ is the number of attributes.

\noindent\textbf{Limitations.} Besides known limits of the RS+FD solution~\cite{Arcolezi2021_rs_fd,Varma2022}, RS+RFD adds a limitation on being dependent on the underlying prior distributions $\mathbf{\tilde{f}}$ to generate realistic fake data. 
Yet, many demographic attributes have Census data~\cite{census} and other attributes' priors can be defined following domain expert knowledge.

\subsection{Experimental Results} \label{sub:results_rsprfd}

In this section, we present the general setup of experiments with the RS+RFD solution, which includes: the frequency estimation of multiple attributes and the inference attack of the sampled attribute.

\subsubsection{General Experimental Setup} \label{sub:setup_countermeasure}

We use the ACSEmployement dataset described in Section~\ref{sub:setup}.

\noindent\textbf{Prior distribution.} 
To simulate ``Correct'' prior distributions $\tilde{\textbf{f}}=[\tilde{f}_1, \tilde{f}_2, \ldots, \tilde{f}_d]$ to be used to generate realistic fake data with RS+RFD, we perturb the real frequency of each attribute $j\in[d]$ with the standard Laplace mechanism~\cite{Dwork2006,Dwork2006DP,dwork2014algorithmic} in centralized DP satisfying $\epsilon=0.1/d$ (\ie, split $\epsilon=0.1$ by $d$ attributes).
In addition, to simulate an ``Incorrect'' scenario in which prior distributions are wrongly specified, we use Dirichlet distributions with parameter $\mathbf{1}$.

\noindent\textbf{Methods evaluated.} We consider for evaluation three protocols within the RS+RFD solution from Section~\ref{sub:RSpRFD}, namely, RS+RFD[GRR], RS+RFD[SUE-r] and RS+RFD[OUE-r].

\noindent \subsubsection{Frequency Estimation of Multiple Attributes} \label{sub:mult_freq_rsrfd}

We compare the results of our RS+RFD protocols with their respective version within the RS+FD~\cite{Arcolezi2021_rs_fd} solution, \ie, RS+FD[GRR], RS+FD[SUE-r] and RS+FD[OUE-r] (\cf{} Section~\ref{sub:rspfd_sol}). 

\noindent\textbf{Evaluation metrics.} To compare with~\cite{Arcolezi2021_rs_fd}, we vary $\epsilon$ in the interval $\epsilon=[\ln{(2)},\ln{(3)},\ldots,\ln{(7)}]$ and we measure the quality of the estimated frequencies with the averaged mean squared error metric: $MSE_{avg} = \frac{1}{d} \sum_{j \in [d]} \frac{1}{|A_j|} \sum_{v \in A_j}(f(v) - \hat{f}(v) )^2$.

\noindent\textbf{Results.} Fig.~\ref{fig:mult_freq_results} illustrates for all methods the $MSE_{avg}$ metric ($y$-axis) according to the privacy parameter $\epsilon$ (x-axis) for both ``Correct'' and ``Incorrect'' priors. 
Additional empirical and analytical results with the Adult dataset are provided in Appendix~\ref{appE:add_rsprfd}.

\noindent\textbf{Analysis.} For the ``Correct'' prior, one can observe that the $MSE_{avg}$ metric of our proposed RS+RFD protocols consistently and significantly outperform the utility of their respective version within the RS+FD solution. 
The intuition is that since random noise is drawn from realistic prior distributions, the fake data also contributes to the estimation of the attribute. 
Indeed, even with ``Incorrect'' priors, our RS+RFD protocols still outperform the RS+FD protocols, with the exception of RS+RFD[OUE-r] with similar utility RS+RD[OUE-r] in low privacy regimes.
On the other hand, when random noise follows uniform distributions, as with RS+FD, fake data can only increase the estimation of non-correct items. 

\begin{figure}[!htb]
\begin{subfigure}{.5\columnwidth}
  \centering
  \includegraphics[width=0.9\linewidth]{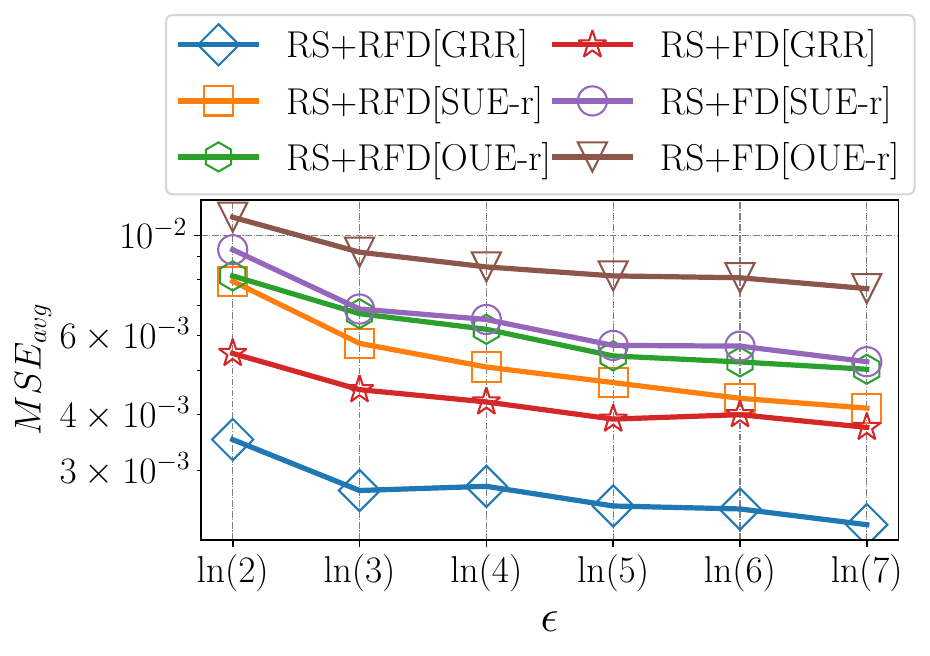}
  \caption{``Correct'' priors.}
\end{subfigure}%
\begin{subfigure}{.5\columnwidth}
  \centering
  \includegraphics[width=0.9\linewidth]{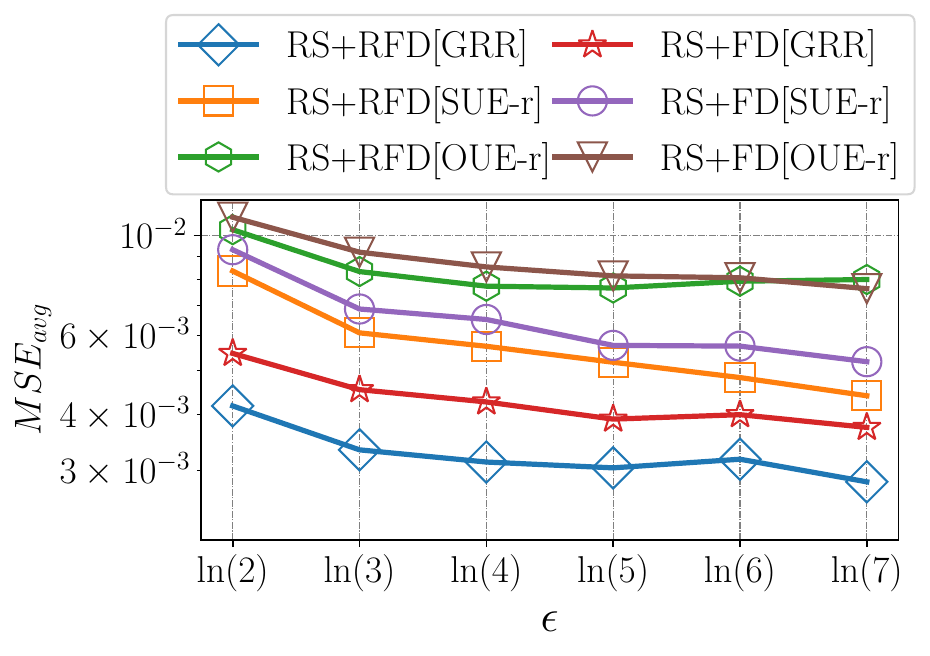}
  \caption{``Incorrect'' priors.}
\end{subfigure}
\caption{Averaged MSE metric varying $\epsilon$ for (a) ``Correct'' and (b)``Incorrect'' priors for multidimensional frequency estimation with the RS+RFD and RS+FD solutions.}
\label{fig:mult_freq_results}
\end{figure}

\subsubsection{Uncovering the Sampled Attribute of the RS+RFD Solution ($\rightarrow$ SMP)} \label{sub:att_inf_rsprfd}

This section follows similar parameters (dataset, $\epsilon$ range and attacker's AIF-ACC metric) used in the experiments of Section~\ref{sub:results_att_inf_rspfd}. 

\noindent\textbf{Results.} Fig.~\ref{fig:attack_rsprfd} illustrates the attacker's AIF-ACC metric on the ACSEmployement dataset with three attack models (\ie, NK, PK and hybrid) and our three protocols (\ie, RS+RFD[GRR], RS+RFD[SUE-r] and RS+RFD[OUE-r] with ``Correct'' priors), varying $\epsilon$, the number of synthetic profiles $s$ and the number of compromised profiles $n_{pk}$. 
Further results with ``Incorrect'' priors are in Appendix~\ref{appE:add_rsprfd}.

\begin{figure*}[!ht]
\begin{subfigure}{.33\textwidth}
  \centering
  \includegraphics[width= 0.712\linewidth]{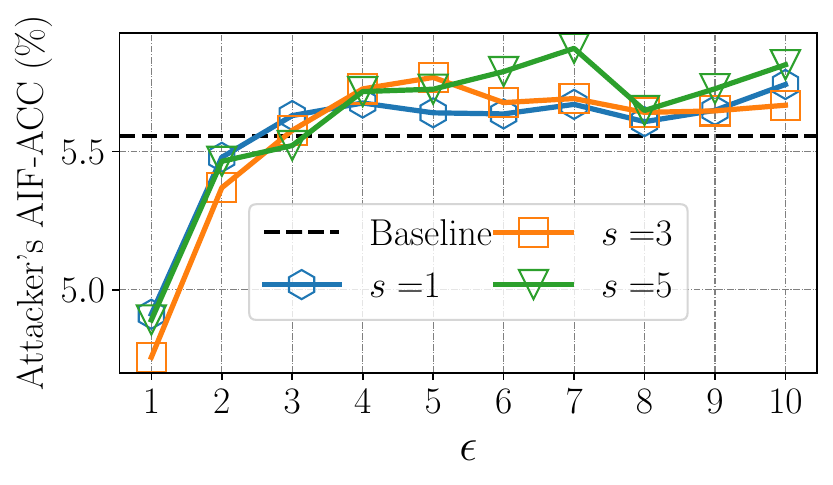}
  \caption{NK model with RS+RFD[GRR] protocol.}
\end{subfigure}%
\begin{subfigure}{.33\textwidth}
  \centering
  \includegraphics[width= 0.712\linewidth]{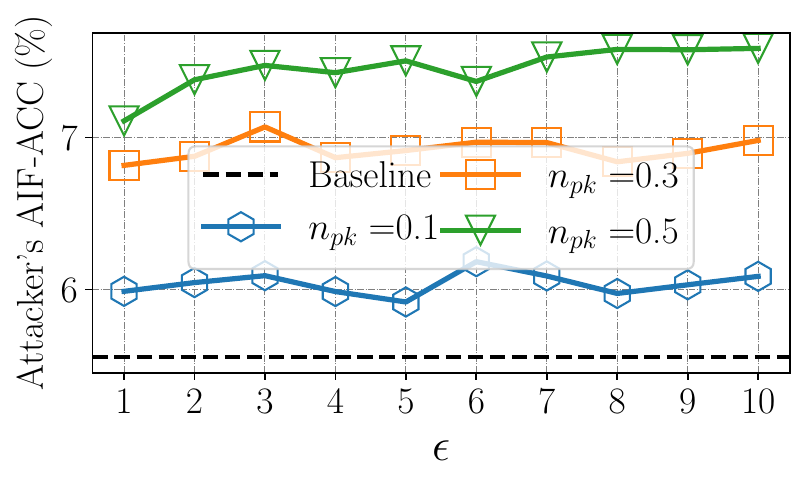}
  \caption{PK model with RS+RFD[GRR] protocol.}
\end{subfigure}
\begin{subfigure}{.33\textwidth}
  \centering
  \includegraphics[width= 0.712\linewidth]{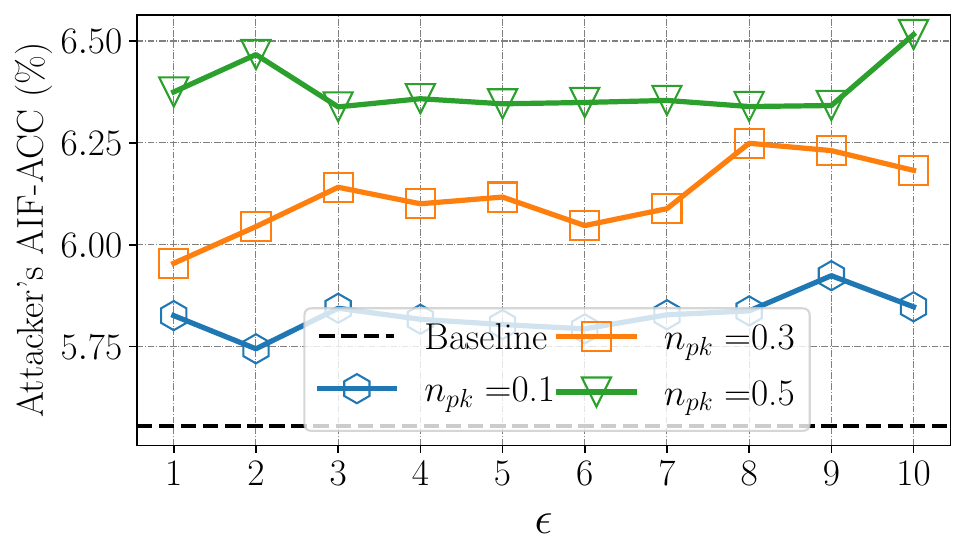}
  \caption{Hybrid model with RS+RFD[GRR] protocol.}
\end{subfigure}
\\
\begin{subfigure}{.33\textwidth}
  \centering
  \includegraphics[width= 0.712\linewidth]{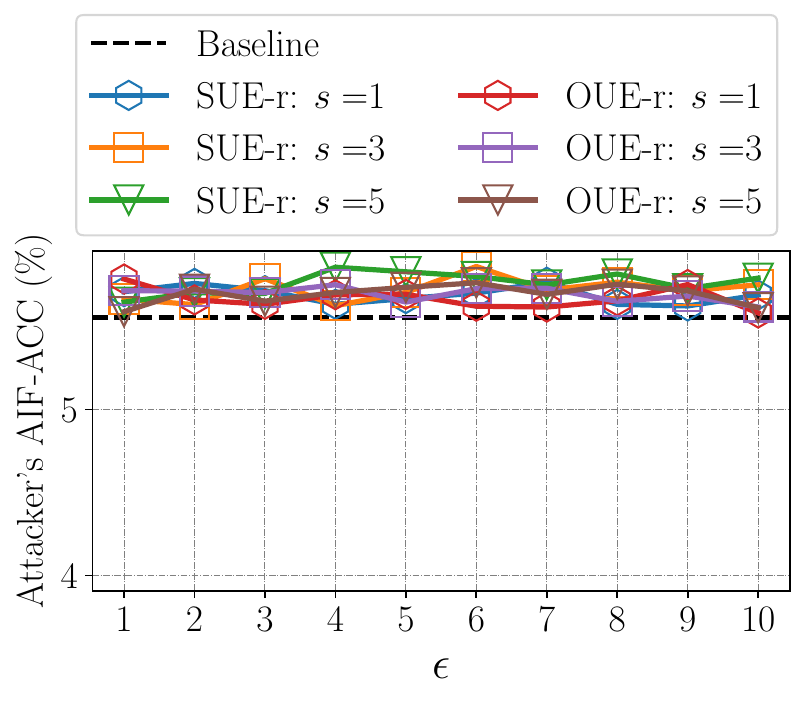}
  \caption{NK model with RS+RFD[UE-r] protocols.}
\end{subfigure}%
\begin{subfigure}{.33\textwidth}
  \centering
  \includegraphics[width= 0.712\linewidth]{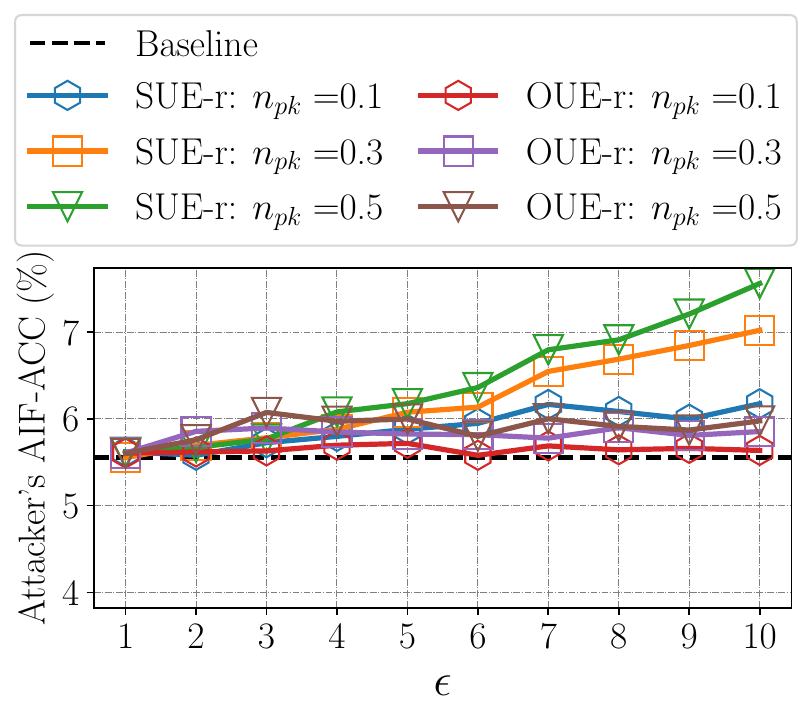}
  \caption{PK model with RS+RFD[UE-r] protocols.}
\end{subfigure}
\begin{subfigure}{.33\textwidth}
  \centering
  \includegraphics[width= 0.712\linewidth]{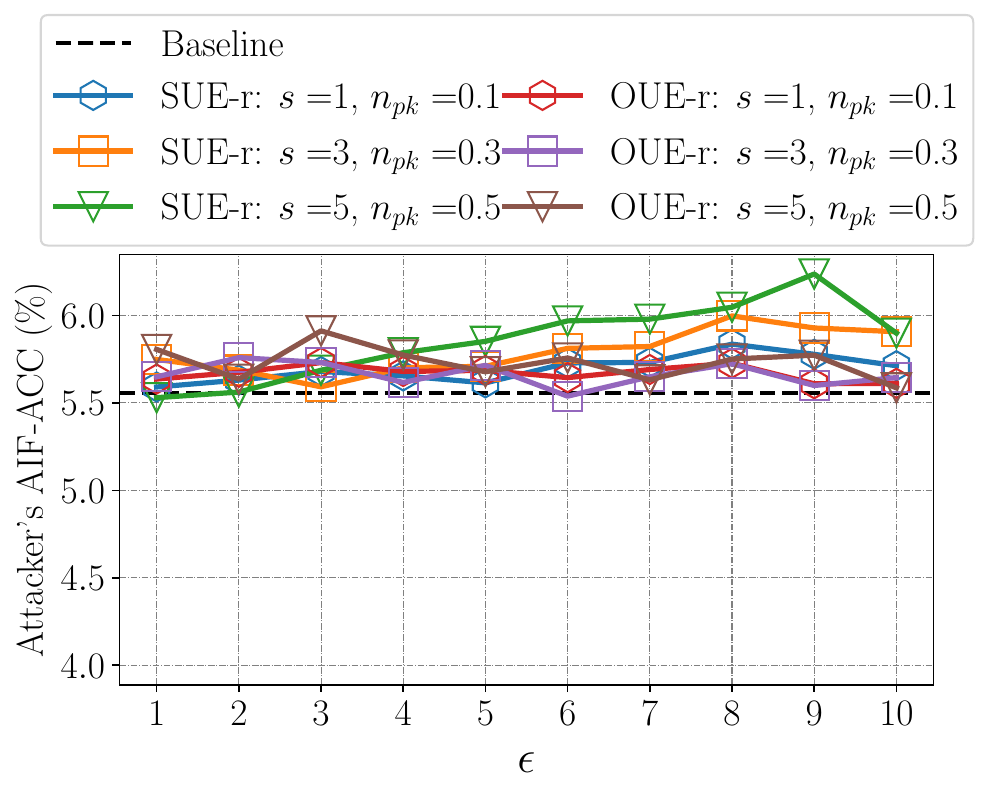}
  \caption{Hybrid model with RS+RFD[UE-r] protocols.}
\end{subfigure}
\caption{Attacker's AIF-ACC on the ACSEmployement dataset with three attack models (\ie, NK, PK and hybrid) and our three protocols (\ie, RS+RFD[GRR], RS+RFD[SUE-r] and RS+RFD[OUE-r] with ``Correct'' priors), varying $\epsilon$, the number of synthetic profiles $s$ the attacker generates and the number of compromised profiles $n_{pk}$ the attacker has access to.}
\label{fig:attack_rsprfd}
\end{figure*}

\noindent\textbf{Analysis.} We highlight that the non-stability in the plots of Fig.~\ref{fig:attack_rsprfd} is due to different sources of randomness: $\epsilon=0.1$-DP for ``Correct'' prior distributions $\mathbf{\tilde{f}}$, $\epsilon$-LDP randomization, fake data generation and the XGBoost algorithm. 
From Fig.~\ref{fig:attack_rsprfd}, one can remark that our RS+RFD protocols considerably decrease the attacker's AIF-ACC when comparing with their respective RS+FD version in Fig.~\ref{fig:attack_rspfd}. 
In contrast with the results of Section~\ref{sub:results_att_inf_rspfd}, the results with the PK model has higher attacker's AIF-ACCs than the NK model.
This is intuitive since the attacker gained ``real'' information of the sampled attribute, increasing the attacker's AIF-ACC as the number of compromised profiles $n_{pk}$ gets higher. 
Nevertheless, for all three NK, PK and HM models, the accuracy gain over a random Baseline model is still minor, highlighting the benefits of our RS+RFD proposal.

\section{Discussion} \label{sec:disc}

In brief, we identified and evaluated empirically two threats to users' privacy when collecting multidimensional data with the state-of-the-art solutions SMP and RS+FD, namely re-identification attack and inference of the sampled attribute.
These threats are generic to any LDP protocol and can be modelled by extending the ``plausible deniability'' attack analysis of Section~\ref{sub:plausible_deniability}.
Hereafter, we summarize the key findings that can be used by practitioners and help substantiate the main claims of this paper.

Regarding the SMP solution, in our experiments, the GRR and $\omega$-SS protocol had the highest RID-ACC as the probability of accurately inferring the user's full profile was higher with relatively small $k_j$ values (see also Fig.~\ref{fig:analytical_ACC_uniform_VS_non_uniform}).
With other protocols, such as OLH and OUE, which are the current state-of-the-art for preserving utility~\cite{tianhao2017}, the adversary cannot accurately infer the profile of users when using $\epsilon$-LDP as privacy model, which leads to lower re-identification risks (see Fig.~\ref{fig:reident_smp} (c) and (d)). 
On the other hand, as shown in Appendix~\ref{appC:add_re_ident_smp}, when using the relaxed version of LDP from~\cite{Murakami2021}, the RID-ACC increases considerably for both OLH and OUE protocols.
Though we only experimented with $\# \textrm{surveys}\leq 5$, we believe that more data collections can lead to higher RID-ACC as long as the profile is accurately inferred.
Yet, under standard sequential composition~\cite{dwork2014algorithmic}, the overall privacy loss is excessive when using high values for $\epsilon$, but we have also considered them due to their use in practical deployments~\cite{apple,tang2017privacy} and similar experiments found in~\cite{Gadotti2022} (though with higher $\# \textrm{surveys} \in \{7, 30, 90, 180\}$).

On the other hand, when using the RS+FD to ``hide'' the sampled attribute, the utility-oriented protocol RS+FD[UE-z] has the highest AIF-ACC due to generating fake data with zero-vectors and we recommend not using it in practice.
Even with the RS+FD[GRR] or RS+FD[UE-r] protocols the attacker's AIF-ACC is considerably greater than a random guess.
Yet, since there are chained errors in multiple collections on accurately predicting the sampled attribute and on inferring the user's value, the RS+FD considerably minimizes the risks of re-identification presented by the SMP solution.

Overall, though some LDP protocols minimized the RID-ACC or AIF-ACC in our experiments (see the main body and Appendices~\ref{appC:add_re_ident_smp},~\ref{appD:add_RSpFD} and~\ref{appE:add_rsprfd}), they did not fully mitigate the risks when increasing $\epsilon$ as done in practice to get more accurate estimations. 
This means they still allow a small portion of users to leak more information than others and corroborate with DP consensus of using $\epsilon \leq 1$. 

Therefore, considering the setting described in Section~\ref{sub:system_overview}, the overall recommendation when using the SMP solution is to select: the standard $\epsilon$-LDP as privacy model, the OUE and/or OLH protocols (depending on $k_j$ due to communication costs~\cite{tianhao2017}), the non-uniform privacy metric setting (\ie, allowing users to sample with replacement and enforce memoization~\cite{microsoft,rappor,Arcolezi2021_allomfree}) and to keep $\epsilon \leq 1$.
On the other hand, when using the RS+FD solution, even when no prior is available, we highly recommend the proposed version in this paper, \ie, RS+RFD with non-uniform fake data.

\section{Related Work} \label{sec:rel_work}

The literature on the local DP model has largely explored the issue of improving the utility of LDP protocols~\cite{wang2019,Arcolezi2021_rs_fd,xiao2,tianhao2017,Wang2018,Wang2020_post_process,kairouz2016discrete,kairouz2016extremal,apple,rappor,Arcolezi2021_allomfree,microsoft,Duchi2013,Varma2022}. 
Recently, a few works have started to design attacks on LDP protocols. 
Some authors focused on maliciously modifying the estimated statistic on the server through targeted or untargeted attacks~\cite{Cheu2021,cao2021data,wu2021poisoning,li2022fine}. 
To counter such kinds of attacks, some works~\cite{Ambainis2004,Kato2021} investigated cryptography-based approaches. 

These targeted or untargeted attacks raise awareness of potential \textbf{security vulnerabilities} of LDP protocols. 
However, these attacks do not aim to attack \textbf{users' privacy} as initially investigated in~\cite{chatzikokolakis2020bayes,Murakami2021,Gursoy2022,Gadotti2022} and in this work.
For instance, Chatzikokolakis \etal~\cite{chatzikokolakis2020bayes} proposed the Bayes security measure to quantify the expected gain over a random guess of an adversary that observes a report of the RR protocol.
Similar to~\cite{chatzikokolakis2020bayes}, this paper provides a ``plausible deniability'' attacking interpretation of five state-of-the-art LDP protocols to infer the user's true value by observing an LDP report.
In an independent and concurrent work, Gursoy \etal~\cite{Gursoy2022} proposed a formalized Bayes adversary for the same attack, which was referenced in Section~\ref{sub:plausible_deniability} to give the expected accuracy of our analyses.
Besides, we extended our attack to multiple collections of multidimensional data in Sections~\ref{sub:plausible_deniability_uniform} and~\ref{sub:plausible_deniability_non_uniform}, which were proposed to account for the consequent risks of re-identification~\cite{Murakami2021,Narayanan2008,Murakami2017,Gambs2014} in Section~\ref{sub:re_ident_attack_models}. 
Re-identification risks in the LDP model for single-frequency estimation were first investigated by Murakami and Takahashi~\cite{Murakami2021}. 
However, different from~\cite{Murakami2021} that focused on a single attribute (\eg, location traces), our work considers multiple attributes being collected multiple times.
Regarding multiple collections, Gadotti \etal~\cite{Gadotti2022} introduced pool inference attacks to LDP protocols for \textit{single-frequency estimation} in a way that an adversary can infer the user’s preferred pool (\eg, skin tone used in emojis).

On the other hand, Arcolezi \etal~\cite{Arcolezi2021_rs_fd} introduced the RS+FD solution focusing only on the utility of the protocols, which was also later studied in~\cite{Varma2022}. 
In this work, we are the first to propose three attack models to the RS+FD solution, showing it is possible to distinguish the $\epsilon$-LDP report from fake data. 
Consequently, in multiple collections, we also show that RS+FD is still subject to (reduced) re-identification risks. 
We thus proposed an improvement of the RS+FD solution that generates non-uniform fake data (\ie, RS+RFD of Section~\ref{sub:RSpRFD}) and can serve as a countermeasure solution.

\section{Conclusion and perspectives} \label{sec:conclusion}

In this paper, we studied privacy threats against LDP protocols for multidimensional data following two state-of-the-art solutions for frequency estimation of multiple attributes, \ie, SMP and RS+FD~\cite{Arcolezi2021_rs_fd}. 
On the one hand, we presented inference attacks based on ``plausible deniability''~\cite{Warner1965} of five widely used LDP protocols (\ie, GRR~\cite{kairouz2016discrete,kairouz2016extremal}, OLH~\cite{tianhao2017}, $\omega$-SS~\cite{wang2016mutual,Min2018}, RAPPOR~\cite{rappor} and OUE~\cite{tianhao2017}) under multiple collections following the SMP solution. 
This analysis also empirically clarifies the risks of re-identification when an attacker is able to build complete and/or partial profiles of users and can correlate them with prior knowledge. 

In addition, we introduced three attack models to infer the sampled attribute of the RS+FD~\cite{Arcolezi2021_rs_fd} solution, which allowed us to still reconstruct complete and/or partial profiles of users and lead to re-identification (although to a much lesser extent than the SMP solution). 
Finally, we proposed a refinement to the RS+FD solution, called RS+RFD that improves both utility and privacy.
That is, in our experiments, RS+RFD minimized the estimation error in comparison with the RS+FD solution, as well as almost fully mitigated the inference of the sampled attribute attack. 

Though we identified and investigated two privacy threats for LDP protocols for multidimensional data in single and multiple data collections, these are not unique and we believe that our work opens new avenues of research in this direction.
For future work, we suggest and aim to formalize the re-identification risks considering different LDP and $d$-privacy~\cite{Chatzikokolakis2013,Alvim2018,Wang2017} protocols, the number of collections, the number of attributes and the ``uniqueness'' of users in a given dataset.
Such a formalization will allow to design other countermeasure solutions beyond RS+FD~\cite{Arcolezi2021_rs_fd} and our RS+RFD.


\begin{acks}
The authors deeply thank the anonymous reviewers for their insightful suggestions.
This work was partially supported by the ERC project HYPATIA with grant agreement Nº 835294 and by the EIPHI-BFC Graduate School (contract "ANR-17-EURE-0002"). 
Sébastien Gambs is supported by the Canada Research Chair program as well as a Discovery Grant from NSERC. 
All computations were performed on the ``M\'esocentre de Calcul de Franche-Comt\'e''.
\end{acks}

\bibliographystyle{ACM-Reference-Format}
\bibliography{references.bib}

\appendix

\section{RS+RFD with GRR} \label{appA:RSpFD_GRR}

Visually, Fig.~\ref{fig:prob_tree_rsrfd_grr} illustrates the probability tree of the RS+RFD[GRR] protocol (cf. Section~\ref{sub:rspfd_sol}).

\begin{figure}[ht]
\centering
\tikzset{
  treenode/.style = {shape=rectangle, rounded corners,
                     draw,align=center,
                     top color=white},
  root/.style     = {treenode},
  env/.style      = {treenode},
  dummy/.style    = {circle,draw}
}
\tikzstyle{level 1}=[level distance=2cm, sibling distance=1.8cm]
\tikzstyle{level 2}=[level distance=2.5cm, sibling distance=1.1cm]

\begin{tikzpicture}
  [
    grow                    = right,
    edge from parent/.style = {draw, -latex},
    every node/.style       = {font=\footnotesize},
    sloped
  ]
  \node [root] {RS+RFD}
    child { node [env] {Fake data}
        child { node [env] {$B'=v_{l\neq i}$}
          edge from parent node [below] {$1-\tilde{f}_j(v_i)$} }
        child { node [env] {$B'=v_i$}
          edge from parent node [above] {$\tilde{f}_j(v_i)$} }
        edge from parent node [below] {$1-1/d$} }
    child { node [env] {True data}
        child { node [env] {$B=v_{l\neq i}$}  
         child { node [env] {$B'=v_i$}
            edge from parent node [below] {$q$}}
            child { node [env] {$B'=v_i$}
            edge from parent node [above] {$p$}}
            edge from parent node [below] {}
            edge from parent node [below] {}}
        child { node [env] {$B=v_i$} 
            child { node [env] {$B'=v_{l\neq i}$}
            edge from parent node [above] {$q$}}
            child { node [env] {$B'=v_i$}
            edge from parent node [above] {$p$}}
            edge from parent node [above] {}}
        edge from parent node [above] {$1/d$}};
\end{tikzpicture}
\caption{Probability tree for the RS+FD[GRR] protocol.} 
\label{fig:prob_tree_rsrfd_grr}
\end{figure}

\begin{theorem} \label{theo:est_grr} For $j\in[1,d]$, the estimation result $\hat{f}_{\textrm{GRR}}(v_i)$ in Eq.~\eqref{eq:est_rspfd_grr} is an unbiased estimation of $f (v_i)$ for any value $v_i \in A_j$.
\end{theorem}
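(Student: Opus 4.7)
The plan is to prove unbiasedness by computing $\mathbb{E}[\hat{f}_{\textrm{GRR}}(v_i)]$ directly from the probability tree in Fig.~\ref{fig:prob_tree_rsrfd_grr}. Since $\hat{f}_{\textrm{GRR}}(v_i)$ is a linear function of the count $C(v_i)$, it suffices to compute $\mathbb{E}[C(v_i)]$ and then push linearity through the normalization. As the $n$ users are independent, I would first compute the per-user probability $\pi_i := \Pr[y_j = v_i]$ that attribute $A_j$ gets reported as $v_i$, and then use $\mathbb{E}[C(v_i)] = n \pi_i$.

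To obtain $\pi_i$, I would split on whether attribute $j$ is sampled (probability $1/d$) or not (probability $1-1/d$). Conditioned on being sampled, the GRR randomizer with parameters $p,q$ outputs $v_i$ with probability $p$ if the user's true value equals $v_i$ and with probability $q$ otherwise, so the conditional probability of reporting $v_i$ is $f(v_i)\,p + (1-f(v_i))\,q = q + f(v_i)(p-q)$. Conditioned on not being sampled, the fake-data generator draws from the prior $\tilde{f}_j$ independently of the user's true value, so the conditional probability of reporting $v_i$ is $\tilde{f}_j(v_i)$. Combining the two branches gives
\begin{equation*}
\pi_i = \frac{1}{d}\bigl(q + f(v_i)(p-q)\bigr) + \left(1 - \frac{1}{d}\right)\tilde{f}_j(v_i).
\end{equation*}

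Multiplying by $nd$ yields $\mathbb{E}[dC(v_i)] = n\bigl(q + (d-1)\tilde{f}_j(v_i)\bigr) + n f(v_i)(p-q)$, so subtracting $n\bigl(q + (d-1)\tilde{f}_j(v_i)\bigr)$ and dividing by $n(p-q)$ produces exactly $f(v_i)$, matching the definition of $\hat{f}_{\textrm{GRR}}(v_i)$ in Eq.~\eqref{eq:est_rspfd_grr}. This gives $\mathbb{E}[\hat{f}_{\textrm{GRR}}(v_i)] = f(v_i)$ by linearity of expectation, establishing unbiasedness.

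The proof is essentially bookkeeping, so there is no deep obstacle; the one subtle point to be careful about is noting that the fake-data branch for attribute $A_j$ is independent of the user's true value $v_j$ (the prior $\tilde{f}_j$ is a fixed distribution known in advance), which is what decouples $\tilde{f}_j(v_i)$ from $f(v_i)$ in the second branch and allows the bias term $n(q + (d-1)\tilde{f}_j(v_i))$ to be a deterministic quantity that can be subtracted off directly. The amplification step $\epsilon' = \ln(d(e^\epsilon - 1) + 1)$ plays no role in the unbiasedness argument since $p$ and $q$ are defined in terms of $\epsilon'$ and only $p-q \neq 0$ is needed for the estimator to be well-defined.
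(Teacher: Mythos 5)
Your proposal is correct and follows essentially the same route as the paper: both compute $\mathbb{E}[C(v_i)]$ by conditioning on whether attribute $A_j$ is sampled (probability $1/d$, giving $q+f(v_i)(p-q)$ from GRR) or not (probability $1-1/d$, giving $\tilde{f}_j(v_i)$ from the prior), and then push linearity of expectation through the estimator. Your per-user probability $\pi_i$ is exactly the paper's $\mathbb{E}[C(v_i)]/n$, so the two arguments coincide.
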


\begin{proof}

\begin{equation*}
\begin{aligned}
    \mathbb{E}\left[ \hat{f}_{\textrm{GRR}}(v_i) \right] &= \mathbb{E}\left[ \frac{ d C(v_i) - n \left (q + (d - 1) \tilde{f}_{j}(v_i) \right )}{n(p-q)} \right] \\
    &= \frac{d }{n(p-q)} \mathbb{E}\left[ C(v_i) \right] -  \frac{ (d - 1) \tilde{f}_{j}(v_i) + q}{(p-q)}  \textrm{.}
\end{aligned}
\end{equation*}

On expectation, the number of times that $v_i$ is reported is:

\begin{equation*}
\begin{aligned}
    \mathbb{E}\left[ C(v_i) \right] &= \frac{1}{d} \left( p n f (v_i) + q (n - n f (v_i))\right)  + n \frac{(d-1) \tilde{f}_{j}(v_i)}{d}\\
    &= \frac{n}{d} \left(f (v_i)(p-q) + q   + (d-1) \tilde{f}_{j}(v_i) \right)   \textrm{.}
\end{aligned}
\end{equation*}

Therefore,

\begin{equation*}
    \mathbb{E}\left[ \hat{f}_{\textrm{GRR}}(v_i) \right] = f(v_i) \textrm{.}
\end{equation*}
\end{proof}

\begin{theorem} \label{theo:variance_grr} The variance of the estimation in Eq.~\eqref{eq:est_rspfd_grr} is:

\begin{equation}\label{var:rs+rfd_grr}
\begin{gathered}
    \operatorname{VAR}\left[ \hat{f}_{\textrm{GRR}}(v_i) \right] = \frac{d^2 \gamma (1-\gamma)}{n (p-q)^2} \textrm{, where} \\
    \gamma = \frac{1}{d} \left( q + f(v_i) (p-q) + (d-1) \tilde{f}_{j}(v_i) \right ) \textrm{.}
\end{gathered}
\end{equation}

\end{theorem}

\begin{proof}
Thanks to Eq.~\eqref{eq:est_rspfd_grr} we have

\begin{equation*}
\operatorname{VAR}\left[ \hat{f}_{\textrm{GRR}}(v_i) \right] = 
\frac{\operatorname{VAR}\left[ C(v_i) \right] d^2}{n^2 (p-q)^2}  \textrm{.}
\end{equation*}

Since $C(v_i)$ is the number of times value $v_i$ is observed, it can be defined as $C(v_i) = \sum_{u=1}^n X_u$ where $X_u$ is equal to 1 if the user $u$, 
$1 \le u \le n$ reports value $v_i$, and 0 otherwise. We thus have 
$
\operatorname{VAR}\left[ C(v_i) \right] 
= \sum_{u=1}^n \operatorname{VAR}\left[ X_u \right] 
= n \operatorname{VAR}\left[ X \right]$, since all the users are independent. \textcolor{black}{According to the probability tree in Fig.~\ref{fig:prob_tree_rsrfd_grr}, }
\[
\Pr\left[X = 1\right] = \Pr\left[X^2 = 1\right] = \gamma = \frac{1}{d} \left( q + f(v_i) (p-q) + (d-1) \tilde{f}_{j}(v_i) \right )  \textrm{.}
\]
We thus have $\operatorname{VAR}\left[ X \right]= \gamma - \gamma^2 = \gamma(1 - \gamma) $ and, finally,

\begin{equation*} \label{var:generic}
\operatorname{VAR}\left[ \hat{f}_{\textrm{GRR}}(v_i) \right] =
\frac{d^2 \gamma (1-\gamma)}{n (p-q)^2}.
\end{equation*}
\end{proof}

\section{RS+RFD with UE-r Protocols} \label{appB:RSpFD_UE}

Visually, Fig.~\ref{fig:prob_tree_rsrfd_ue_r} illustrates the probability tree of the RS+RFD[UE-r] protocol (cf. Section~\ref{sub:rspfd_sol}).

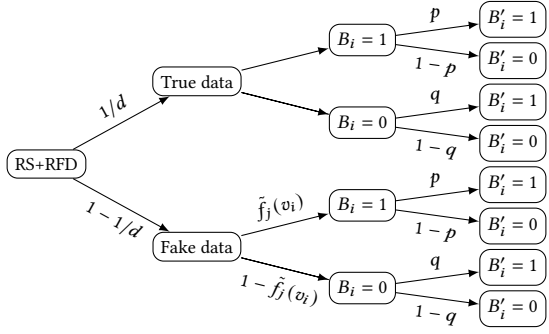
\begin{figure}[ht]
\centering
\tikzset{
  treenode/.style = {shape=rectangle, rounded corners,
                     draw,align=center,
                     top color=white},
  root/.style     = {treenode},
  env/.style      = {treenode},
  dummy/.style    = {circle,draw}
}
\tikzstyle{level 1}=[level distance=2cm, sibling distance=2.2cm]
\tikzstyle{level 2}=[level distance=2.2cm, sibling distance=1.1cm]
\tikzstyle{level 3}=[level distance=2cm, sibling distance=0.55cm]

\begin{tikzpicture}
  [
    grow                    = right,
    edge from parent/.style = {draw, -latex},
    every node/.style       = {font=\footnotesize},
    sloped
  ]
  \node [root] {RS+RFD}
    child { node [env] {Fake data}
        child { node [env] {$B_i=0$}  
         child { node [env] {$B_i'=0$}
            edge from parent node [below] {$1-q$}}
            child { node [env] {$B_i'=1$}
            edge from parent node [above] {$q$}}
            edge from parent node [below] {$1-\tilde{f}_j(v_i)$}
            edge from parent node [below] {}}
        child { node [env] {$B_i=1$} 
            child { node [env] {$B_i'=0$}
            edge from parent node [below] {$1-p$}}
            child { node [env] {$B_i'=1$}
            edge from parent node [above] {$p$}}
            edge from parent node [above] {$\tilde{f}_j(v_i)$}}
        edge from parent node [below] {$1-1/d$} }
    child { node [env] {True data}
        child { node [env] {$B_i=0$}  
         child { node [env] {$B_i'=0$}
            edge from parent node [below] {$1-q$}}
            child { node [env] {$B_i'=1$}
            edge from parent node [above] {$q$}}
            edge from parent node [below] {}
            edge from parent node [below] {}}
        child { node [env] {$B_i=1$} 
            child { node [env] {$B_i'=0$}
            edge from parent node [below] {$1-p$}}
            child { node [env] {$B_i'=1$}
            edge from parent node [above] {$p$}}
            edge from parent node [above] {}}
        edge from parent node [above] {$1/d$}};
\end{tikzpicture}
\caption{Probability tree for the RS+FD[UE-r] protocol.} 
\label{fig:prob_tree_rsrfd_ue_r}
\end{figure}

\begin{theorem} \label{theo:est_oue_r} For $j\in[1,d]$, the estimation result $\hat{f}_{\textrm{UE-R}}(v_i)$ in Eq.~\eqref{eq:est_rspfd_ue} is an unbiased estimation of $f (v_i)$ for any value $v_i \in A_j$.
\end{theorem}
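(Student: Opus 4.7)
The plan mirrors the argument for Theorem~\ref{theo:est_grr}: compute $\mathbb{E}[C(v_i)]$ by walking through the probability tree of Fig.~\ref{fig:prob_tree_rsrfd_ue_r}, then substitute into Eq.~\eqref{eq:est_rspfd_ue} and verify the bias terms cancel. By linearity of expectation we reduce to the single-user probability that bit $i$ of attribute $j$ is reported as $1$, which decomposes into two disjoint cases according to whether attribute $j$ was the one sampled by the user (probability $1/d$) or one of the $d-1$ non-sampled attributes (probability $(d-1)/d$).

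First I would handle the sampled case: conditional on $A_j$ being sampled, the encoded bit $B_i$ equals $1$ with probability $f(v_i)$ and $0$ otherwise, so after UE randomization bit $i$ is reported as $1$ with probability $p\,f(v_i) + q(1-f(v_i)) = q + (p-q)f(v_i)$. Next I would handle the fake-data case: conditional on $A_j$ not being sampled, the fake one-hot encoding is drawn from $\tilde{f}_j$, so $B_i = 1$ with probability $\tilde{f}_j(v_i)$ and the resulting reporting probability for bit $i$ is $q + (p-q)\tilde{f}_j(v_i)$. Combining the two cases with the sampling probabilities gives
\[
\mathbb{E}[C(v_i)] \;=\; \frac{n}{d}\bigl(q + (p-q)f(v_i)\bigr) \;+\; \frac{n(d-1)}{d}\bigl(q + (p-q)\tilde{f}_j(v_i)\bigr).
\]

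Plugging this into Eq.~\eqref{eq:est_rspfd_ue}, the factor $d$ in the numerator clears the denominators, leaving $n\bigl[q + (p-q)f(v_i)\bigr] + n(d-1)\bigl[q + (p-q)\tilde{f}_j(v_i)\bigr]$, from which the correction term $n[q + (p-q)(d-1)\tilde{f}_j(v_i) + q(d-1)]$ subtracted in the estimator cancels exactly the constants $nq$, $nq(d-1)$ and the prior contribution $n(p-q)(d-1)\tilde{f}_j(v_i)$, leaving $n(p-q)f(v_i)$; dividing by $n(p-q)$ yields $f(v_i)$.

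The calculation is entirely routine once the tree is read correctly. The one place where care is needed — and which I expect to be the only potential pitfall — is making sure that in the fake-data branch the prior is applied before the UE randomizer, so the two independent sources of noise compose as $q + (p-q)\tilde{f}_j(v_i)$ rather than something asymmetric in $p$ and $q$. Treating this step carefully ensures that the prior-dependent bias term in Eq.~\eqref{eq:est_rspfd_ue} precisely absorbs the fake-data contribution, which is exactly what makes the estimator unbiased despite the non-uniform fake data generation.
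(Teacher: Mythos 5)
Your proposal is correct and follows essentially the same route as the paper's proof: decompose $\mathbb{E}[C(v_i)]$ by whether attribute $j$ was sampled (probability $1/d$, reporting probability $q+(p-q)f(v_i)$) or fake (probability $(d-1)/d$, reporting probability $q+(p-q)\tilde{f}_j(v_i)$ per the probability tree), then substitute into Eq.~\eqref{eq:est_rspfd_ue} and observe the bias terms cancel. Your reading of the fake-data branch — prior drawn first, UE randomizer applied second — matches Fig.~\ref{fig:prob_tree_rsrfd_ue_r} exactly.
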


\begin{proof}
\begin{equation*}
\begin{aligned}
     \mathbb{E}\left[ \hat{f}_{\textrm{UE-R}}(v_i) \right] &=  \mathbb{E}\left[\frac{d C(v_i) - n \left ( q + (p-q) (d-1) \tilde{f}_{j}(v_i) + q (d-1)  \right )}{n(p-q)}\right] \\
     &= \frac{d}{n(p-q)} \mathbb{E}\left[ C(v_i) \right] - \frac{(p-q)(d-1) \tilde{f}_{j}(v_i) + q + q (d-1) }{(p-q)} \textrm{.}
\end{aligned}
\end{equation*}

On expectation, the number of times that $v_i$ is reported is:

\begin{equation*}
     \mathbb{E}\left[ C(v_i) \right] = \frac{n}{d} \left( f(v_i)(p-q) + q )\right)  + \frac{n(d-1)}{d} \left(\tilde{f}_{j}(v_i) (p - q) + q \right) \textrm{.}
\end{equation*}

Therefore,

\[
\mathbb{E}\left[ \hat{f}_{\textrm{UE-R}}(v_i) \right]  =  f(v_i) \textrm{.}
\] 
\end{proof}

\begin{theorem} \label{theo:variance_oue_r} The variance of the estimation in Eq.~\eqref{eq:est_rspfd_ue} is:

\begin{equation}\label{var:rs+rfd_ue_r}
\begin{gathered}
    \operatorname{VAR} \left[ \hat{f}_{\textrm{UE-R}}(v_i) \right] = \frac{d^2 \gamma (1-\gamma)}{n (p-q)^2} \textrm{, where} \\
    \gamma = \frac{1}{d} \left(f(v_i) (p-q) + q + (d-1) \left(\tilde{f}_{j}(v_i) (p - q) + q \right) \right) \textrm{.}
\end{gathered}
\end{equation}

\end{theorem}

The proof of Theorem~\ref{theo:variance_oue_r} follows the proof of Theorem~\ref{theo:variance_grr} and is omitted here. \textcolor{black}{In this case, $\gamma$ follows the probability tree in Fig.~\ref{fig:prob_tree_rsrfd_ue_r}.}

\section{Additional Results for Section~\ref{sub:results_re_ident_smp}} \label{appC:add_re_ident_smp}

This section provides additional results for the risks of re-identification on collecting multidimensional data with the SMP solution. 
In addition to the standard LDP (\cf{} Definition~\ref{def:ldp}), we also use a recent relaxation of LDP known as PIE (Personal Information Entropy)~\cite{Murakami2021}, which aims at quantifying the re-identification risks in the local model. 

\subsection{PIE privacy model}
PIE is defined as the mutual information between user $U$ (random variable representing a user in $\mathcal{U}$) and perturbed data $Y$ (random variable representing perturbed data) as $\textrm{PIE}=I\left(U;Y\right) \textrm{ } (bits)$. 
As $I\left(U;Y\right)$ approaches 0, almost no information about user $U$ can be obtained through the perturbed data $Y$. 
Based on this observation, the authors in~\cite{Murakami2021} defined the privacy metric $(\mathcal{U},\alpha)$-PIE privacy that guarantees that the PIE is upper bounded by a parameter $\alpha$ for a set of users $\mathcal{U}$. 
More formally, let $p_{U,V}$ be the joint distribution of $U$ and $V$ (random variable representing personal data), $\Psi$ be a finite set of all humans and $\mathcal{U}\subseteq \Psi$ be a finite set of users reporting attribute $A_j$ of size $k_j$, the definition of $(\mathcal{U},\alpha)$-PIE privacy is:

\begin{definition} [$(\mathcal{U},\alpha)$-PIE privacy~\cite{Murakami2021}]\label{def:pie_privacy} Let $\mathcal{U}\subseteq \Psi$ and $\alpha \in \mathbb{R}_{\geq 0}$. An obfuscation mechanism $\mathcal{M}$ provides $(\mathcal{U},\alpha)$-PIE privacy if

\begin{equation} \label{eq:pie_privacy}
    \underset{p_{U,V}}{\textrm{sup}} \textrm{ } I(U;Y) \leq \alpha \textrm{ } (bits) \textrm{.}
\end{equation}

\end{definition}

Since the inequality in Eq.~\eqref{eq:pie_privacy} holds for any distribution $p_{U,V}$, the PIE is upper bounded by $\alpha$ irrespective of the adversary's background knowledge~\cite{Murakami2021}. 
The parameter $\alpha$ plays a role similar to the privacy budget $\epsilon$ in LDP and can be selected by fixing the lowest possible Bayes error probability $\beta_{U|S}$ (given a score vector $S$) as:

\begin{corollary} \label{coro:bayes_error} (Bayes error and PIE privacy~\cite{Murakami2021}). 
Let $\mathcal{U}\subseteq \Psi$ and $\alpha \in \mathbb{R}_{\geq 0}$. 
If an obfuscation mechanism $\mathcal{M}$ provides $(\mathcal{U},\alpha)$-PIE privacy and if $U$ is uniformly distributed (\ie, one data item), then

\begin{equation} \label{eq:bayes_error}
    \beta_{U|S} \geq 1 - \frac{\alpha + 1}{\log_2 (n)} \textrm{.}
\end{equation}

\end{corollary}

Lastly, the relationship between LDP and PIE is:

\begin{proposition} (LDP and PIE~\cite{Murakami2021}). 
\label{prop:pie_ldp} If an obfuscation mechanism $\mathcal{M}$ provides $\epsilon$-LDP, then it provides ($(\mathcal{U},\alpha)$-PIE privacy) for any $\mathcal{U}\subseteq \Psi$ such that $|\mathcal{U}|=n$, where

\begin{equation} \label{eq:alpha}
    \alpha = \textrm{min} \left\{ \epsilon \log_2 (e), \epsilon^2 \log_2 (e), \log_2 (n), \log_2 (k_j) \right\} \textrm{.}
\end{equation}
\end{proposition}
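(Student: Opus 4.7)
The plan is to establish each of the four quantities inside the $\min$ as an upper bound on $I(U;Y)$ separately and then take the smallest. Throughout, the argument rests on the Markov chain $U \to V \to Y$: the user $U$ deterministically carries personal data $V$, which is then passed through $\mathcal{M}$, so the data processing inequality (DPI) immediately yields $I(U;Y) \leq I(V;Y)$.

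The two entropy-style bounds come essentially for free. Since $|\mathcal{U}| = n$, I have $I(U;Y) \leq H(U) \leq \log_2(n)$. Likewise, because $V$ ranges over at most $k_j$ values, DPI gives $I(U;Y) \leq I(V;Y) \leq H(V) \leq \log_2(k_j)$.

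The $\epsilon \log_2(e)$ bound follows directly from Definition~\ref{def:ldp}. For any $v, y$, averaging the LDP inequality $p(y \mid v') \geq e^{-\epsilon}\, p(y \mid v)$ over $v'$ with weights $p(v')$ gives $p(y) \geq e^{-\epsilon}\, p(y \mid v)$, hence $p(y \mid v)/p(y) \leq e^{\epsilon}$. Substituting into $I(V;Y) = \sum_{v,y} p(v,y)\,\log_2[p(y \mid v)/p(y)]$ yields $I(V;Y) \leq \epsilon \log_2(e)$, and DPI again propagates this to $I(U;Y)$.

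The remaining term $\epsilon^2 \log_2(e)$ is the main obstacle and requires sharpening the one-sided argument into a second-order KL estimate. My plan is to exploit both sides of the LDP inequality, $e^{-\epsilon} \leq p(y \mid v_1)/p(y \mid v_2) \leq e^\epsilon$. Writing $p(y \mid v_1) = (1 + \delta(y))\, p(y \mid v_2)$, the constraint $|\delta(y)| \leq e^\epsilon - 1$ combined with $\mathbb{E}_{p(\cdot \mid v_2)}[\delta] = 0$ lets me cancel the linear term in the Taylor expansion of $\log(1 + \delta)$ via Jensen's inequality, leaving only a second-order remainder of order $\epsilon^2$ nats. Averaging the resulting KL estimate against the input distribution and using $I(V;Y) \leq \mathbb{E}_v\, D_{\mathrm{KL}}(p(\cdot \mid v)\,\|\,p_Y)$ produces $I(V;Y) \leq \epsilon^2 \log_2(e)$, after which DPI finishes the proof. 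The hard part will be controlling the constant in the quadratic estimate so that the bound is exactly $\epsilon^2$ in nats rather than a weaker multiple such as $\epsilon(e^\epsilon - 1)$ that a naive second-order expansion would give, and verifying that the resulting estimate is uniform in the worst-case input distribution $p_{U,V}$ appearing in the supremum of Definition~\ref{def:pie_privacy}.
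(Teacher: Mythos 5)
First, a point of comparison: the paper does not actually prove Proposition~\ref{prop:pie_ldp}; it imports the statement wholesale from Murakami and Takahashi~\cite{Murakami2021} and only remarks that it holds for any LDP protocol, so there is no in-paper argument to measure yours against. Judged on its own terms, your proposal correctly and completely handles three of the four terms in Eq.~\eqref{eq:alpha}: the Markov chain $U \to V \to Y$ is legitimate because $\mathcal{M}$ acts on $V$ alone with independent randomness, so the data-processing inequality applies; $I(U;Y)\le H(U)\le \log_2 n$ and $I(U;Y)\le I(V;Y)\le H(V)\le \log_2 k_j$ are immediate; and the averaging argument $p(y)\ge e^{-\epsilon}p(y\mid v)$ does yield $I(V;Y)\le \epsilon\log_2 e$. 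All three bounds are distribution-free, so they survive the supremum over $p_{U,V}$ in Definition~\ref{def:pie_privacy}.

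The gap is the $\epsilon^2\log_2 e$ term, which you only sketch and which your sketch, as described, cannot deliver. For $\epsilon\ge 1$ this term is subsumed by the $\epsilon\log_2 e$ bound, but for $\epsilon<1$ it is strictly stronger and must be proved separately. Writing $p(y\mid v)=(1+\delta(y))\,p_Y(y)$ with $\mathbb{E}_{p_Y}[\delta]=0$ and $|\delta|\le e^{\epsilon}-1$, the elementary estimate $(1+\delta)\log(1+\delta)\le \delta+\delta^2$ gives $D_{\mathrm{KL}}\bigl(p(\cdot\mid v)\,\Vert\, p_Y\bigr)\le \mathbb{E}_{p_Y}[\delta^2]\le (e^{\epsilon}-1)^2$ nats, and the symmetrized variant gives $\epsilon(e^{\epsilon}-1)$; both strictly exceed $\epsilon^2$ for every $\epsilon>0$ because $e^{\epsilon}-1>\epsilon$. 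So ``cancelling the linear term by Jensen'' leaves a quadratic remainder with the wrong constant --- precisely the failure mode you anticipate but do not resolve, and resolving it is the entire content of this term. The inequality you actually need is the Bun--Steinke-type bound (their reduction of pure DP to zero-concentrated DP): if $e^{-\epsilon}\le P/Q\le e^{\epsilon}$ pointwise, then $D_{\mathrm{KL}}(P\,\Vert\,Q)\le \tfrac{1}{2}\epsilon^{2}$ nats; its proof exploits the two-sided bound $|\log(P/Q)|\le\epsilon$ together with the normalization $\mathbb{E}_Q[P/Q]=1$ in a Hoeffding-lemma-style argument, not a bare second-order Taylor expansion. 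Given that lemma, convexity of KL in its second argument yields $I(V;Y)\le \mathbb{E}_{v,v'}\,D_{\mathrm{KL}}\bigl(p(\cdot\mid v)\,\Vert\,p(\cdot\mid v')\bigr)\le\tfrac{1}{2}\epsilon^{2}\le\epsilon^{2}$ nats, and DPI finishes. As written, your argument establishes $\alpha=\min\{\epsilon\log_2 e,\log_2 n,\log_2 k_j\}$ but not the full Eq.~\eqref{eq:alpha}.
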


As stated in~\cite{Murakami2021}, Proposition~\ref{prop:pie_ldp} holds for any LDP protocol.

\subsection{Experiments} 
We follow a similar experimental evaluation to the one described in Section~\ref{sub:results_re_ident_smp} and we vary the following:

\begin{itemize}
    
    \item \textbf{Dataset.} Adult~\cite{uci} and ACSEmployement~\cite{ding2021retiring} datasets.
    
    \item \textbf{LDP protocol.} GRR~\cite{kairouz2016discrete,kairouz2016extremal}, OLH~\cite{tianhao2017}, $\omega$-SS~\cite{wang2016mutual,Min2018}, SUE (a.k.a. Basic One-Time RAPPOR~\cite{rappor}) and OUE~\cite{tianhao2017}.
    
    \item \textbf{Re-identification model.} Full knowledge (FK-RI) and partial knowledge (PK-RI) models (cf. Section~\ref{sub:re_ident_attack_models}). The background knowledge $\mathcal{D}_{BK}$ and $\mathcal{D}_{PK} \subseteq \mathcal{D}_{BK}$ (a random subset with at least $\frac{d}{2}$ attributes) is the own Adult and ACSEmployement datasets.
    
    \item \textbf{Privacy metric.} Local differential privacy (with $\epsilon=[1,2,\ldots,9,10]$) and $\alpha$-PIE privacy~\cite{Murakami2021} (with $\beta_{U|S}=[0.95, 0.9, \ldots, 0.55, 0.5]$, \ie, from tighter privacy regimes to lower ones). In this case, following Eq.~\eqref{eq:alpha} and~\cite[Proposition 9]{Murakami2021}, when $k_j$ (attribute's domain size) is small, one do not need to use an LDP protocol (\ie, $y=v$).
    
    \item \textbf{Privacy metric across users.} Uniform (cf. Section~\ref{sub:plausible_deniability_uniform}) and non-uniform (cf. Section~\ref{sub:plausible_deniability_non_uniform}) privacy metrics.

\end{itemize}

First, within the same setting of Fig.~\ref{fig:reident_smp} (\textbf{Adult dataset}), Fig.~\ref{fig:reident_smp_acs} (\textbf{ACSEmployement dataset}) illustrates the attacker's RID-ACC for top-k re-identification on using the SMP solution, the FK-RI model with uniform $\epsilon$-LDP privacy metric across users, by varying the LDP protocol and the number of surveys. 
One can notice that both Figs.~\ref{fig:reident_smp} and~\ref{fig:reident_smp_acs} follow similar patterns for all LDP protocols with only different upper bounds for the RID-ACC metric.
Therefore, results for the PK-RI model, $\alpha$-PIE and non-uniform privacy metrics were omitted for the ACSEmployement dataset as they follow similar results achieved with the Adult dataset where the difference is mainly in the attacker's RID-ACC upper bound. 

Secondly, to complement the results of Fig.~\ref{fig:reident_smp} (\textbf{FK-RI model}), Fig.~\ref{fig:reident_smp_pk_uni} (\textbf{PK-RI model}) illustrates the attacker's RID-ACC for top-k re-identification on using the SMP solution, the Adult dataset, with uniform $\epsilon$-LDP privacy metric across users, by varying the LDP protocol and the number of surveys. 
From Figs.~\ref{fig:reident_smp} and~\ref{fig:reident_smp_pk_uni}, one can notice that the PK-RI model minimized the RID-ACC by about $50\%$ for all LDP protocols, which represents a more realistic scenario for the background knowledge of adversaries in real-life.

Since Figs.~\ref{fig:reident_smp} and~\ref{fig:reident_smp_pk_uni} present results for the \textbf{uniform} $\epsilon$-LDP privacy metric setting, Fig.~\ref{fig:reident_smp_nonuni_eps} illustrates the attacker's RID-ACC on the Adult dataset for top-k re-identification on using the SMP solution, the full knowledge FK-RI model (left-side plots) and partial knowledge PK-RI model (right-side plots) with \textbf{non-uniform} $\epsilon$-LDP privacy metric across users, and by varying the LDP protocol and the number of surveys (i.e., data collections).
With a uniform privacy metric, all users report a different attribute per survey, which can lead to higher re-identification rates. 
On the other hand, if users sample attributes with replacement, there is a lower probability of selecting different attributes for each survey, which can bound the re-identification risks. 
For these reasons, the RID-ACC of Fig.~\ref{fig:reident_smp_nonuni_eps} decreased by about $40\%$ in comparison with the RID-ACC of Figs.~\ref{fig:reident_smp} and~\ref{fig:reident_smp_pk_uni}.

Focusing now on the $\mathbf{\alpha}$\textbf{-PIE metric}, Figs.~\ref{fig:reident_smp_uni_alpha} (\textbf{uniform setting}) and~\ref{fig:reident_smp_nonuni_alpha} (\textbf{non-uniform setting}) illustrate the attacker's RID-ACC on the Adult dataset for top-k re-identification on using the SMP solution, the full knowledge FK-RI model (left-side plots) and partial knowledge PK-RI model (right-side plots), and by varying the LDP protocol and the number of surveys (i.e., data collections).
As one can note, even using a high Bayes error probability as $\beta_{U|S}=0.95$ for each attribute and collection, the $(\mathcal{U}, \alpha)$-PIE privacy metric leads to higher attacker's RID-ACC in comparison with Figs.~\ref{fig:reident_smp},~\ref{fig:reident_smp_pk_uni} and~\ref{fig:reident_smp_nonuni_eps} in which $\epsilon =1$. 
In fact, these higher $RID\textrm{-}ACC$ when using $\beta_{U|S}$ can be explained due to not using a local randomizer when $k_j$ is small~\cite[Proposition 9]{Murakami2021} (the case for several attributes in the Adult dataset), with more difference for other protocols such as SUE, OUE and OLH. 
For this reason, the re-identification rates of all LDP protocols follow similar behaviors on Figs.~\ref{fig:reident_smp_uni_alpha} and~\ref{fig:reident_smp_nonuni_alpha}. 

Overall, one can note that the attackers' RID-ACC follow similar behavior as the analytical results in Fig.~\ref{fig:analytical_ACC_uniform_VS_non_uniform}. 
Indeed, the risks of re-identification deeply depend on how well the attacker can profile each user on multiple collections (\cf{} Sections~\ref{sub:plausible_deniability_uniform} and~\ref{sub:plausible_deniability_non_uniform}) such that the profile can be found in the background knowledge $\mathbb{D}_{BK}$ (or $\mathbb{D}_{PK}$). 
Besides, the results agree with intuitive expectations, as increasing $\epsilon$ (or decreasing $\beta_{U|S}$) also increases the privacy leakage, thus leading to higher re-identification rates.
Moreover, following the $\epsilon$-LDP metric, the lowest attacker's RID-ACC were achieved by both OLH and OUE protocols~\cite{tianhao2017}, as they have an expected upper bound for the attacker's $ACC_{FO}=1/2$~\cite{Gursoy2022} (cf. Section~\ref{sub:plausible_deniability}) per data collection. 
Besides, the $\omega$-SS protocol follows similar re-identification rates as the GRR protocol (also shown analytically in Fig.~\ref{fig:analytical_ACC_uniform_VS_non_uniform}), which is the highest among all LDP protocols. 
On the other hand, the re-identification rates of all LDP protocols follow similar behaviors when using the $\alpha$-PIE privacy model on Figs.~\ref{fig:reident_smp_uni_alpha} and~\ref{fig:reident_smp_nonuni_alpha}. 

\begin{figure*}[!ht]
\begin{subfigure}{.5\textwidth}
  \centering
  \includegraphics[width=1\linewidth]{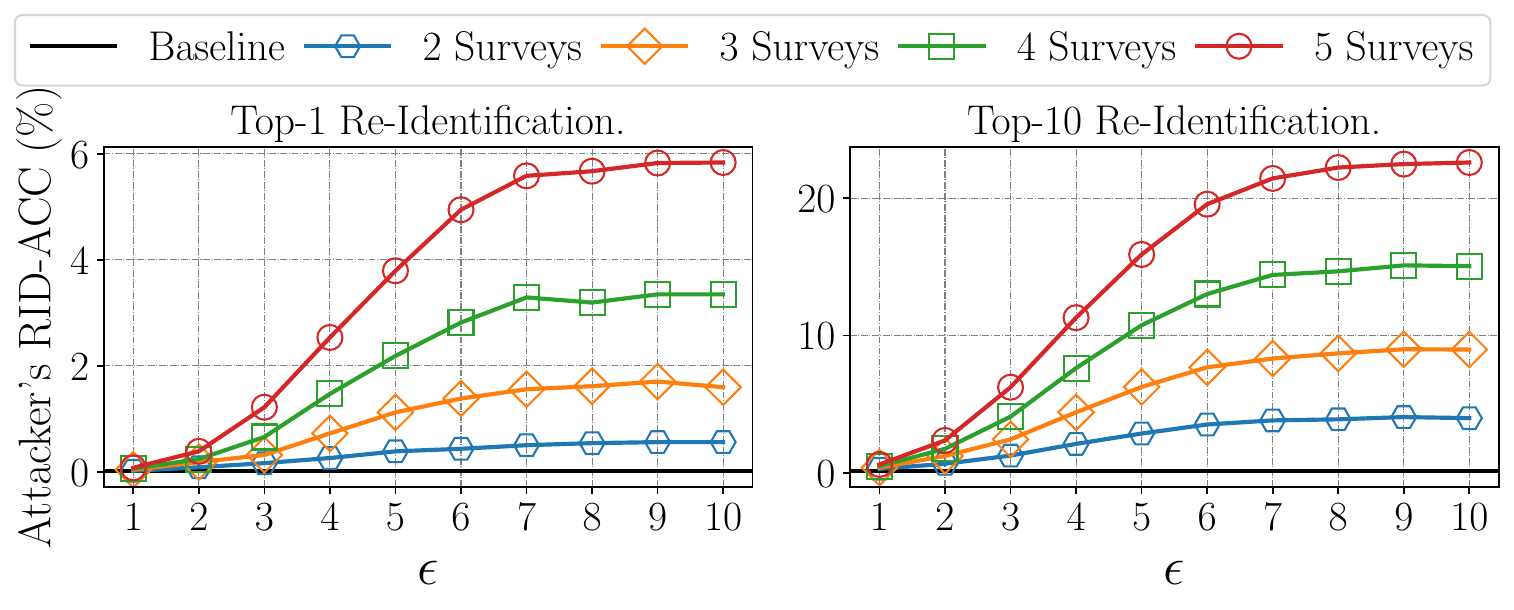}
  \caption{Re-identification risk of the GRR~\cite{kairouz2016discrete,kairouz2016extremal} protocol.}
\end{subfigure}%
\begin{subfigure}{.5\textwidth}
  \centering
  \includegraphics[width=1\linewidth]{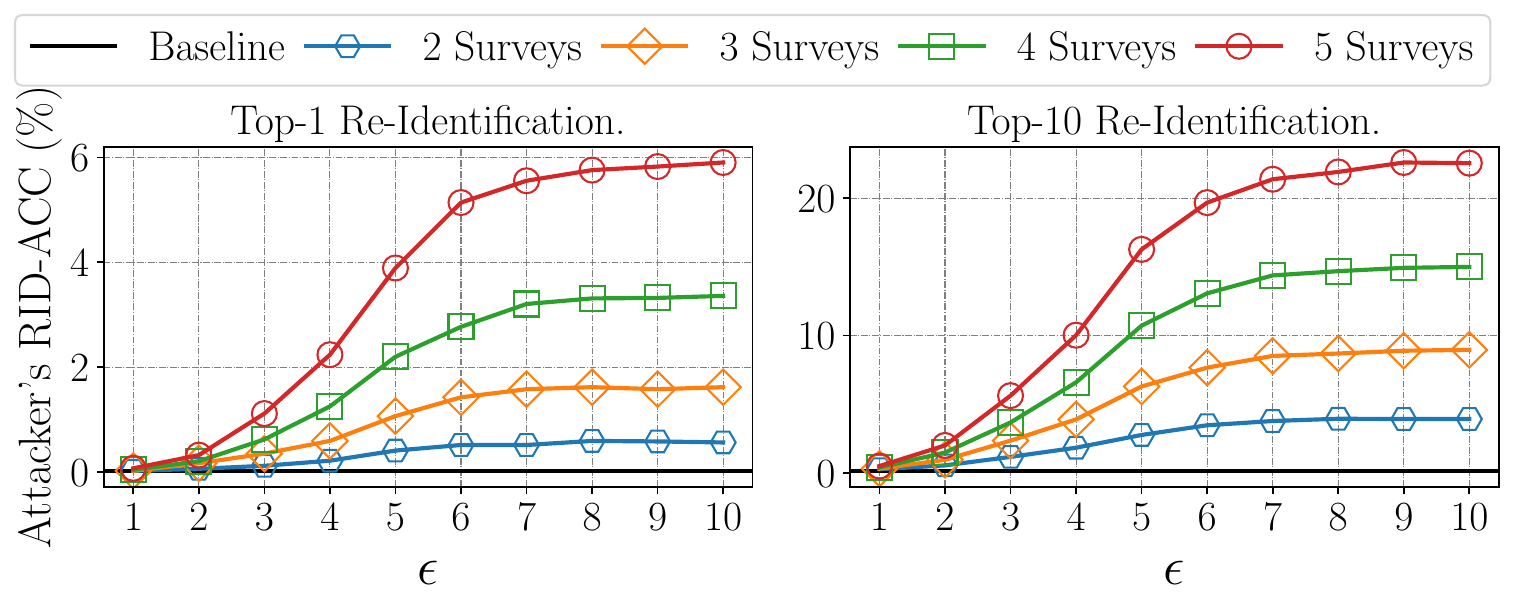}
  \caption{Re-identification risk of the $\omega$-SS~\cite{wang2016mutual,Min2018} protocol.}
\end{subfigure}
\\
\begin{subfigure}{.5\textwidth}
  \centering
  \includegraphics[width=1\linewidth]{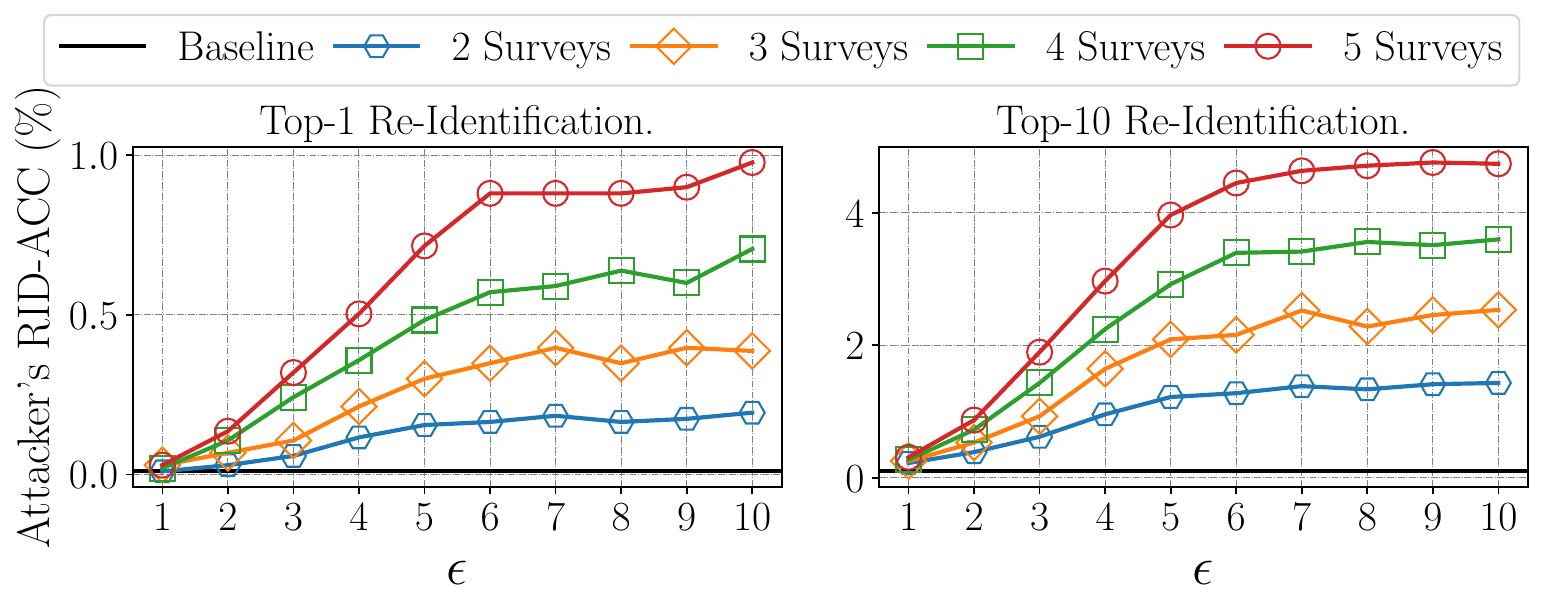}
  \caption{Re-identification risk of the OLH~\cite{tianhao2017} protocol.}
\end{subfigure}%
\begin{subfigure}{.5\textwidth}
  \centering
  \includegraphics[width=1\linewidth]{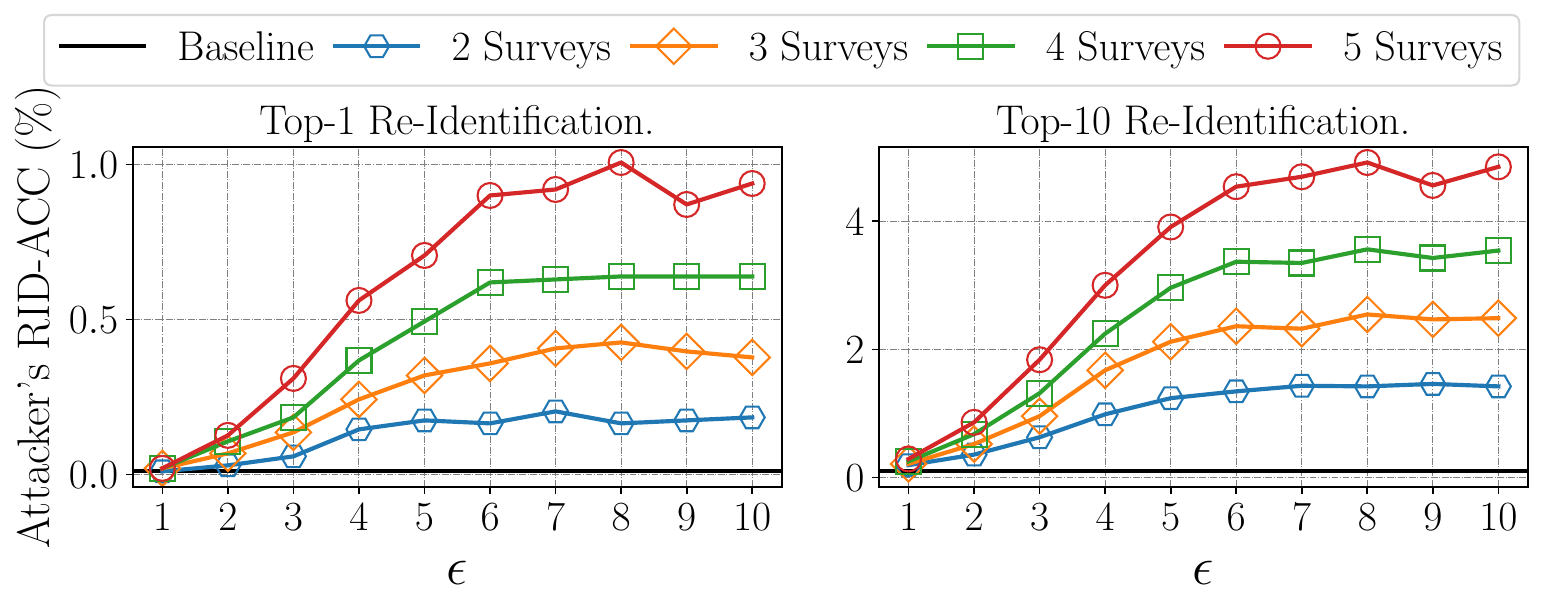}
  \caption{Re-identification risk of the OUE~\cite{tianhao2017} protocol.}
\end{subfigure}
\\
\begin{subfigure}{1\textwidth}
  \centering
  \includegraphics[width=.5\linewidth]{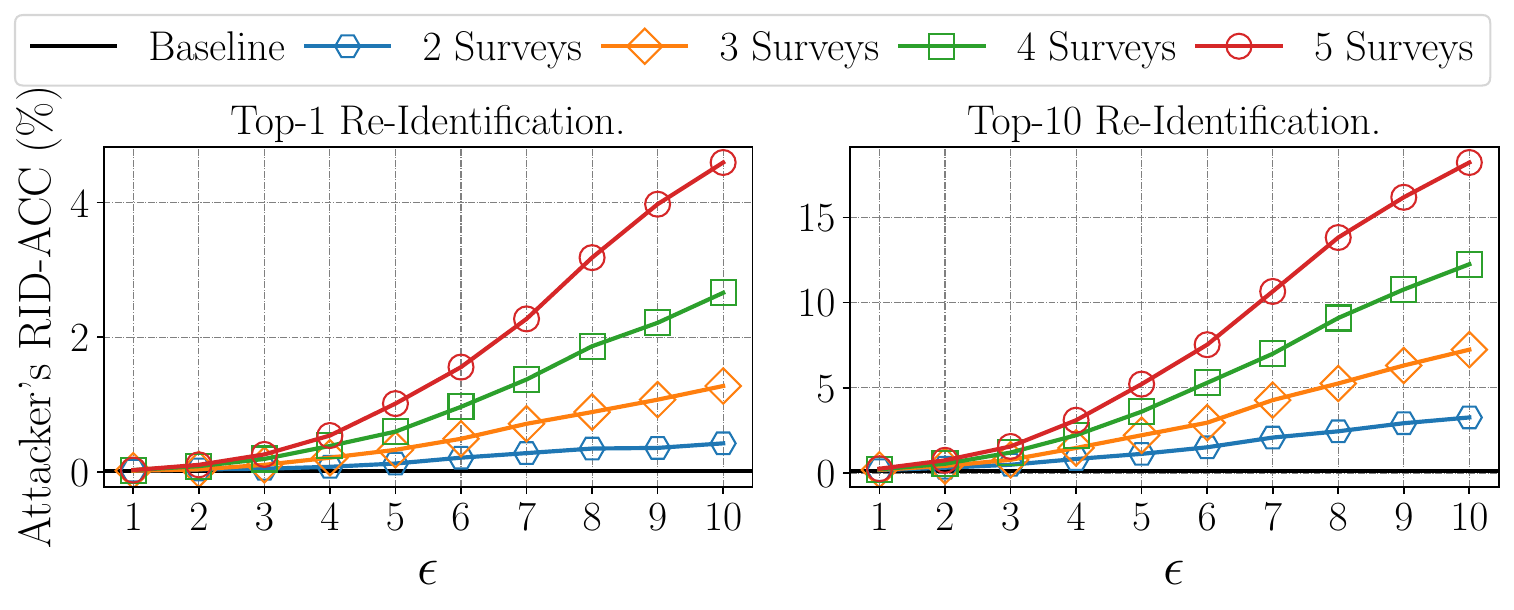}
  \caption{Re-identification risk of the SUE (\emph{a.k.a.} RAPPOR)~\cite{rappor} protocol.}
\end{subfigure}
\caption{Attacker's re-identification accuracy (RID-ACC) on the ACSEmployement dataset for top-k re-identification on using the SMP solution, the full knowledge FK-RI model with uniform $\epsilon$-LDP privacy metric across users, and by varying the LDP protocol and the number of surveys (i.e., data collections).}
\label{fig:reident_smp_acs}
\end{figure*}

\begin{figure*}[!ht]
\begin{subfigure}{.5\textwidth}
  \centering
  \includegraphics[width=1\linewidth]{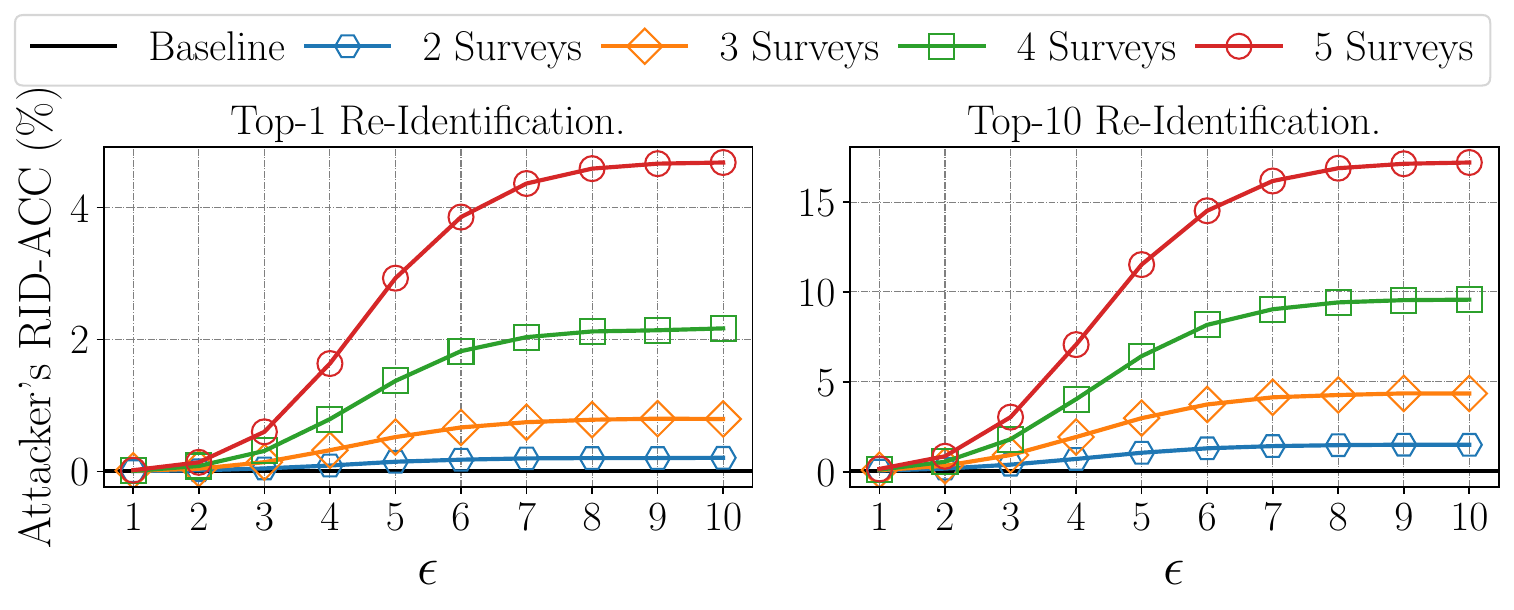}
  \caption{Re-identification risk of the GRR~\cite{kairouz2016discrete,kairouz2016extremal} protocol.}
\end{subfigure}%
\begin{subfigure}{.5\textwidth}
  \centering
  \includegraphics[width=1\linewidth]{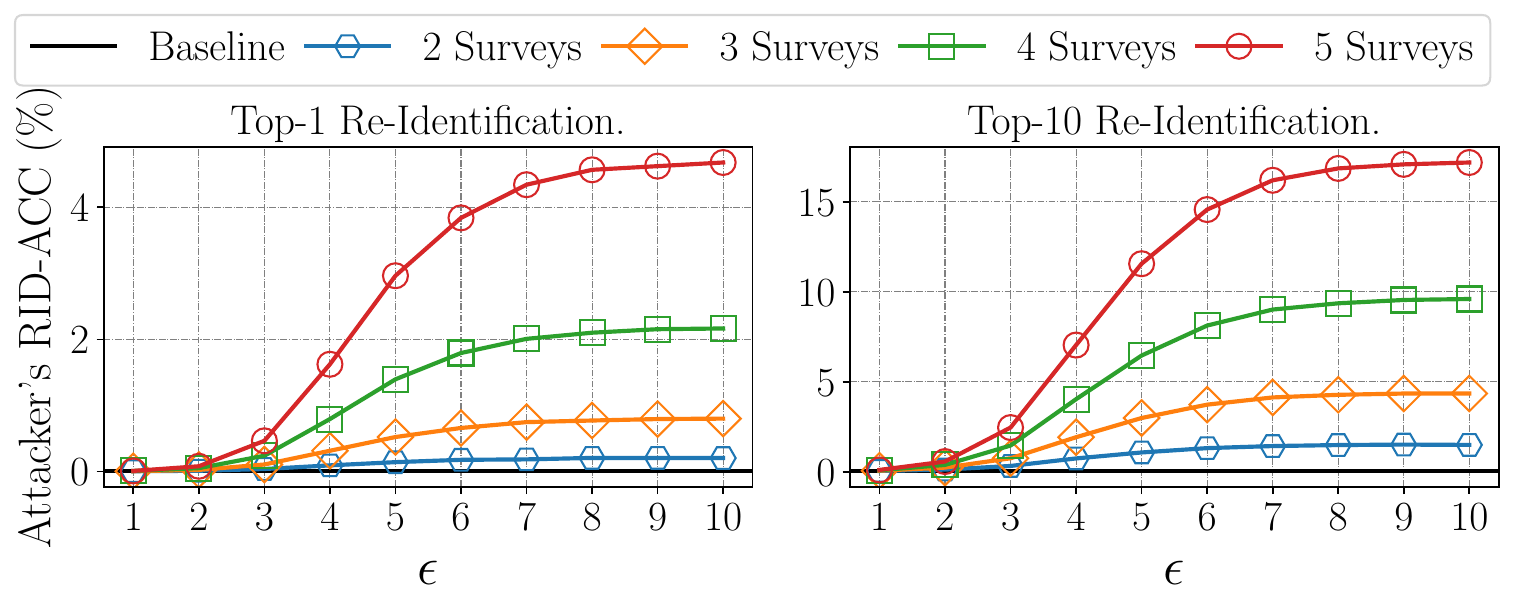}
  \caption{Re-identification risk of the $\omega$-SS~\cite{wang2016mutual,Min2018} protocol.}
\end{subfigure}
\\
\begin{subfigure}{.5\textwidth}
  \centering
  \includegraphics[width=1\linewidth]{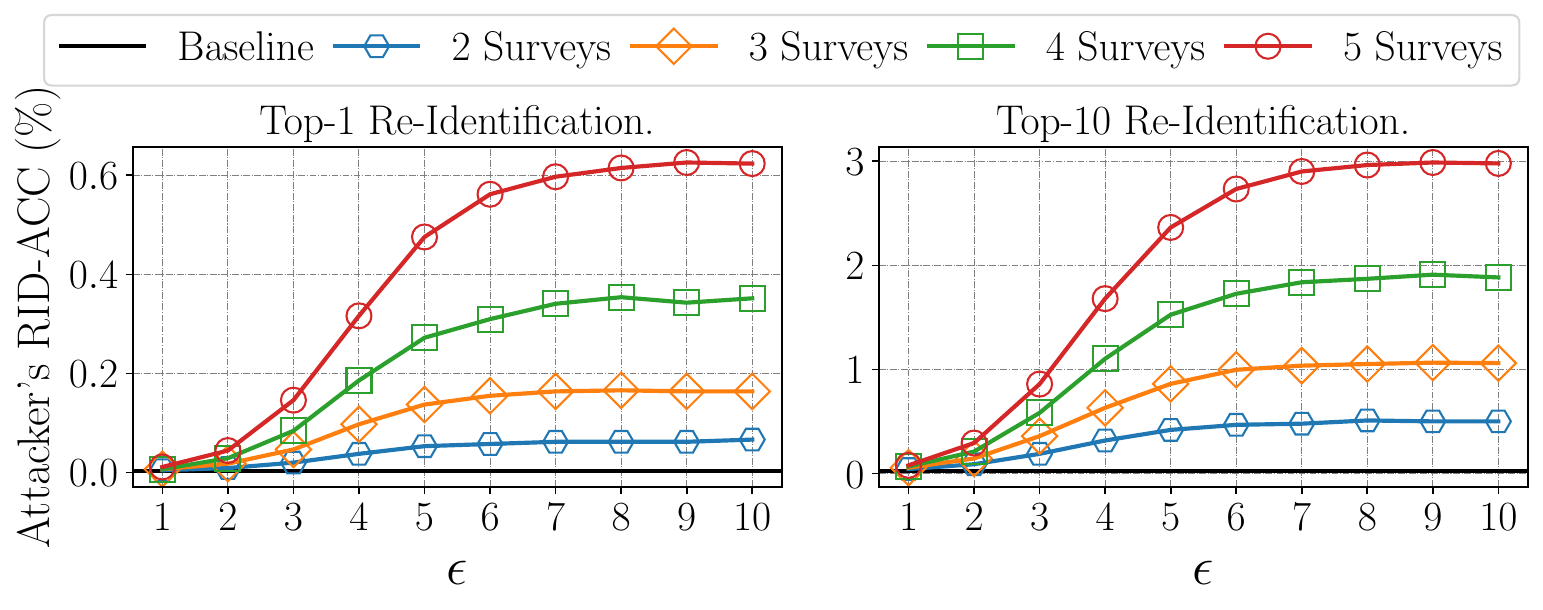}
  \caption{Re-identification risk of the OLH~\cite{tianhao2017} protocol.}
\end{subfigure}%
\begin{subfigure}{.5\textwidth}
  \centering
  \includegraphics[width=1\linewidth]{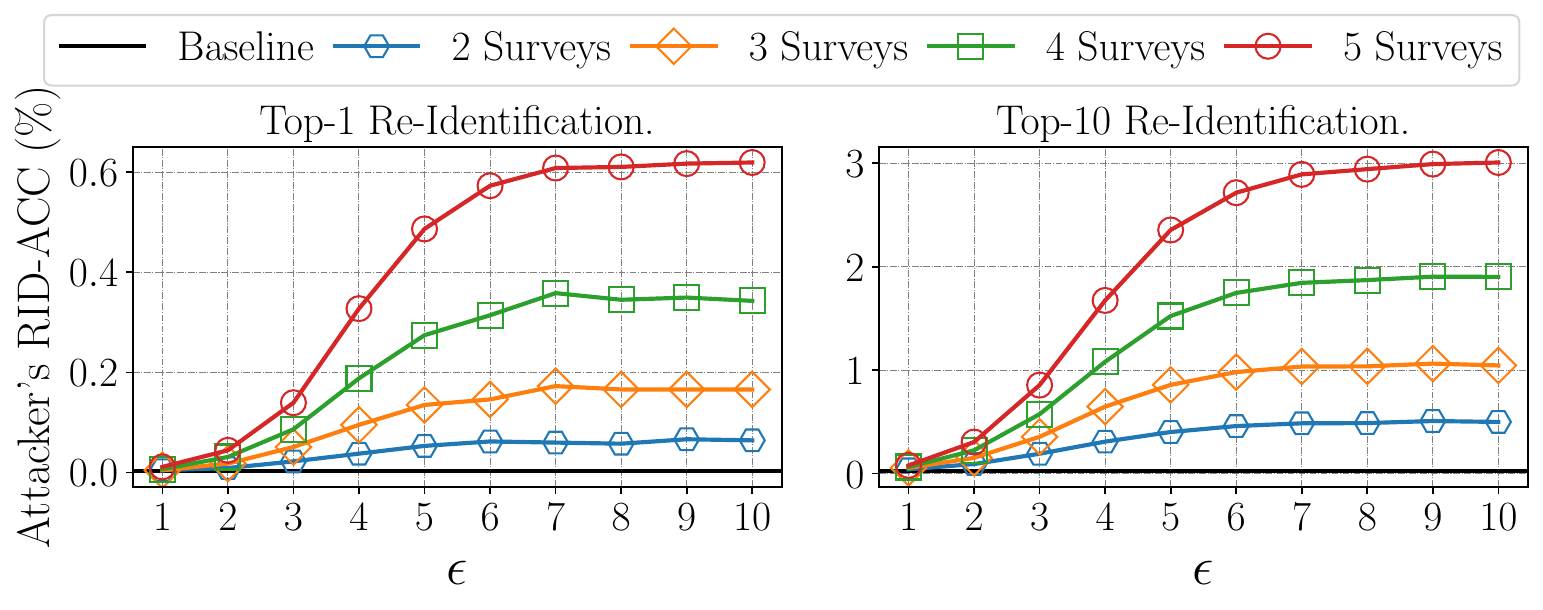}
  \caption{Re-identification risk of the OUE~\cite{tianhao2017} protocol.}
\end{subfigure}
\\
\begin{subfigure}{1\textwidth}
  \centering
  \includegraphics[width=.5\linewidth]{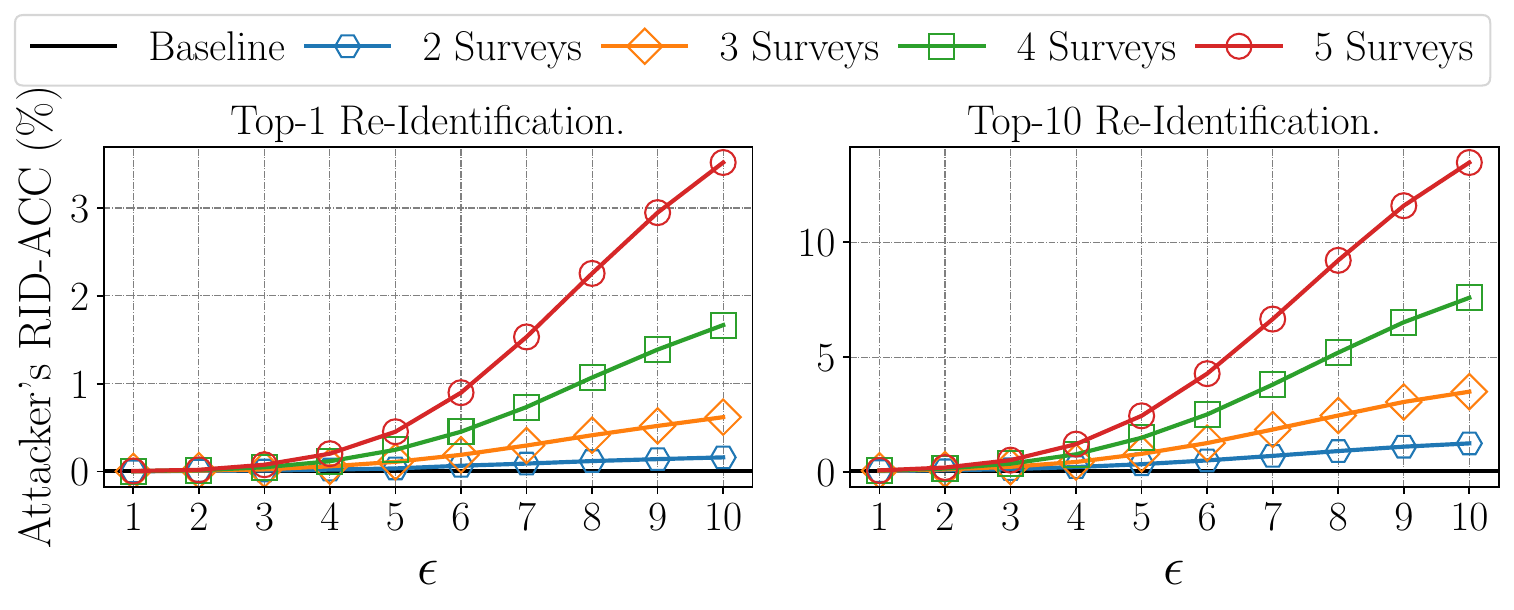}
  \caption{Re-identification risk of the SUE (\emph{a.k.a.} RAPPOR)~\cite{rappor} protocol.}
\end{subfigure}
\caption{Attacker's re-identification accuracy (RID-ACC) on the Adult dataset for top-k re-identification on using the SMP solution, the partial knowledge PK-RI model with uniform $\epsilon$-LDP privacy metric across users, and by varying the LDP protocol and the number of surveys (i.e., data collections).}
\label{fig:reident_smp_pk_uni}
\end{figure*}

\begin{figure*}[!ht]
\begin{subfigure}{.5\textwidth}
  \centering
  \includegraphics[width=0.95\linewidth]{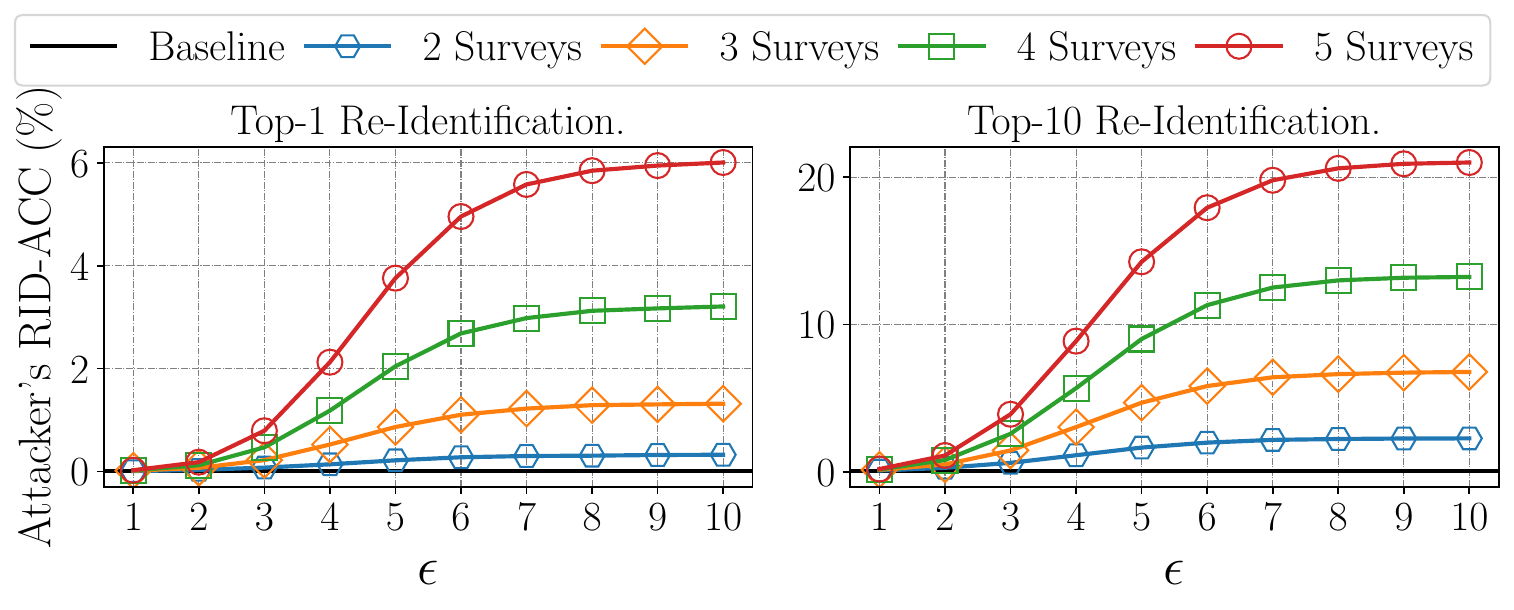}
  \caption{FK-RI risk of the GRR~\cite{kairouz2016discrete,kairouz2016extremal} protocol.}
\end{subfigure}%
\begin{subfigure}{.5\textwidth}
  \centering
  \includegraphics[width=0.95\linewidth]{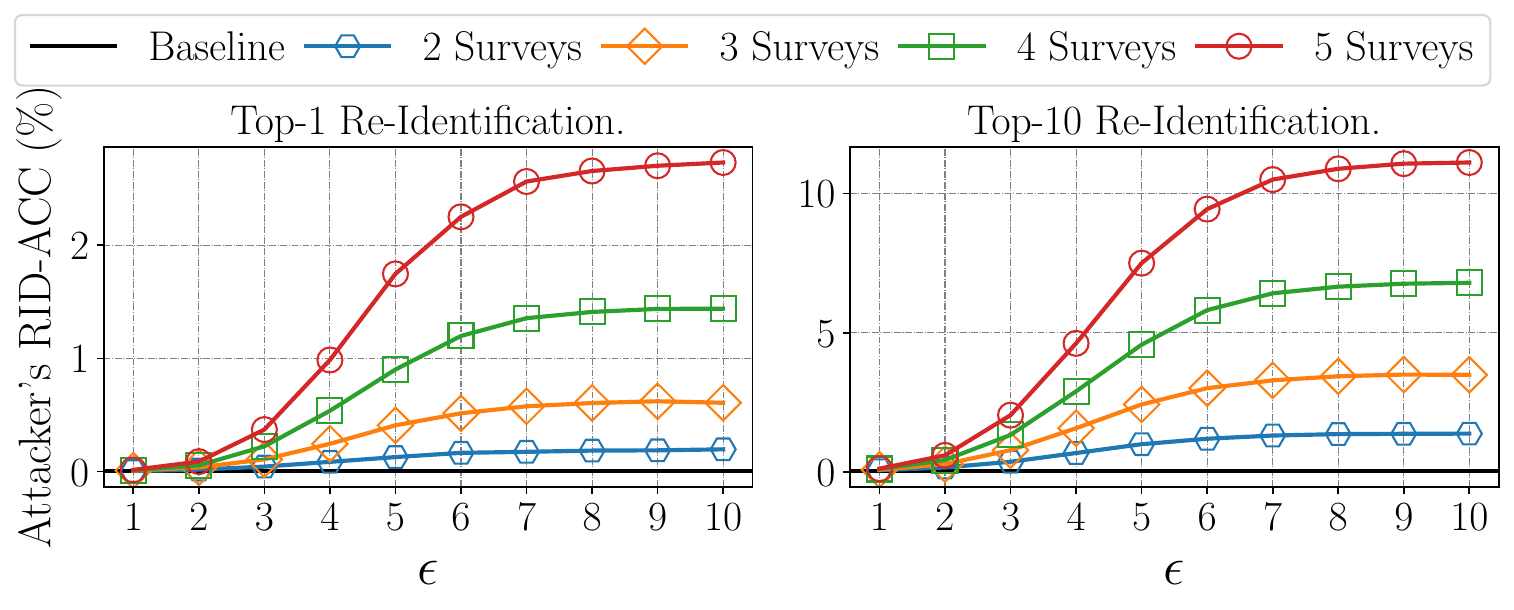}
  \caption{PK-RI risk of the GRR~\cite{kairouz2016discrete,kairouz2016extremal} protocol.}
\end{subfigure}
\\
\begin{subfigure}{.5\textwidth}
  \centering
  \includegraphics[width=0.95\linewidth]{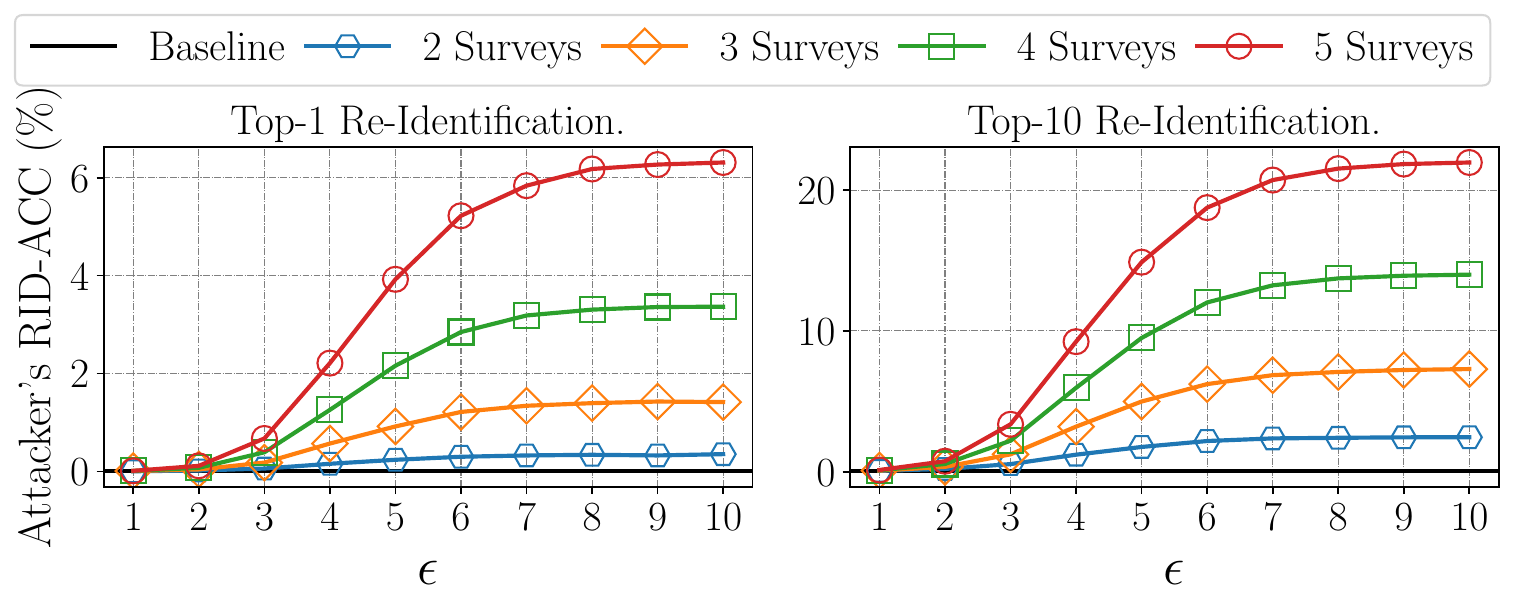}
  \caption{FK-RI risk of the $\omega$-SS~\cite{wang2016mutual,Min2018} protocol.}
\end{subfigure}%
\begin{subfigure}{.5\textwidth}
  \centering
  \includegraphics[width=0.95\linewidth]{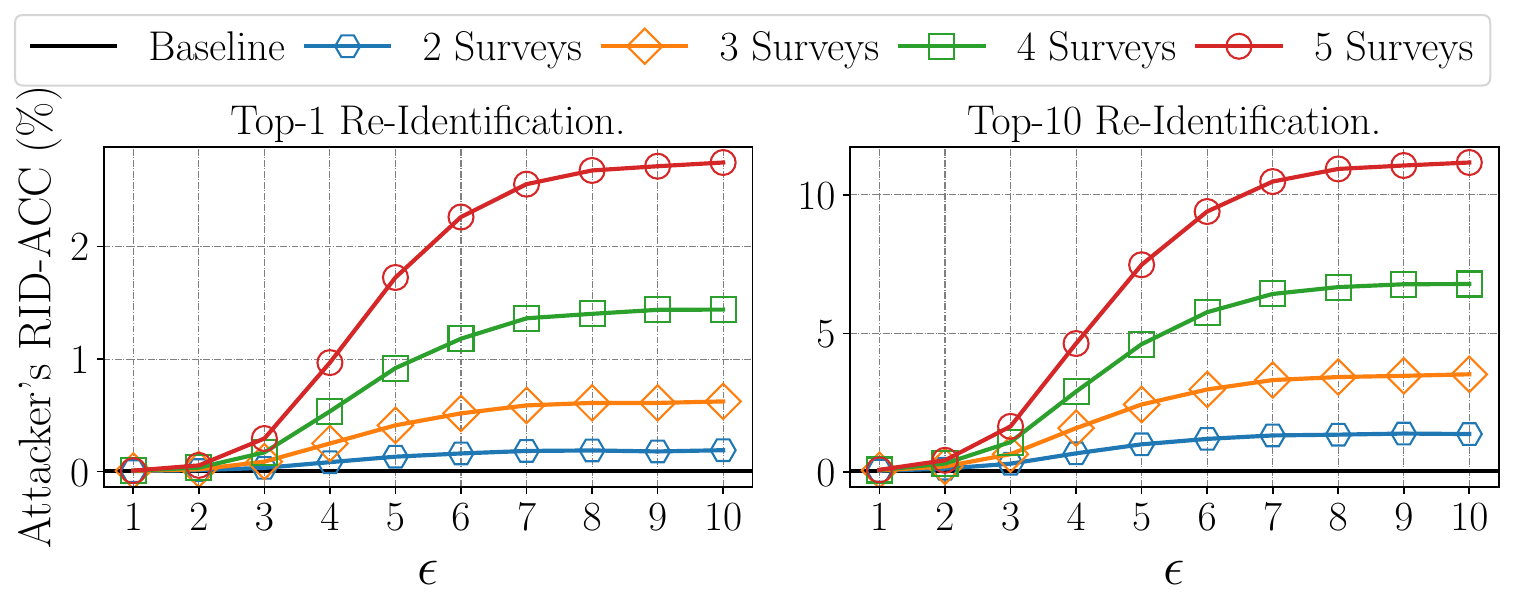}
  \caption{PK-RI risk of the $\omega$-SS~\cite{wang2016mutual,Min2018} protocol.}
\end{subfigure}
\\
\begin{subfigure}{.5\textwidth}
  \centering
  \includegraphics[width=0.95\linewidth]{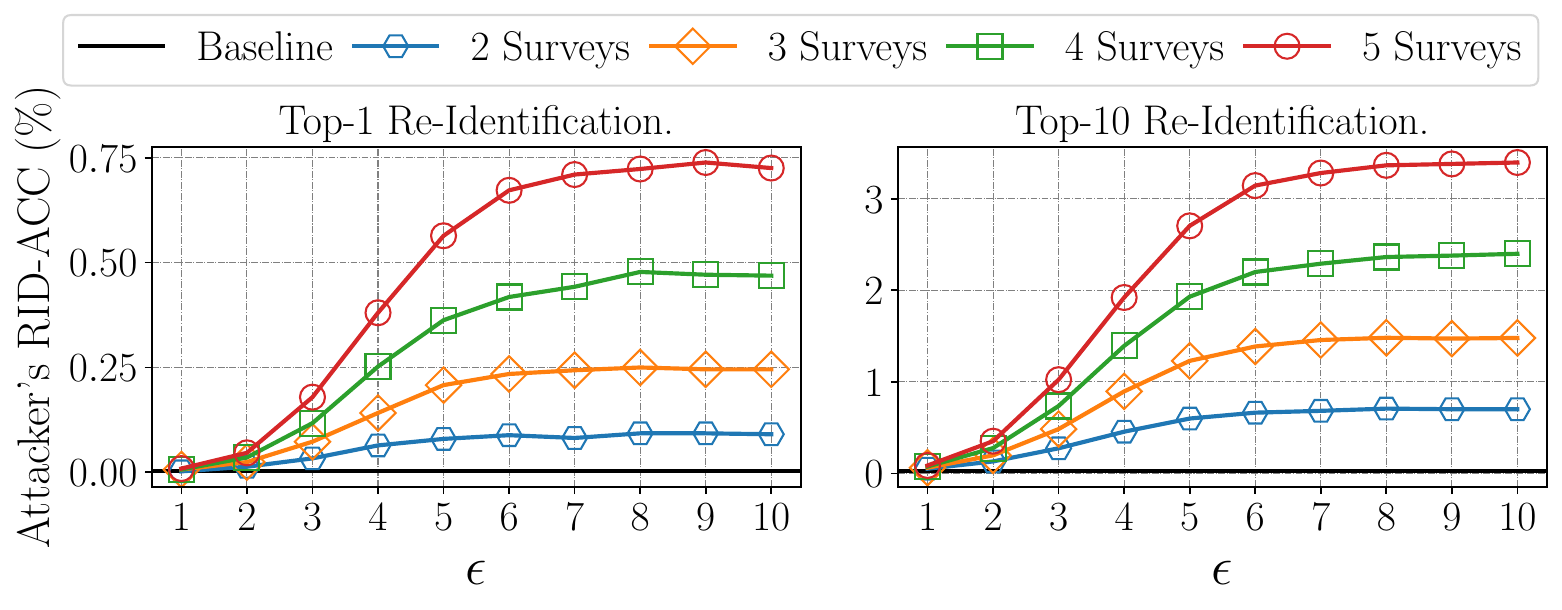}
  \caption{FK-RI risk of the OLH~\cite{tianhao2017} protocol.}
\end{subfigure}%
\begin{subfigure}{.5\textwidth}
  \centering
  \includegraphics[width=0.95\linewidth]{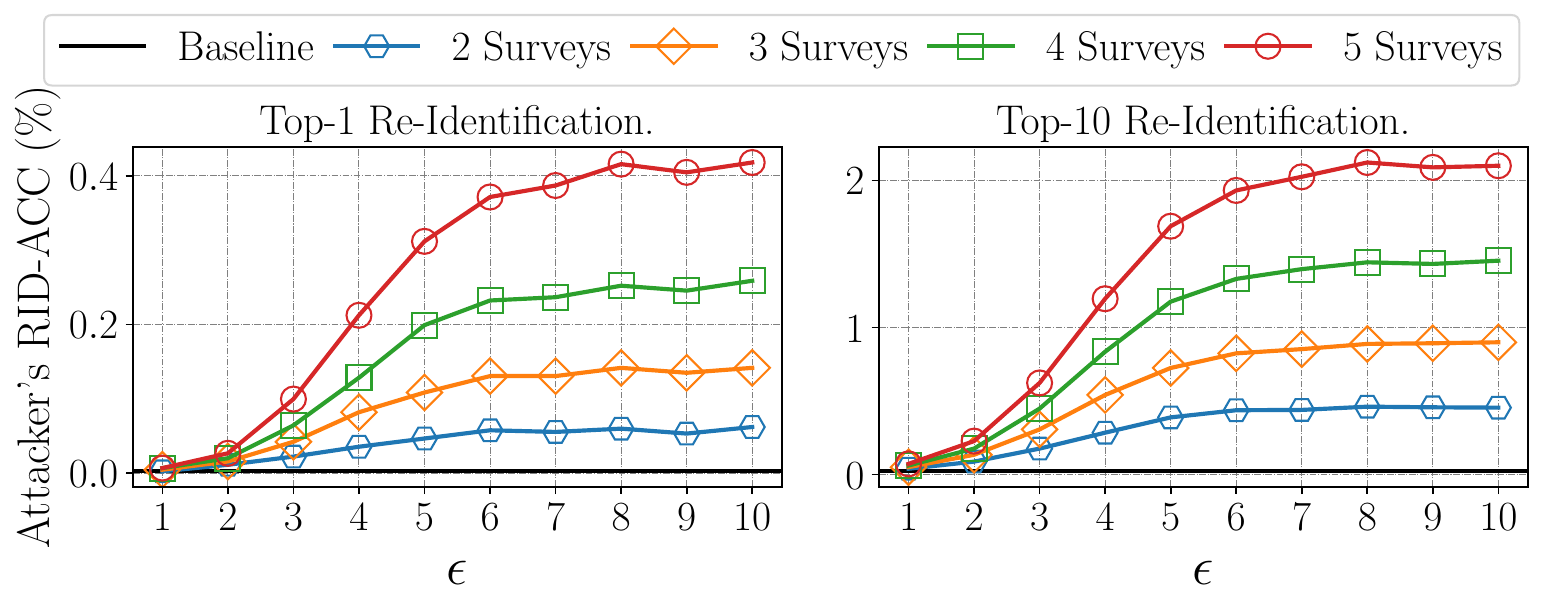}
  \caption{PK-RI risk of the OLH~\cite{tianhao2017} protocol.}
\end{subfigure}
\\
\begin{subfigure}{.5\textwidth}
  \centering
  \includegraphics[width=0.95\linewidth]{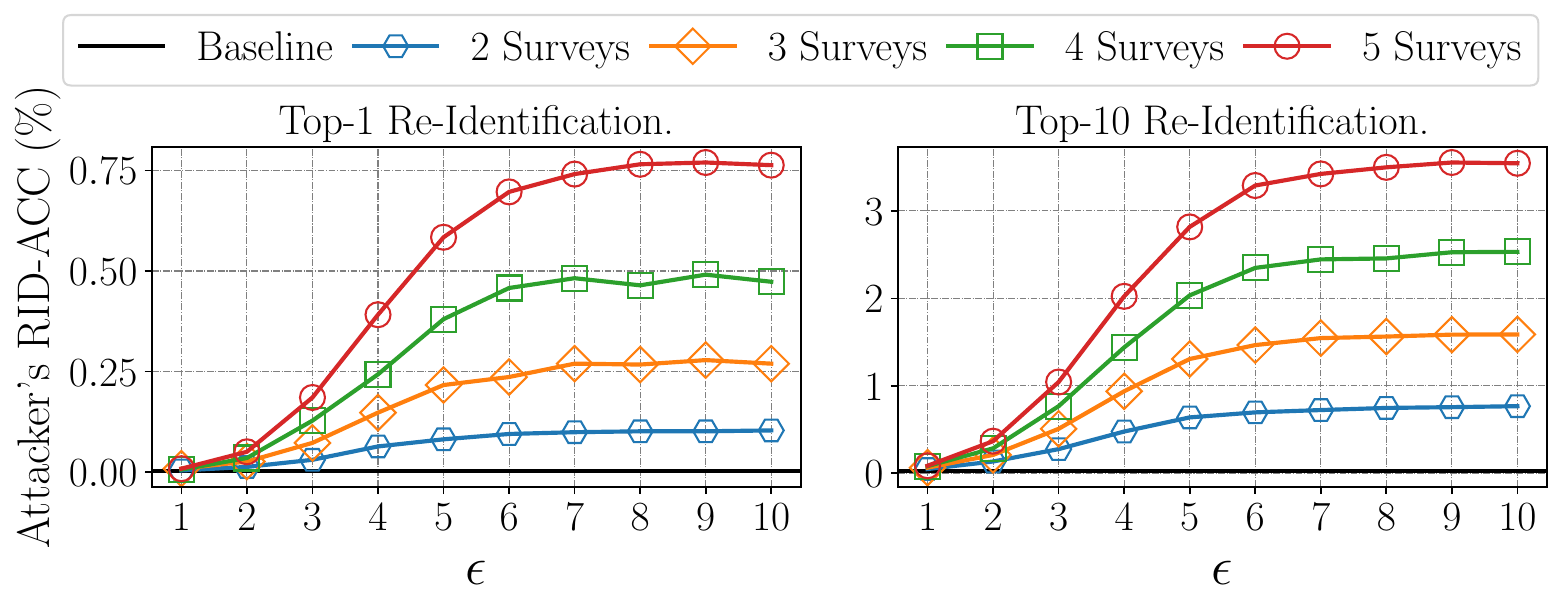}
  \caption{FK-RI risk of the OUE~\cite{tianhao2017} protocol.}
\end{subfigure}%
\begin{subfigure}{.5\textwidth}
  \centering
  \includegraphics[width=0.95\linewidth]{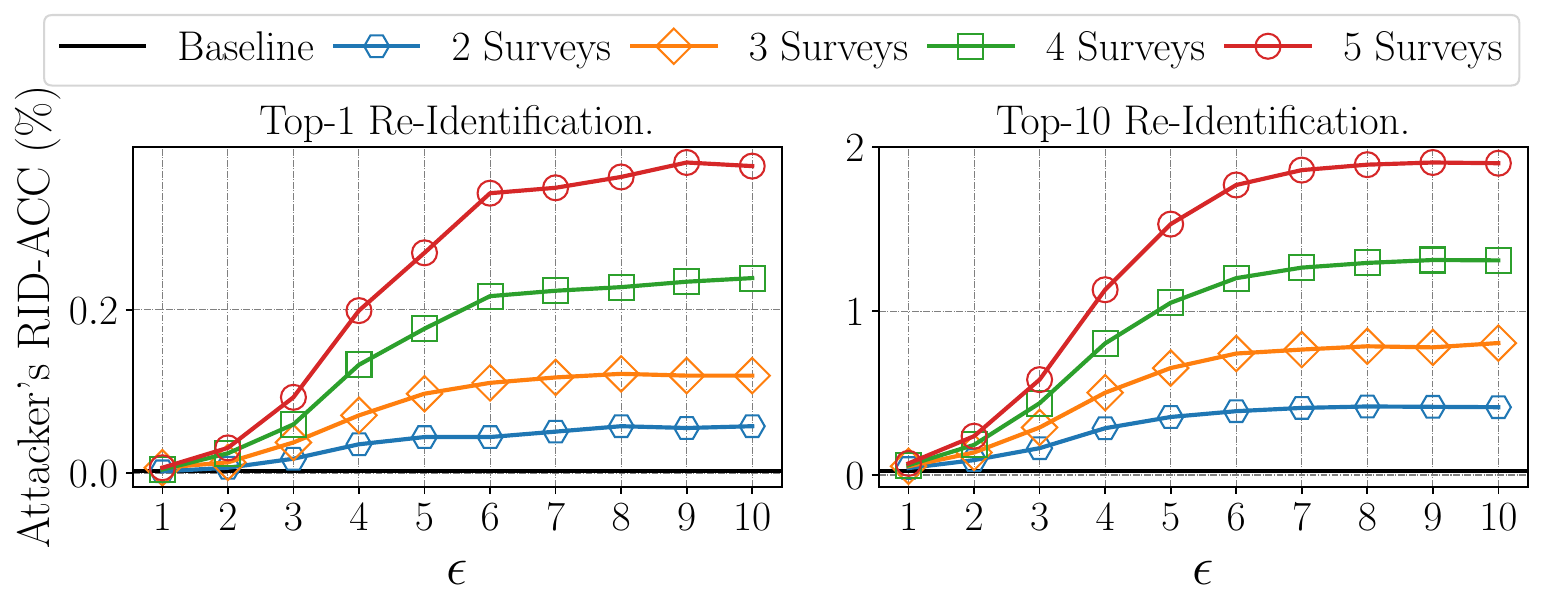}
  \caption{PK-RI risk of the OUE~\cite{tianhao2017} protocol.}
\end{subfigure}
\\
\begin{subfigure}{.5\textwidth}
  \centering
  \includegraphics[width=0.95\linewidth]{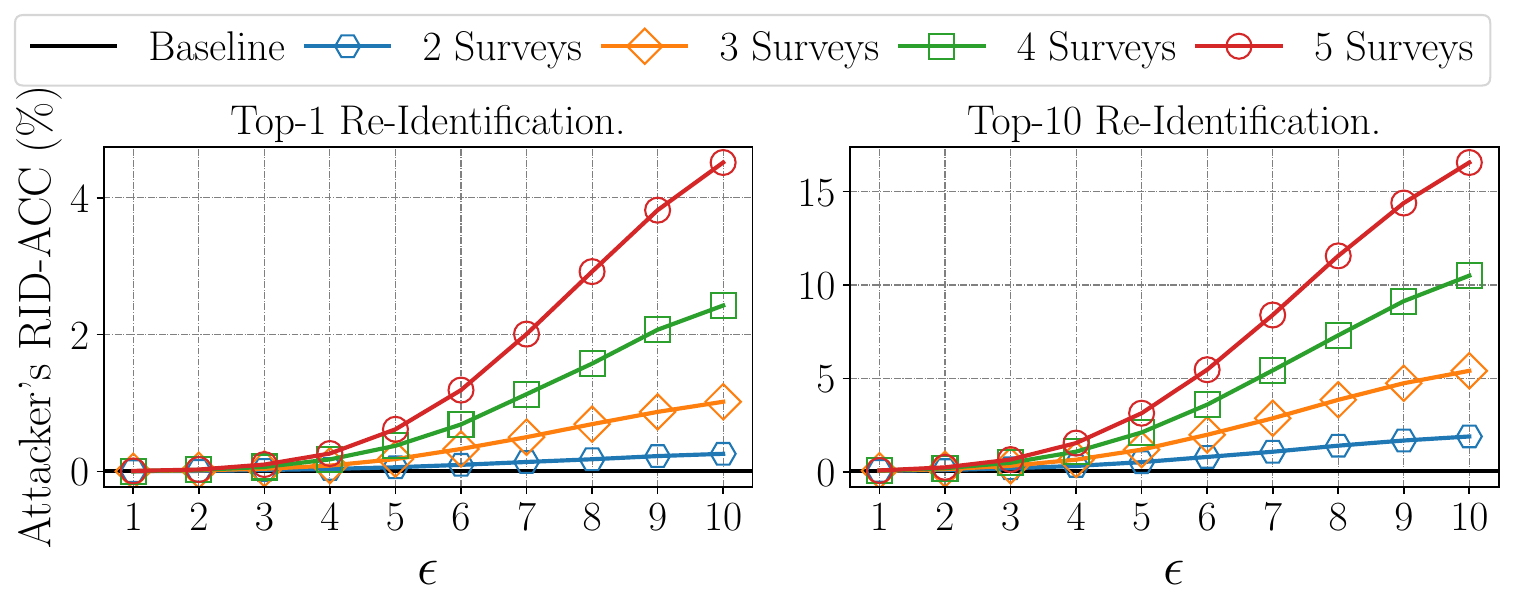}
  \caption{FK-RI risk of the SUE (\emph{a.k.a.} RAPPOR)~\cite{rappor} protocol.}
\end{subfigure}%
\begin{subfigure}{.5\textwidth}
  \centering
  \includegraphics[width=0.95\linewidth]{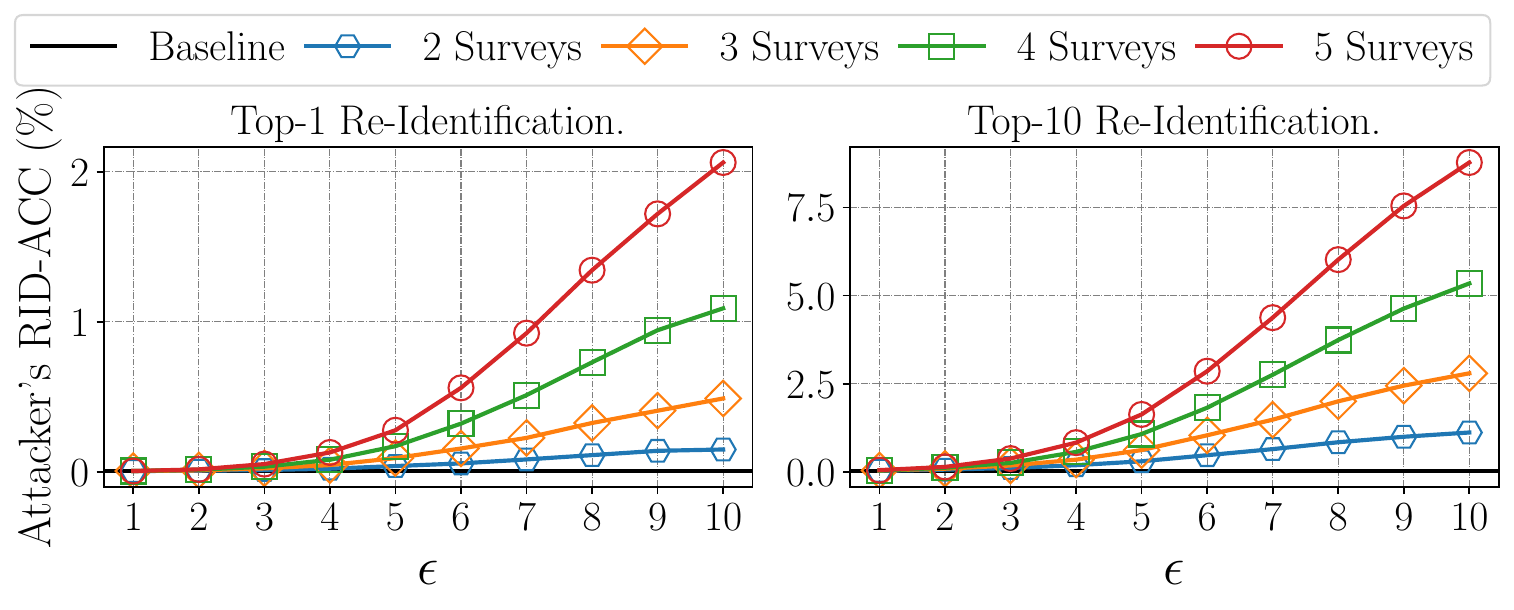}
  \caption{PK-RI risk of the SUE (\emph{a.k.a.} RAPPOR)~\cite{rappor} protocol.}
\end{subfigure}
\caption{Attacker's re-identification accuracy (RID-ACC) on the Adult dataset for top-k re-identification on using the SMP solution, the full knowledge FK-RI model (left-side plots) and partial knowledge PK-RI model (right-side plots) with non-uniform $\epsilon$-LDP privacy metric across users, and by varying the LDP protocol and the number of surveys (i.e., data collections).}
\label{fig:reident_smp_nonuni_eps}
\end{figure*}

\begin{figure*}[!ht]
\begin{subfigure}{.5\textwidth}
  \centering
  \includegraphics[width=0.95\linewidth]{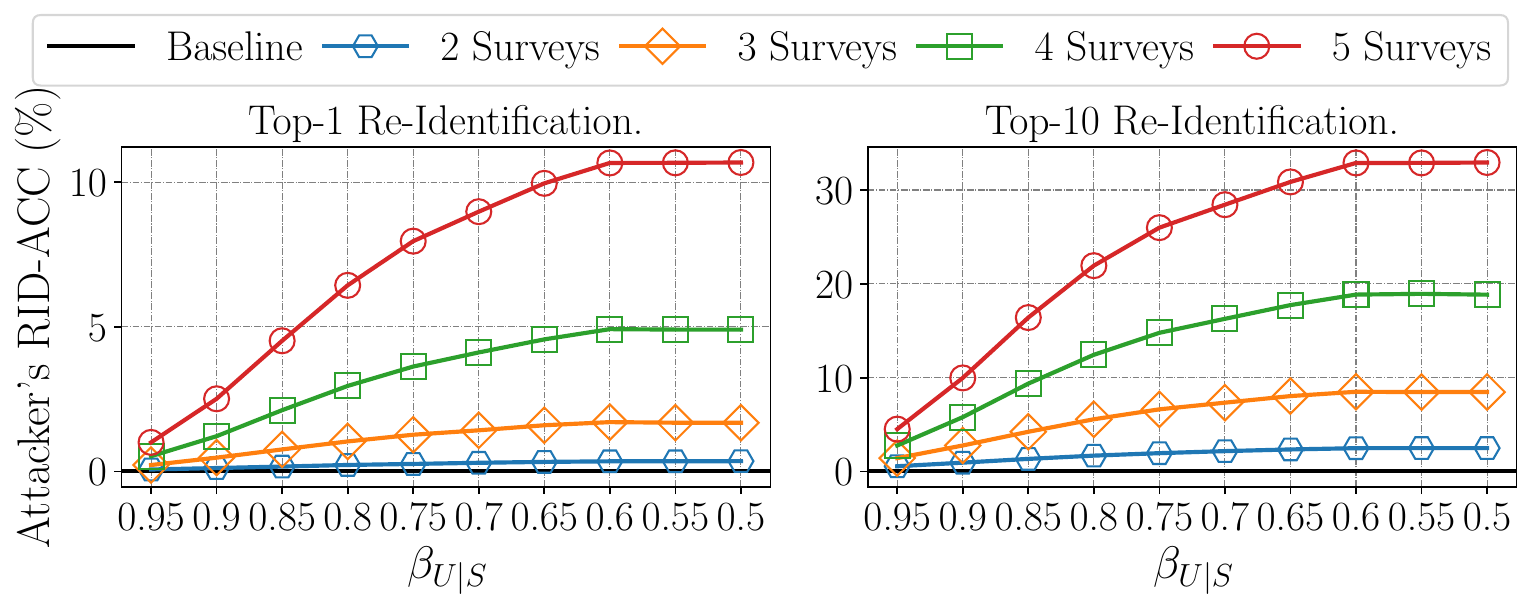}
  \caption{FK-RI risk of the GRR~\cite{kairouz2016discrete,kairouz2016extremal} protocol.}
\end{subfigure}%
\begin{subfigure}{.5\textwidth}
  \centering
  \includegraphics[width=0.95\linewidth]{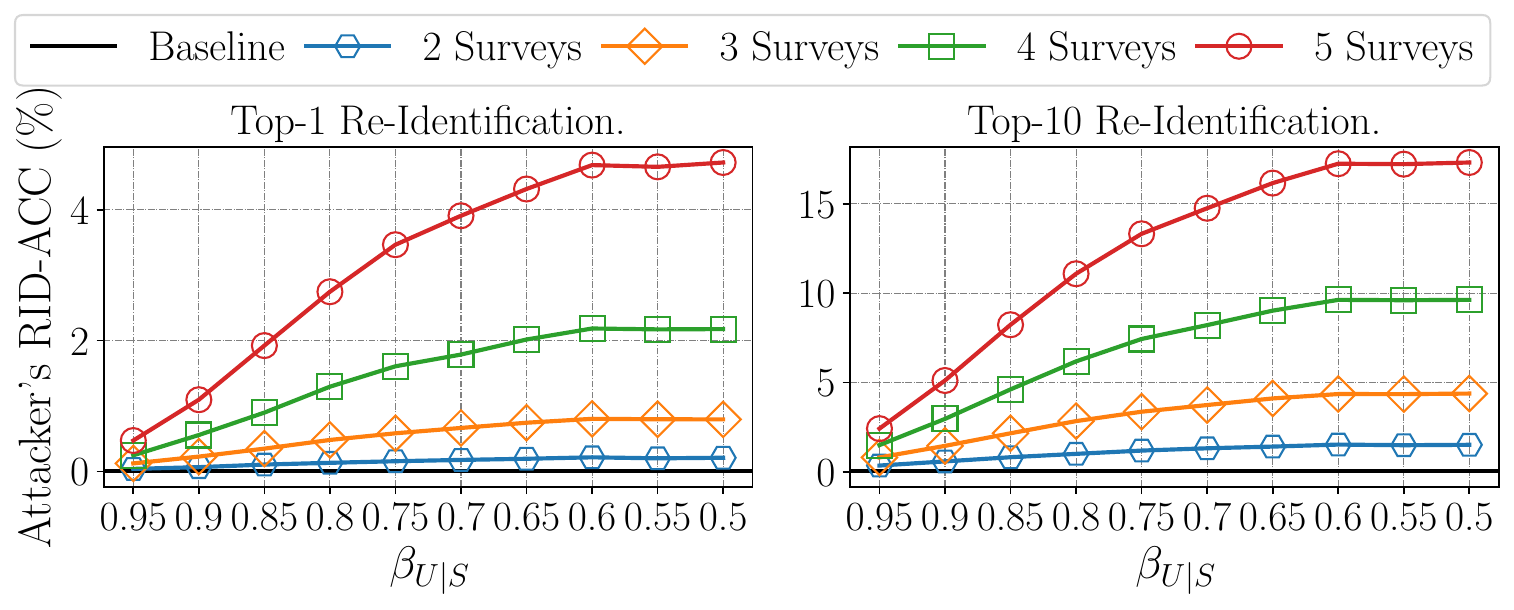}
  \caption{PK-RI risk of the GRR~\cite{kairouz2016discrete,kairouz2016extremal} protocol.}
\end{subfigure}
\\
\begin{subfigure}{.5\textwidth}
  \centering
  \includegraphics[width=0.95\linewidth]{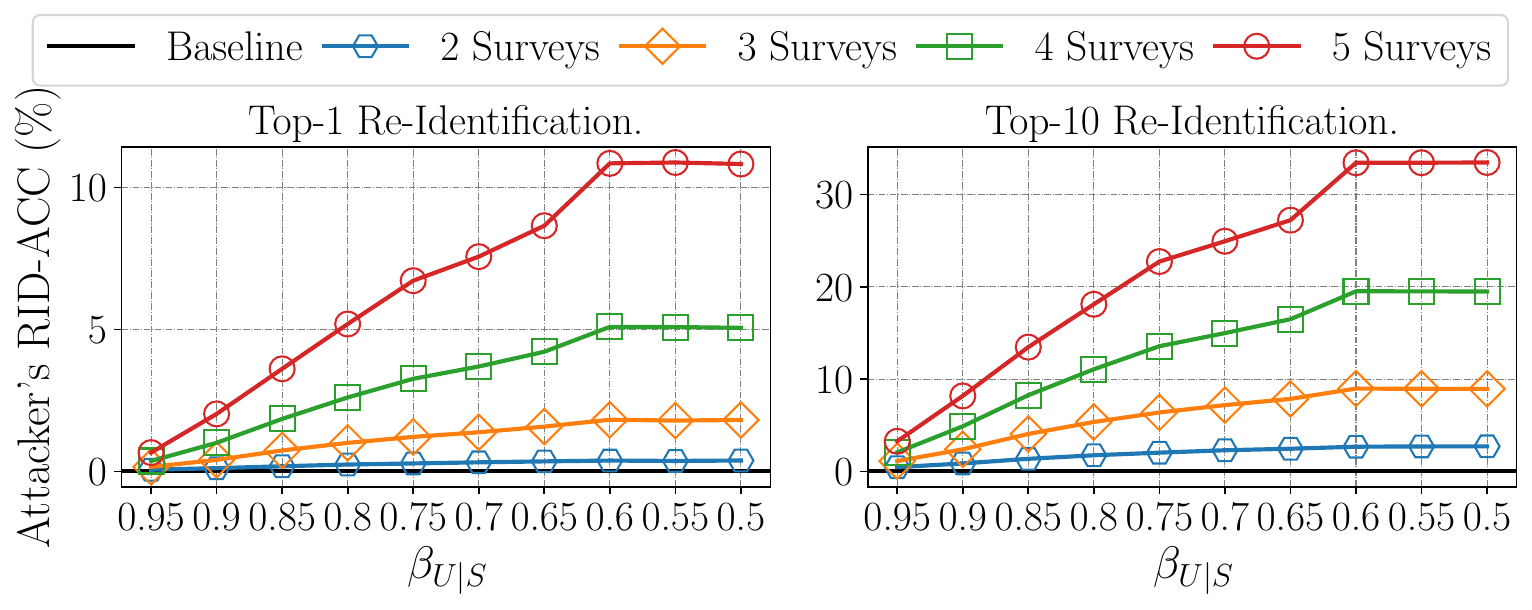}
  \caption{FK-RI risk of the $\omega$-SS~\cite{wang2016mutual,Min2018} protocol.}
\end{subfigure}%
\begin{subfigure}{.5\textwidth}
  \centering
  \includegraphics[width=0.95\linewidth]{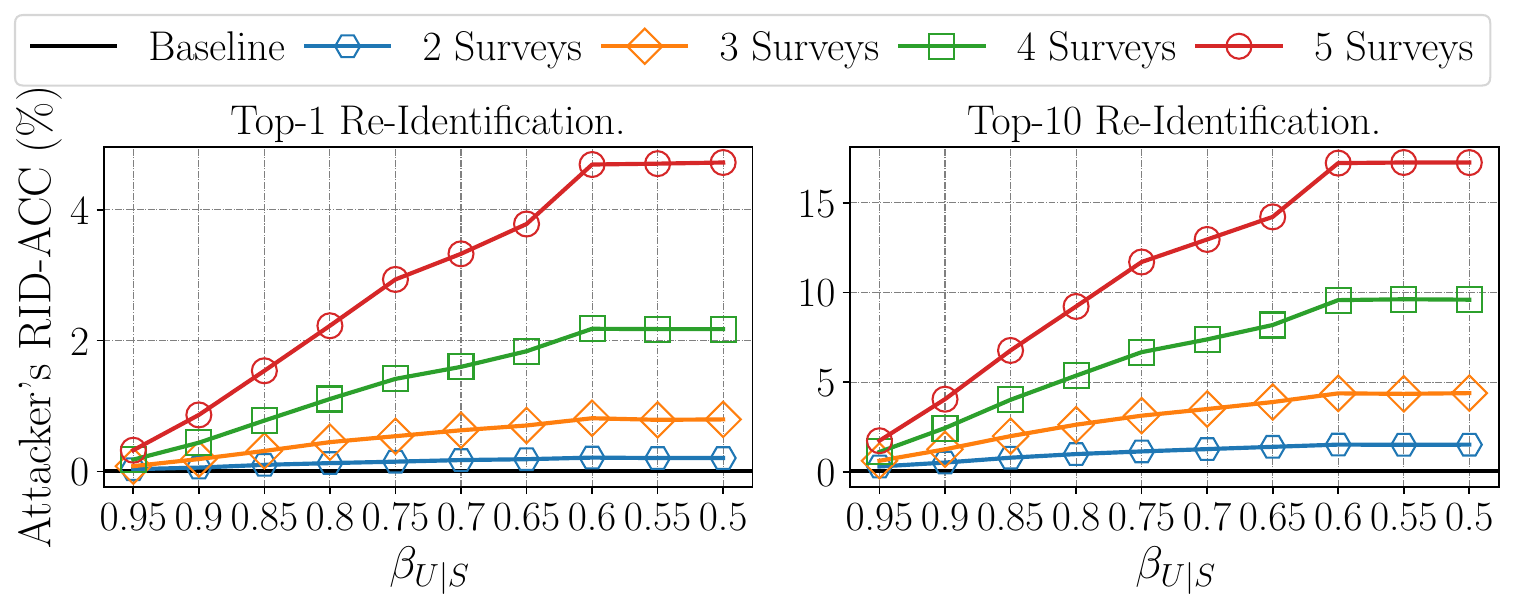}
  \caption{PK-RI risk of the $\omega$-SS~\cite{wang2016mutual,Min2018} protocol.}
\end{subfigure}
\\
\begin{subfigure}{.5\textwidth}
  \centering
  \includegraphics[width=0.95\linewidth]{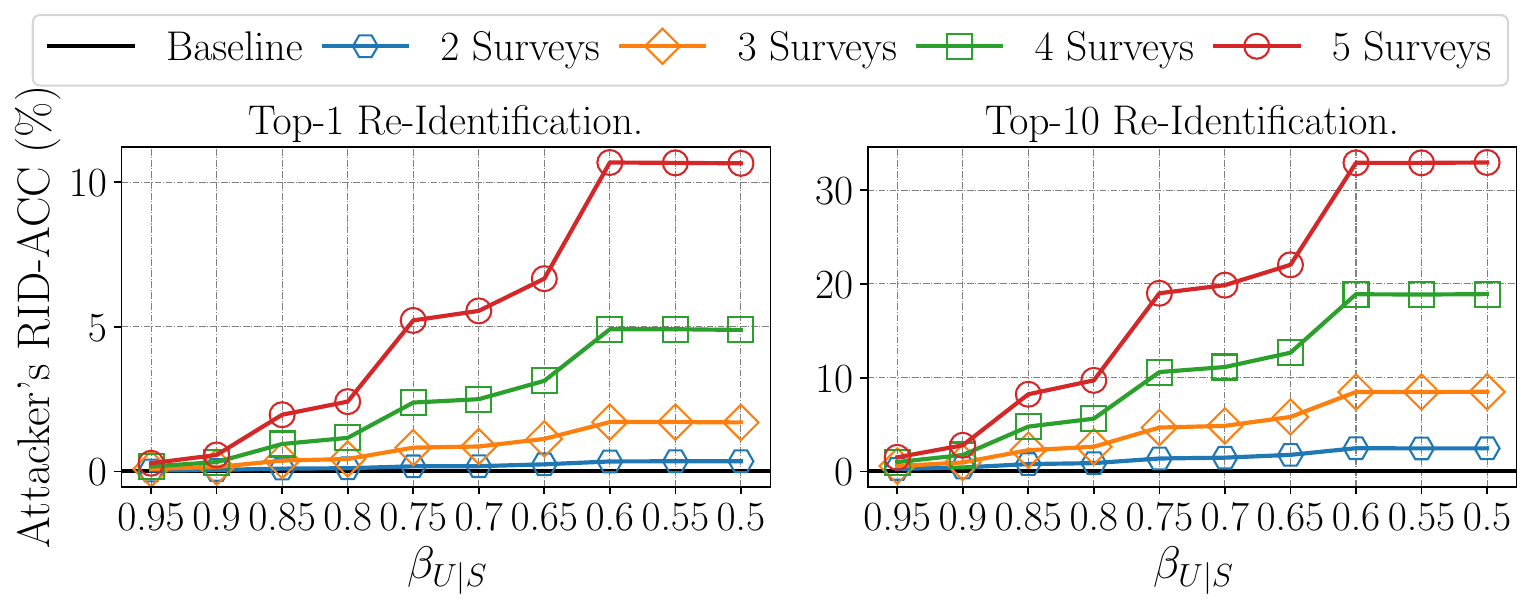}
  \caption{FK-RI risk of the OLH~\cite{tianhao2017} protocol.}
\end{subfigure}%
\begin{subfigure}{.5\textwidth}
  \centering
  \includegraphics[width=0.95\linewidth]{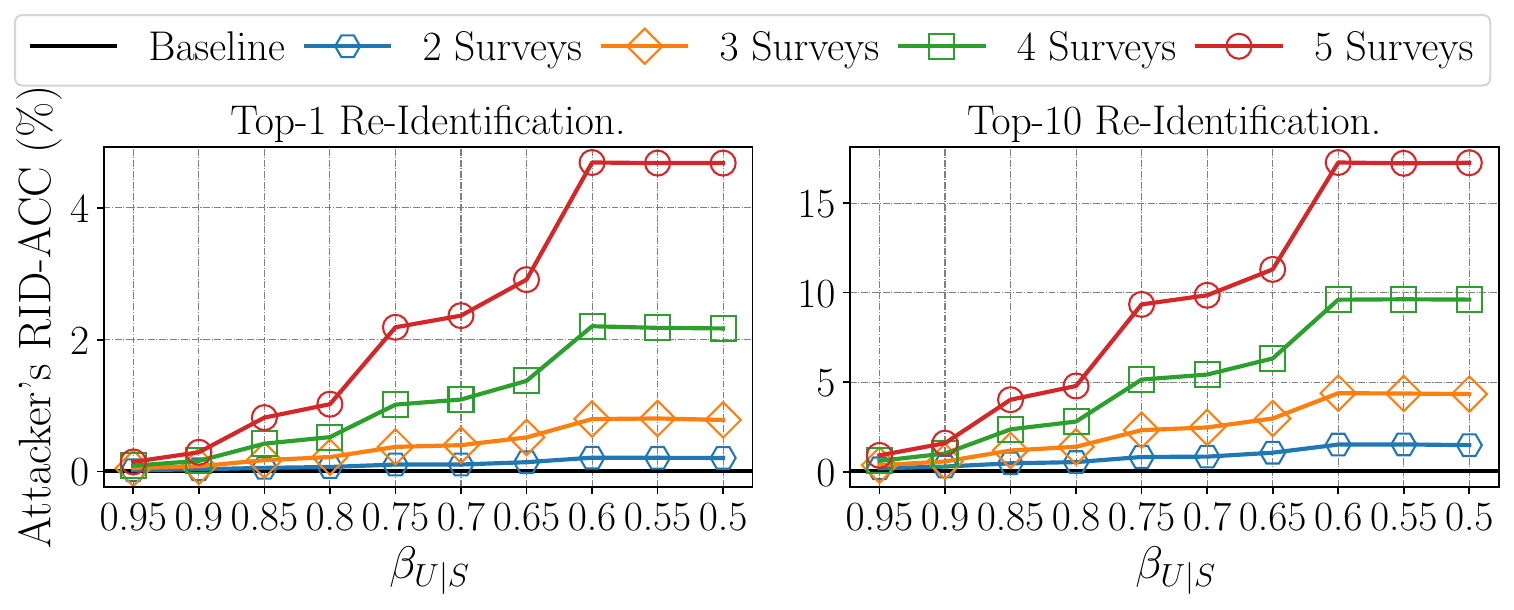}
  \caption{PK-RI risk of the OLH~\cite{tianhao2017} protocol.}
\end{subfigure}
\\
\begin{subfigure}{.5\textwidth}
  \centering
  \includegraphics[width=0.95\linewidth]{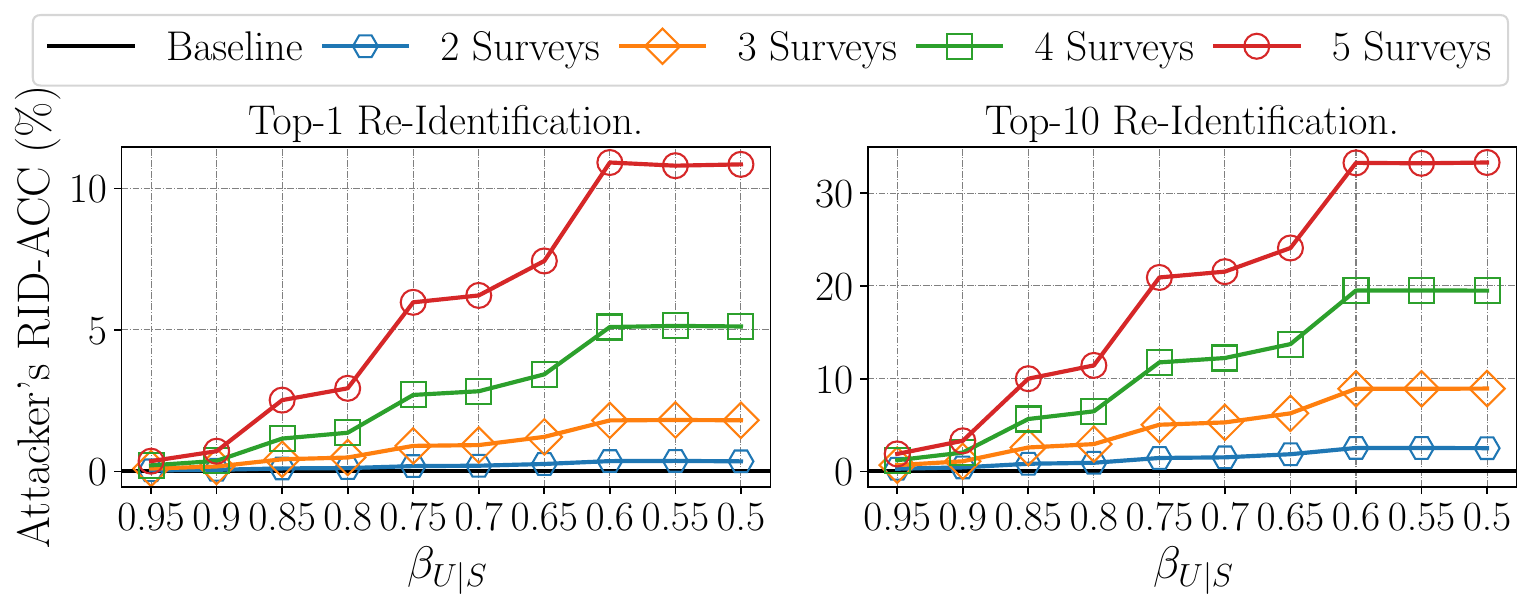}
  \caption{FK-RI risk of the OUE~\cite{tianhao2017} protocol.}
\end{subfigure}%
\begin{subfigure}{.5\textwidth}
  \centering
  \includegraphics[width=0.95\linewidth]{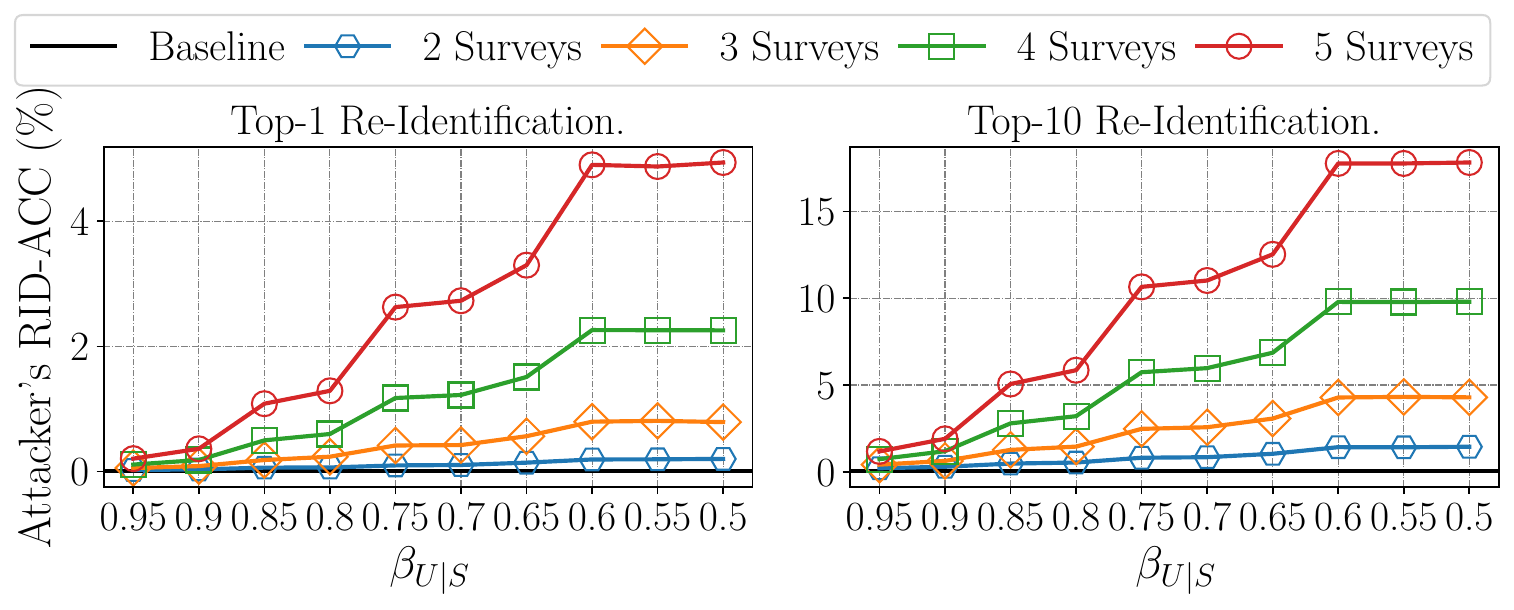}
  \caption{PK-RI risk of the OUE~\cite{tianhao2017} protocol.}
\end{subfigure}
\\
\begin{subfigure}{.5\textwidth}
  \centering
  \includegraphics[width=0.95\linewidth]{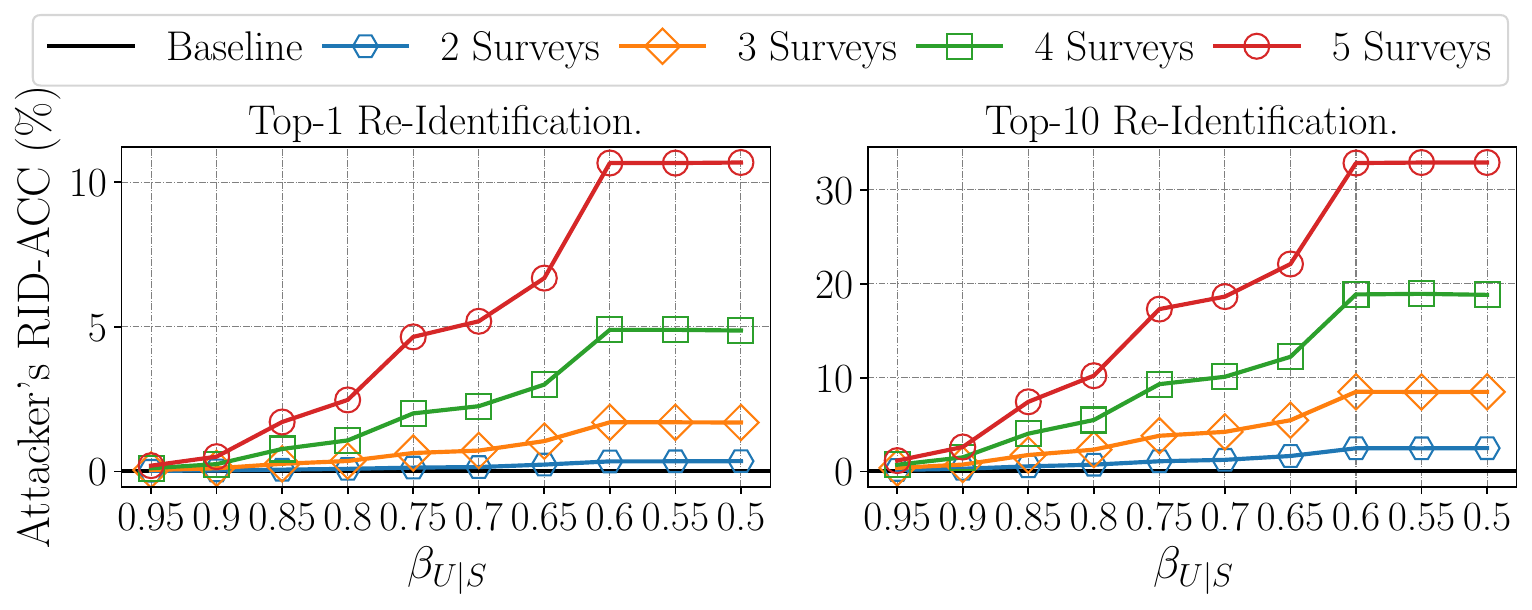}
  \caption{FK-RI risk of the SUE (\emph{a.k.a.} RAPPOR)~\cite{rappor} protocol.}
\end{subfigure}%
\begin{subfigure}{.5\textwidth}
  \centering
  \includegraphics[width=0.95\linewidth]{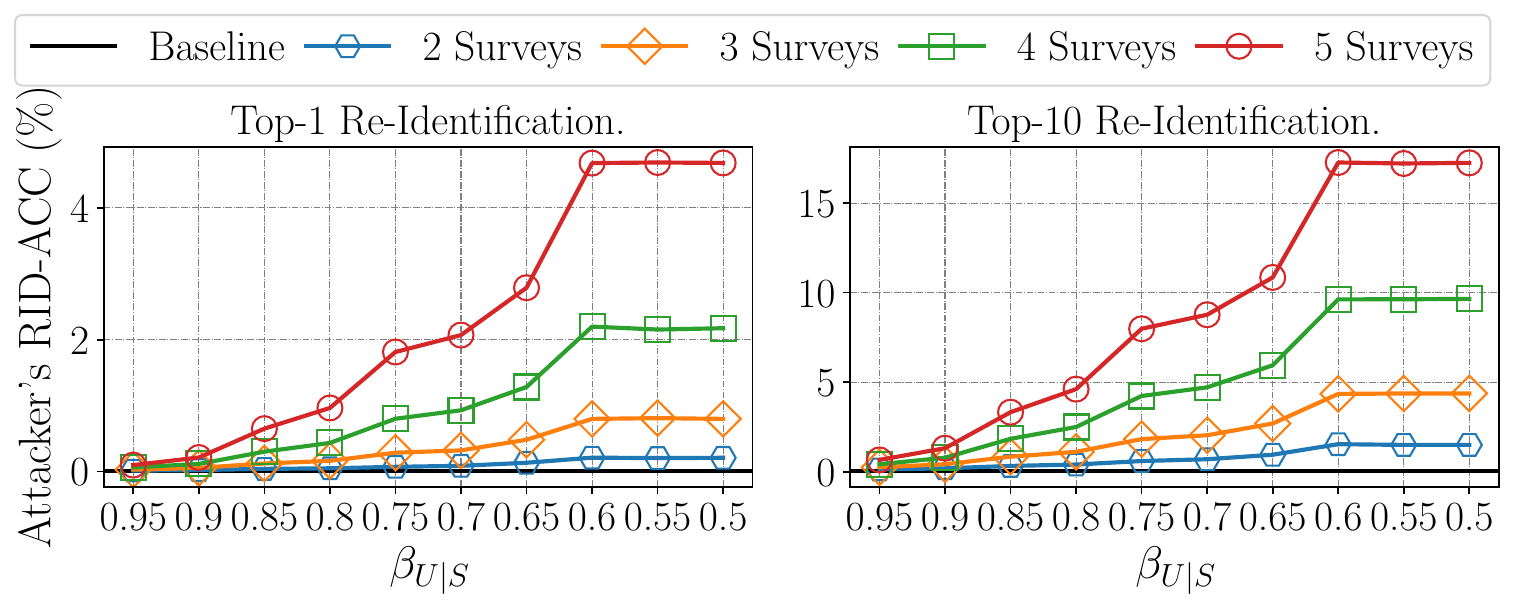}
  \caption{PK-RI risk of the SUE (\emph{a.k.a.} RAPPOR)~\cite{rappor} protocol.}
\end{subfigure}
\caption{Attacker's re-identification accuracy (RID-ACC) on the Adult dataset for top-k re-identification on using the SMP solution, the full knowledge FK-RI model (left-side plots) and partial knowledge PK-RI model (right-side plots) with uniform $\alpha$-PIE privacy metric across users, and by varying the LDP protocol and the number of surveys (i.e., data collections).}
\label{fig:reident_smp_uni_alpha}
\end{figure*}

\begin{figure*}[!ht]
\begin{subfigure}{.5\textwidth}
  \centering
  \includegraphics[width=0.95\linewidth]{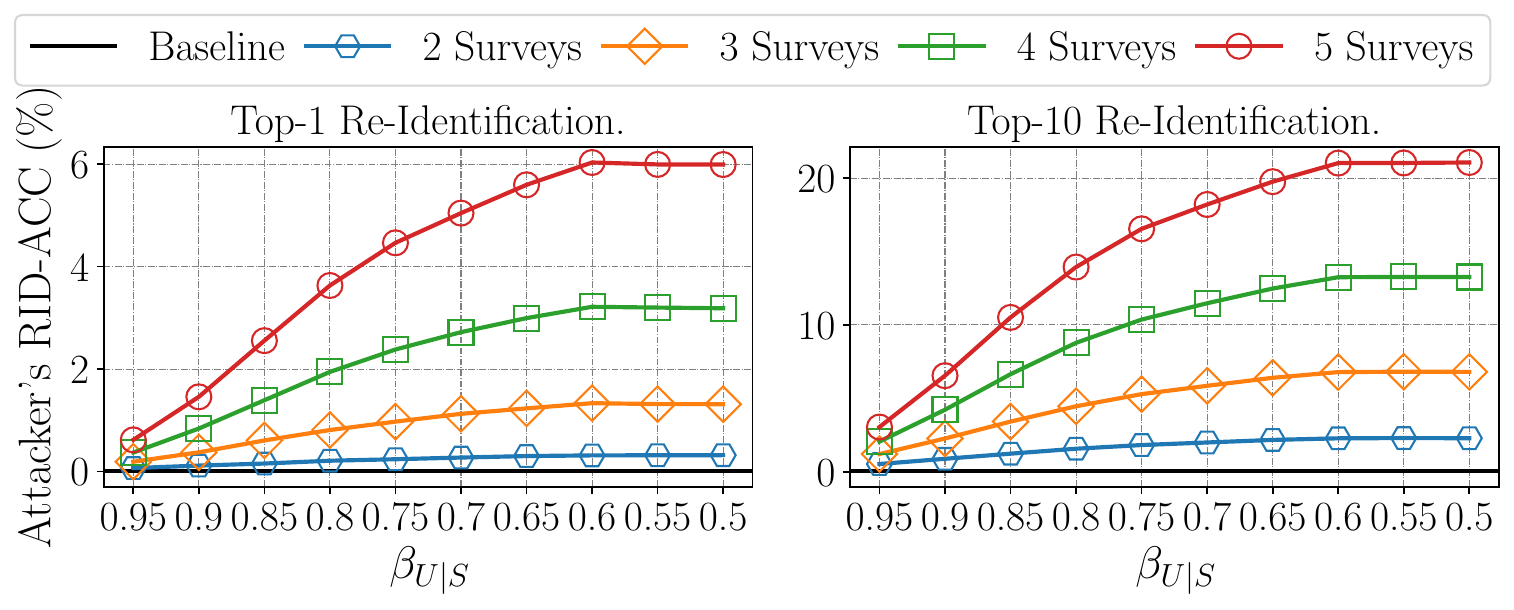}
  \caption{FK-RI risk of the GRR~\cite{kairouz2016discrete,kairouz2016extremal} protocol.}
\end{subfigure}%
\begin{subfigure}{.5\textwidth}
  \centering
  \includegraphics[width=0.95\linewidth]{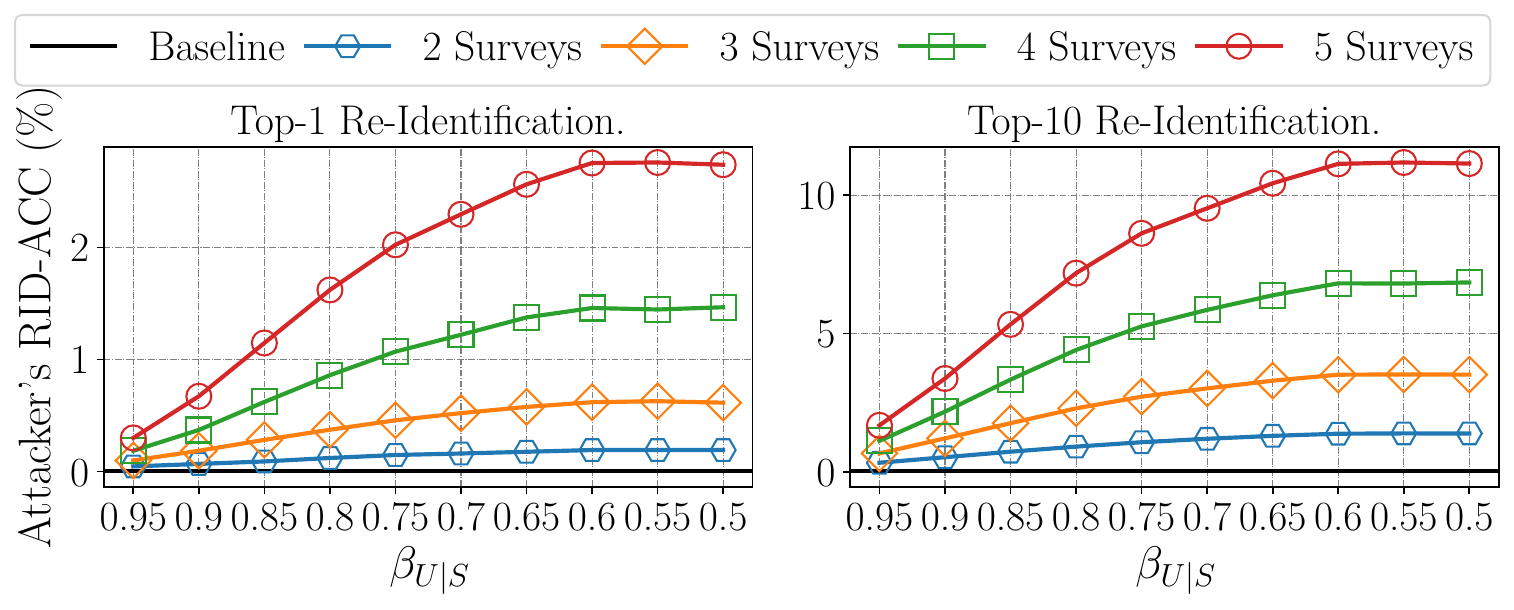}
  \caption{PK-RI risk of the GRR~\cite{kairouz2016discrete,kairouz2016extremal} protocol.}
\end{subfigure}
\\
\begin{subfigure}{.5\textwidth}
  \centering
  \includegraphics[width=0.95\linewidth]{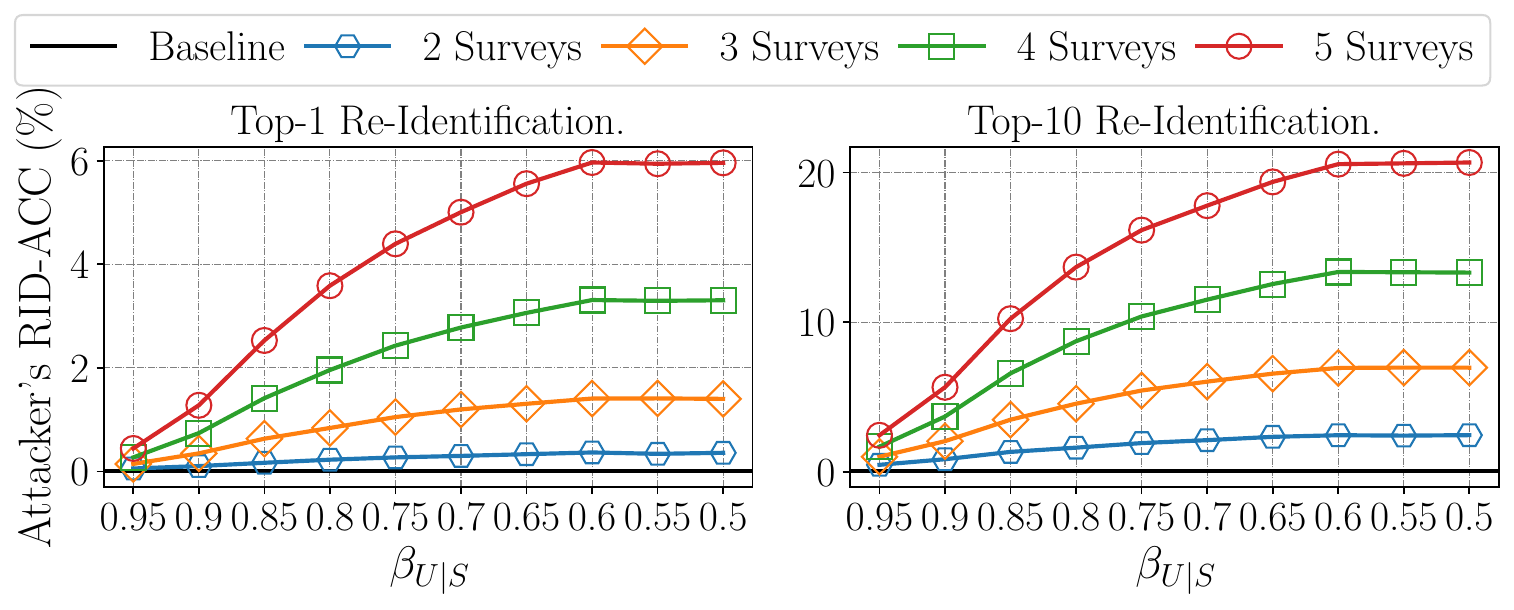}
  \caption{FK-RI risk of the $\omega$-SS~\cite{wang2016mutual,Min2018} protocol.}
\end{subfigure}%
\begin{subfigure}{.5\textwidth}
  \centering
  \includegraphics[width=0.95\linewidth]{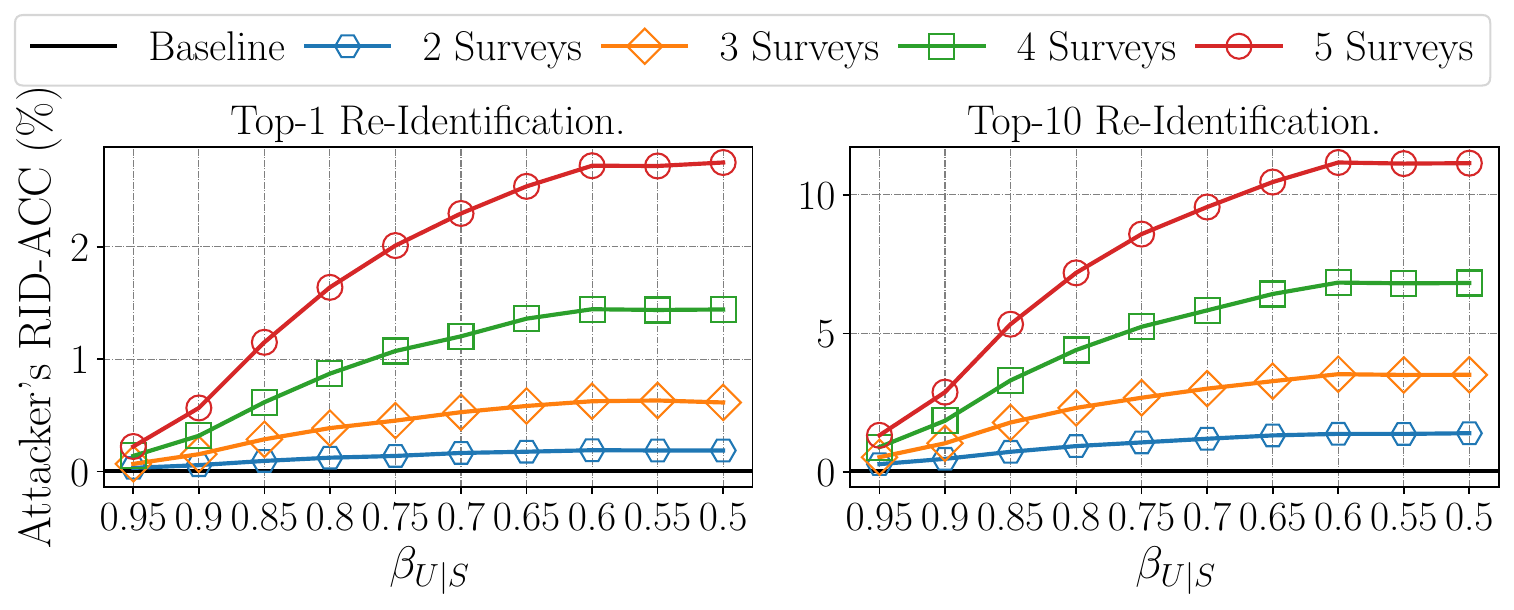}
  \caption{PK-RI risk of the $\omega$-SS~\cite{wang2016mutual,Min2018} protocol.}
\end{subfigure}
\\
\begin{subfigure}{.5\textwidth}
  \centering
  \includegraphics[width=0.95\linewidth]{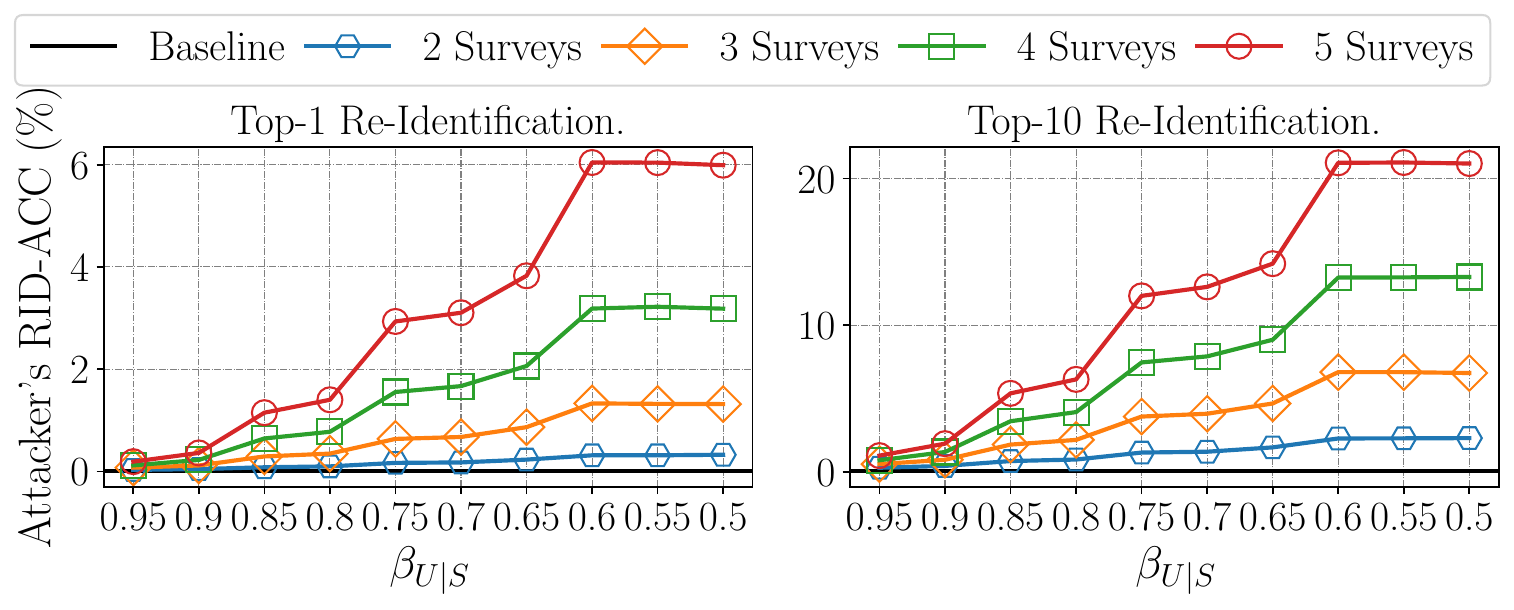}
  \caption{FK-RI risk of the OLH~\cite{tianhao2017} protocol.}
\end{subfigure}%
\begin{subfigure}{.5\textwidth}
  \centering
  \includegraphics[width=0.95\linewidth]{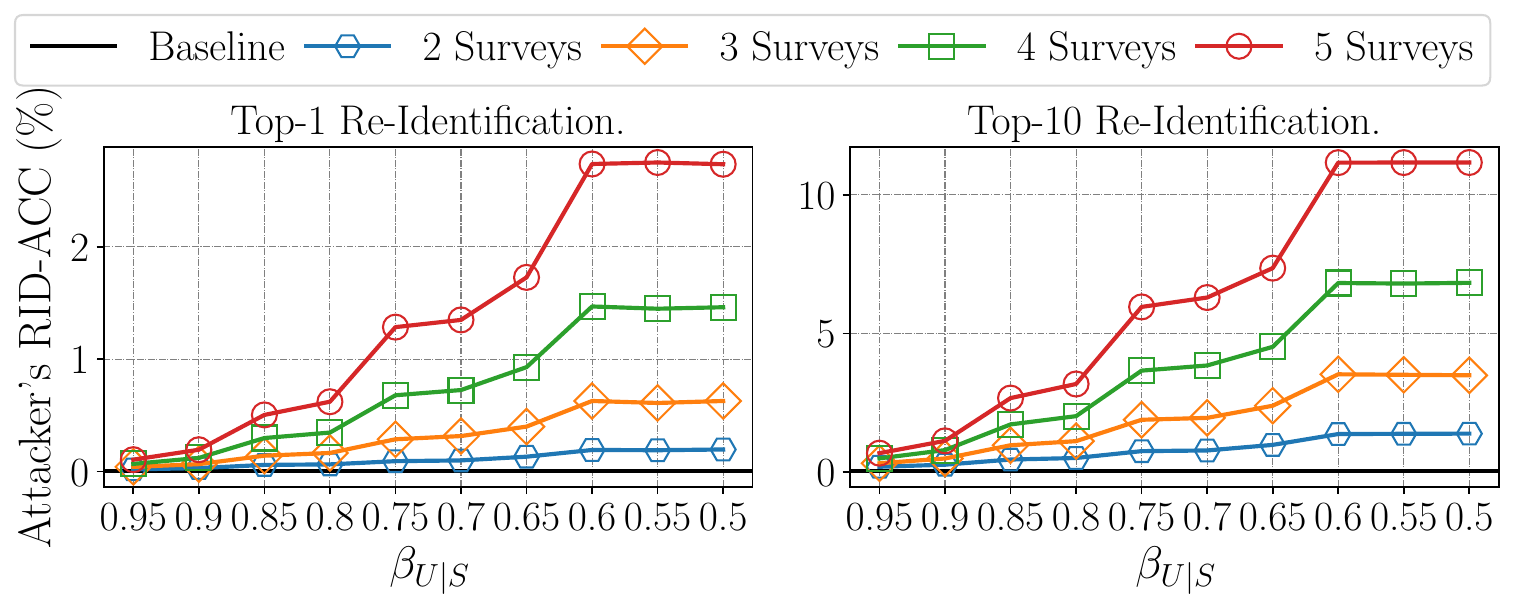}
  \caption{PK-RI risk of the OLH~\cite{tianhao2017} protocol.}
\end{subfigure}
\\
\begin{subfigure}{.5\textwidth}
  \centering
  \includegraphics[width=0.95\linewidth]{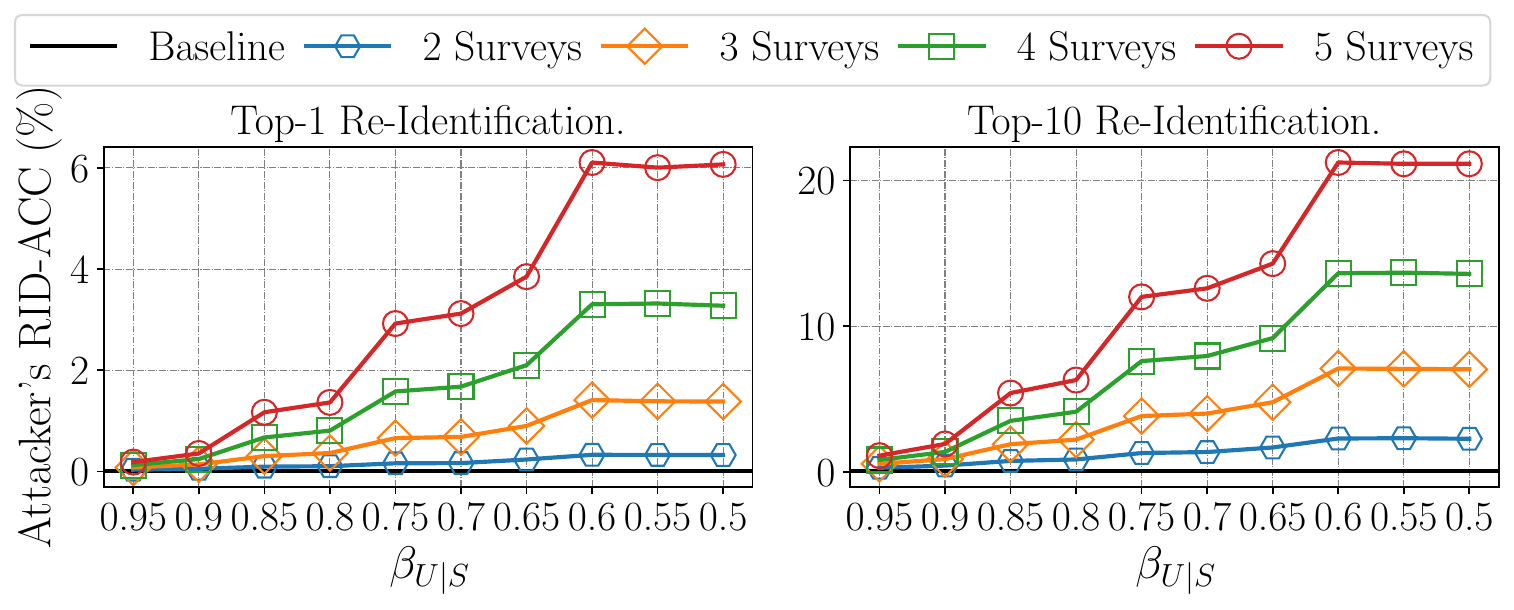}
  \caption{FK-RI risk of the OUE~\cite{tianhao2017} protocol.}
\end{subfigure}%
\begin{subfigure}{.5\textwidth}
  \centering
  \includegraphics[width=0.95\linewidth]{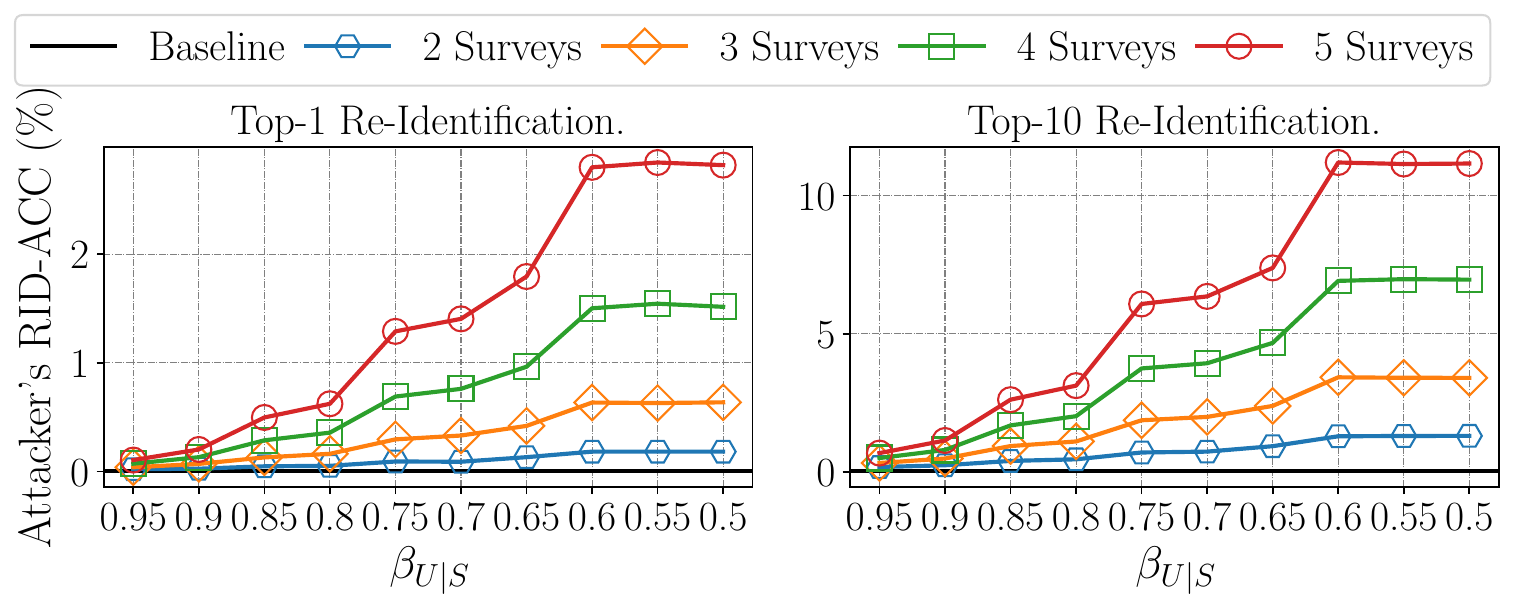}
  \caption{PK-RI risk of the OUE~\cite{tianhao2017} protocol.}
\end{subfigure}
\\
\begin{subfigure}{.5\textwidth}
  \centering
  \includegraphics[width=0.95\linewidth]{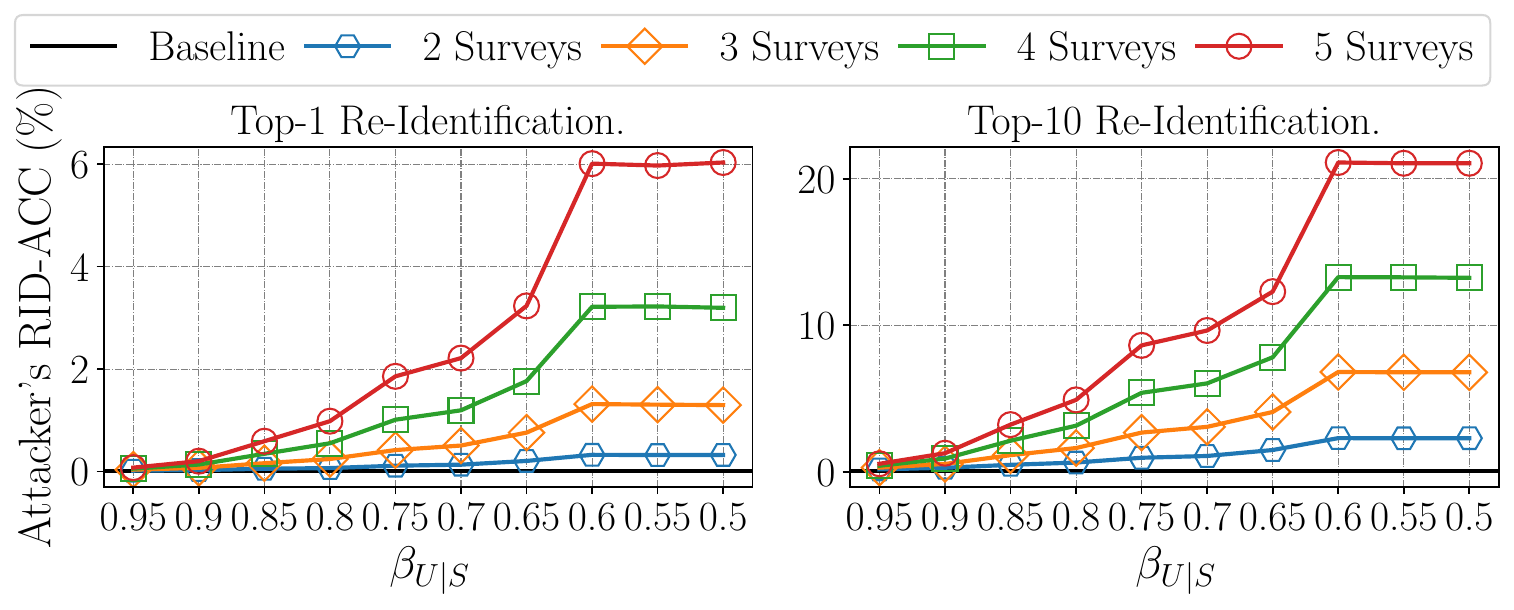}
  \caption{FK-RI risk of the SUE (\emph{a.k.a.} RAPPOR)~\cite{rappor} protocol.}
\end{subfigure}%
\begin{subfigure}{.5\textwidth}
  \centering
  \includegraphics[width=0.95\linewidth]{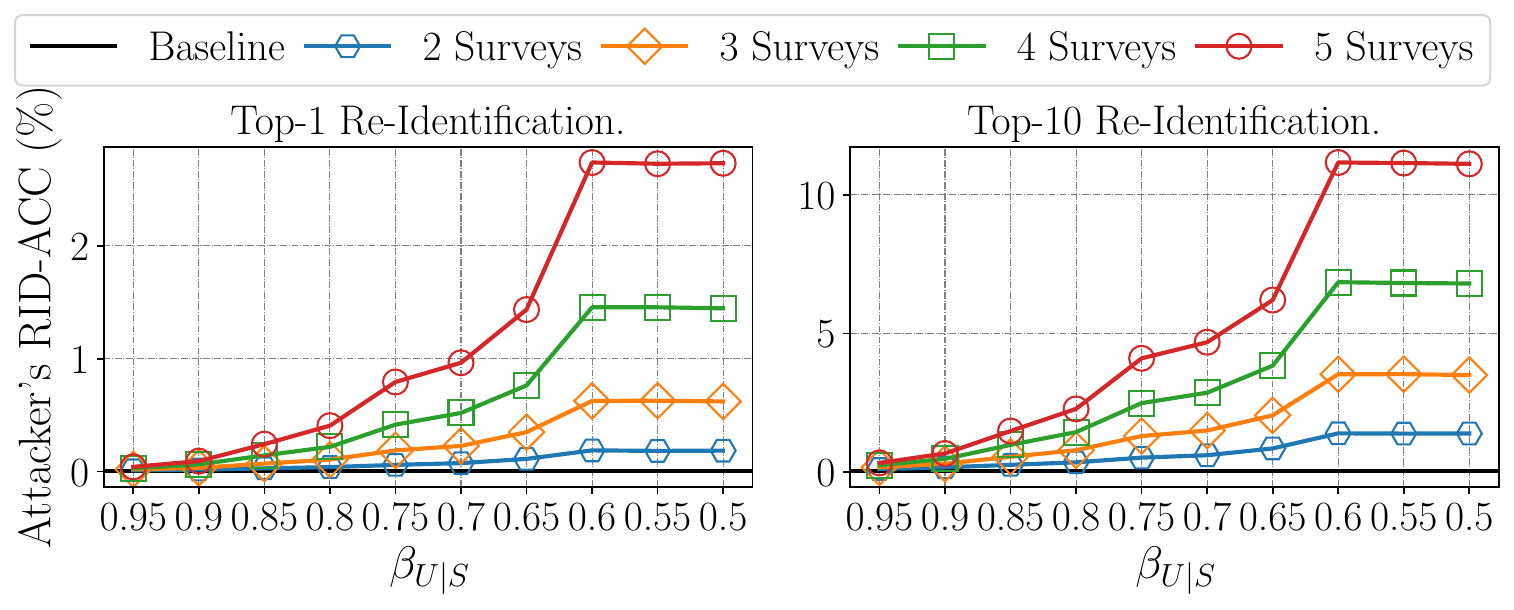}
  \caption{PK-RI risk of the SUE (\emph{a.k.a.} RAPPOR)~\cite{rappor} protocol.}
\end{subfigure}
\caption{Attacker's re-identification accuracy (RID-ACC) on the Adult dataset for top-k re-identification on using the SMP solution, the full knowledge FK-RI model (left-side plots) and partial knowledge PK-RI model (right-side plots) with non-uniform $\alpha$-PIE privacy metric across users, and by varying the LDP protocol and the number of surveys (i.e., data collections).}
\label{fig:reident_smp_nonuni_alpha}
\end{figure*}

\section{Additional Results for Section~\ref{sub:results_att_inf_rspfd}} \label{appD:add_RSpFD}

This section provides additional results for the inference of the sampled attribute on collecting multidimensional data with the RS+FD~\cite{Arcolezi2021_rs_fd} solution. 
We use the state-of-the-art XGBoost~\cite{XGBoost} algorithm to predict the sampled attribute of users in a multiclass classification framework (i.e., $d$ attributes) with default parameters. 
We follow the experimental evaluation described in Section~\ref{sub:results_att_inf_rspfd} and we vary the following:

\begin{itemize}

    \item \textbf{Dataset.} We use the Adult ($d=10$ attributes, $n=45,222$ and $\textbf{k}=[74, 7, 16, 7, 14, 6, 5, 2, 41, 2]$) and Nursery ($d=9$ attributes, $n=12,959$ and $\textbf{k}=[3, 5, 4, 4, 3, 2, 3, 3, 5]$) datasets from the UCI ML repository~\cite{uci}.
    
    \item \textbf{LDP protocol within RS+FD.} All protocols from Section~\ref{sub:rspfd_sol}, namely, RS+FD[GRR], RS+FD[SUE-z], RS+FD[SUE-r], RS+FD[OUE-z] and RS+FD[OUE-r]. 
    
    \item \textbf{Attribute inference model.} All five protocols are evaluated with the three attack models of Section~\ref{sub:atk_models_rspfd}, namely, No Knowledge (NK), Partial-Knowledge (PK) and Hybrid Model (HM).

\end{itemize}

Figs.~\ref{fig:attack_rspfd_adult} and~\ref{fig:attack_rspfd_nursery} illustrates the attacker's attribute inference accuracy (AIF-ACC) metric on the Adult and Nursery datasets, respectively, with the three attack models (i.e., NK, PK and HM) and all five protocols (i.e., RS+FD[GRR], RS+FD[SUE-z], RS+FD[OUE-z], RS+FD[SUE-r] and RS+FD[OUE-r]), varying $\epsilon$, the number of synthetic profiles $s$ and the number of compromised profiles $n_{pk}$. 

Similar to the results with the ACSEmployement~\cite{ding2021retiring} dataset in Fig.~\ref{fig:attack_rspfd}, one can notice in Figs.~\ref{fig:attack_rspfd_adult} and~\ref{fig:attack_rspfd_nursery} that the proposed attack models, namely, NK, PK and HM present significant increments in the attacker's AIF-ACC over the Baseline model. More precisely, with the Adult dataset, there is about a 1.3-10 fold increment over a random Baseline model with our NK, PK and HM models. On the one hand, there is about a 0.1-10 fold increment with the Nursery dataset. More precisely, the attack models with both RS+FD[GRR] and RS+FD[UE-r] protocols did not provide a meaningful increment over the Baseline model in the Nursery dataset. The reason behind this is that the attributes follow uniform-like distributions. Thus, since fake data are also generated uniformly at random with the RS+FD solution, the classifier is not able to distinguish between real and fake data when predicting the sampled attribute. Yet, the attacker's AIF-ACC also achieves about $100\%$ with RS+FD[SUE-z] with all three datasets (see Figs.~\ref{fig:attack_rspfd} and.~\ref{fig:attack_rspfd_adult}). Lastly, increasing the number of synthetic profiles the attacker generates $s$ and/or the number of compromised profiles the attacker has access to $n_{pk}$, had few influence with the Adult dataset in Fig.~\ref{fig:attack_rspfd_adult}. Conversely, both ACSEmployement (Fig.~\ref{fig:attack_rspfd}) and Nursery (Fig.~\ref{fig:attack_rspfd_nursery}) datasets showed sensitivity to a change in both parameters, especially with $n_{pk}$ in the PK model.

\begin{figure*}
\begin{subfigure}{.33\textwidth}
  \centering
  \includegraphics[width=1\linewidth]{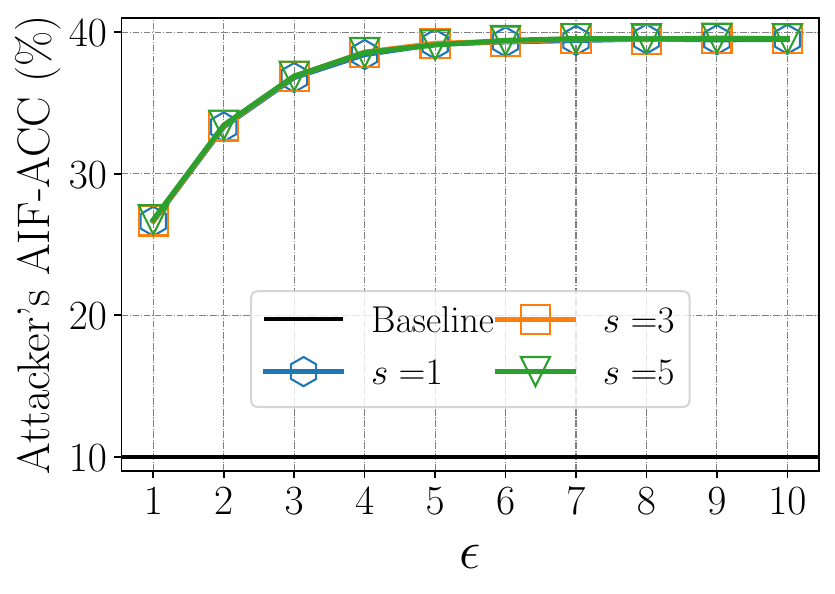}
  \caption{NK model with RS+FD[GRR] protocol.}
\end{subfigure}%
\begin{subfigure}{.33\textwidth}
  \centering
  \includegraphics[width=1\linewidth]{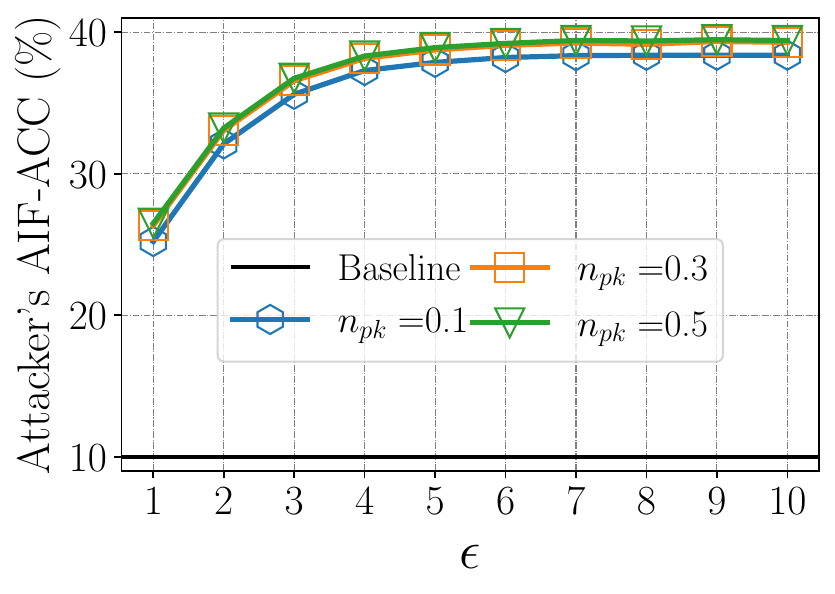}
  \caption{PK model with RS+FD[GRR] protocol.}
\end{subfigure}
\begin{subfigure}{.33\textwidth}
  \centering
  \includegraphics[width=1\linewidth]{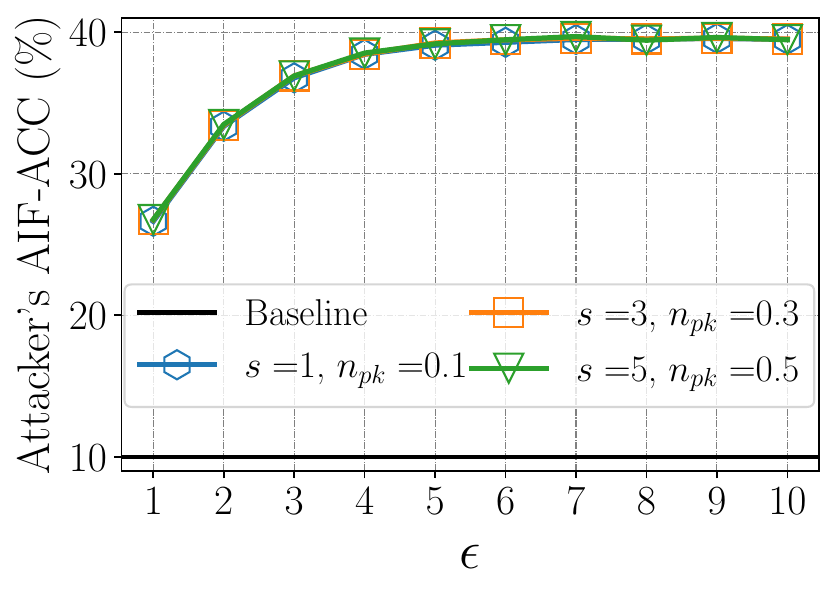}
  \caption{Hybrid model with RS+FD[GRR] protocol.}
\end{subfigure}
\\
\begin{subfigure}{.33\textwidth}
  \centering
  \includegraphics[width=1\linewidth]{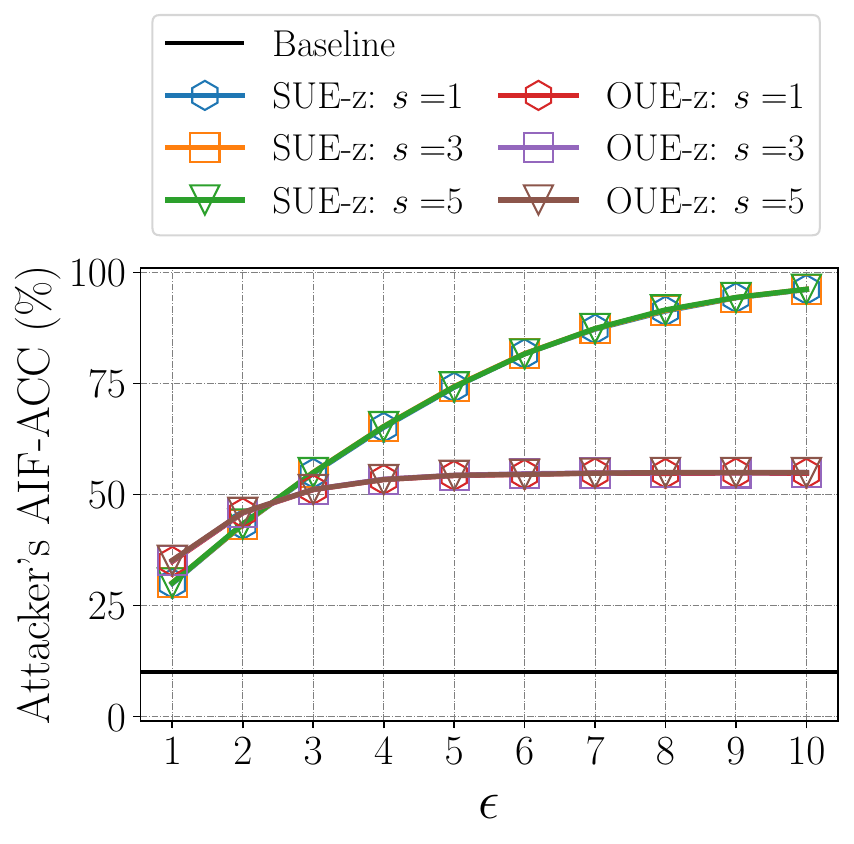}
  \caption{NK model with RS+FD[UE-z] protocols.}
\end{subfigure}%
\begin{subfigure}{.33\textwidth}
  \centering
  \includegraphics[width=1\linewidth]{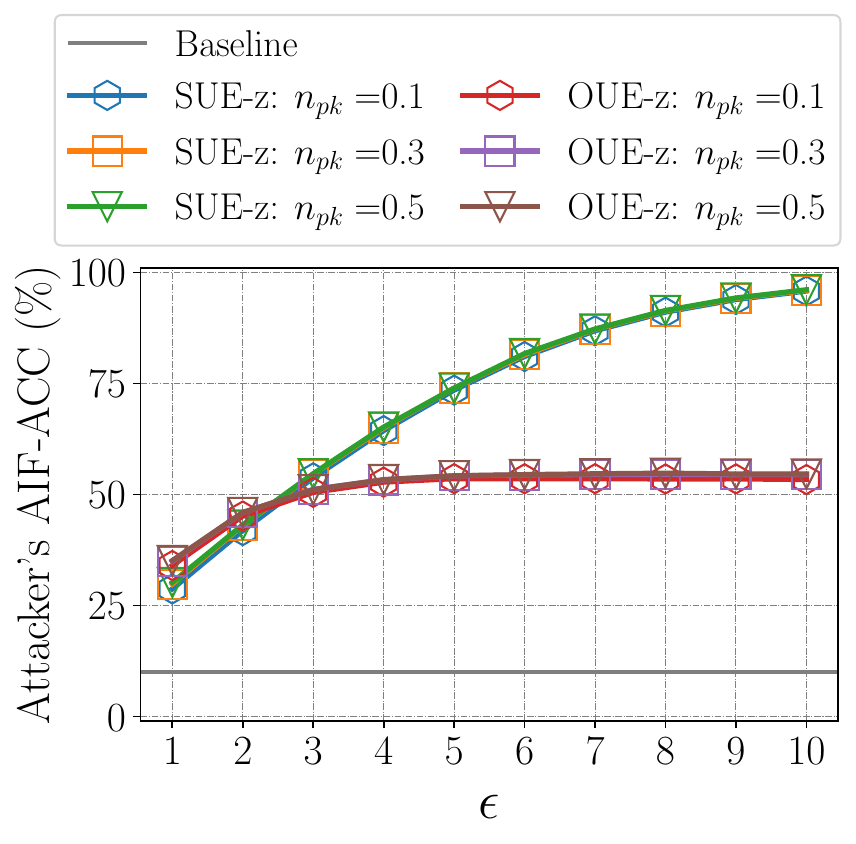}
  \caption{PK model with RS+FD[UE-z] protocols.}
\end{subfigure}
\begin{subfigure}{.33\textwidth}
  \centering
  \includegraphics[width=1\linewidth]{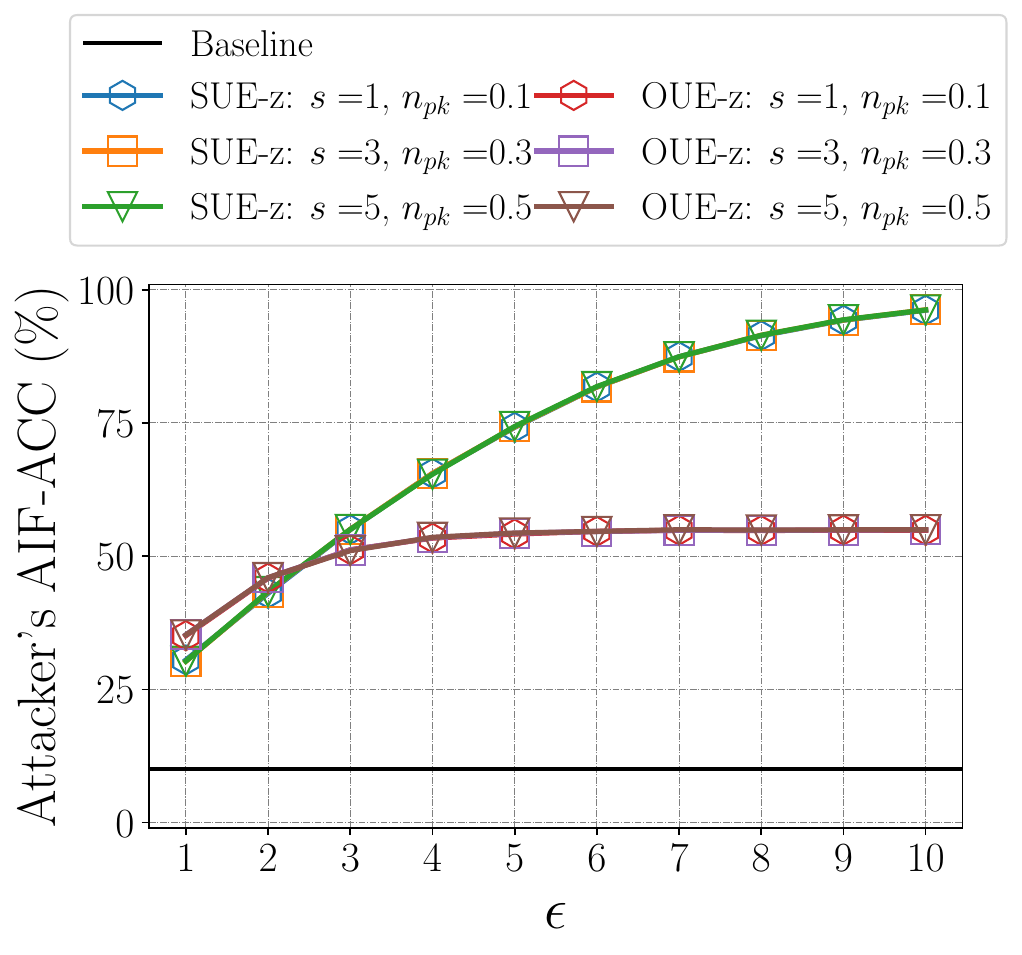}
  \caption{Hybrid model with RS+FD[UE-z] protocols.}
\end{subfigure}
\\
\begin{subfigure}{.33\textwidth}
  \centering
  \includegraphics[width=1\linewidth]{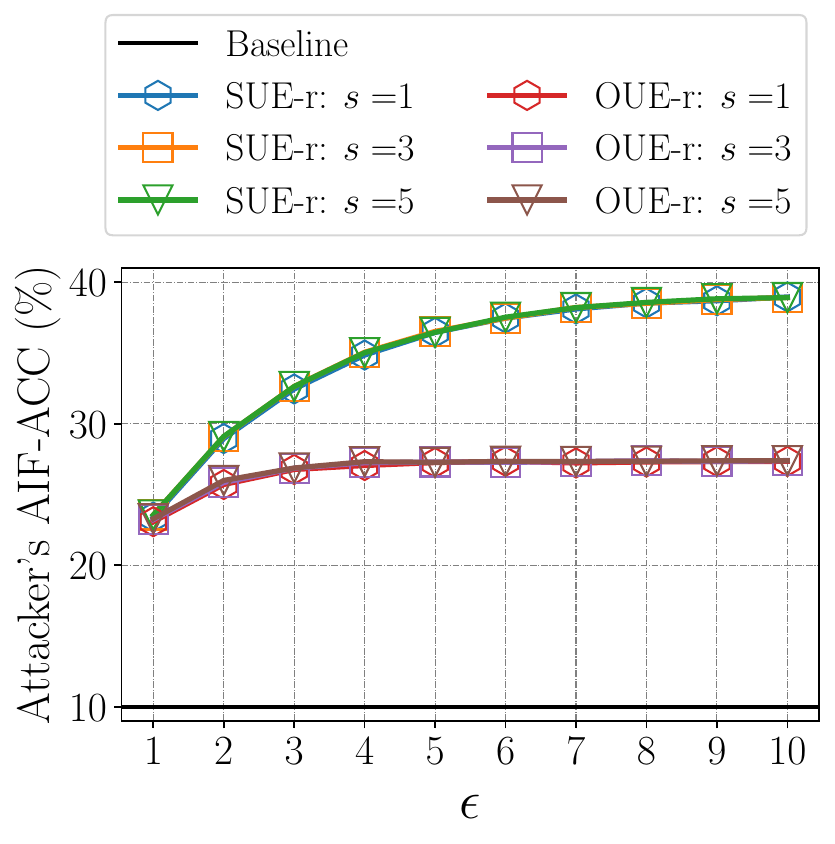}
  \caption{NK model with RS+FD[UE-r] protocols.}
\end{subfigure}%
\begin{subfigure}{.33\textwidth}
  \centering
  \includegraphics[width=1\linewidth]{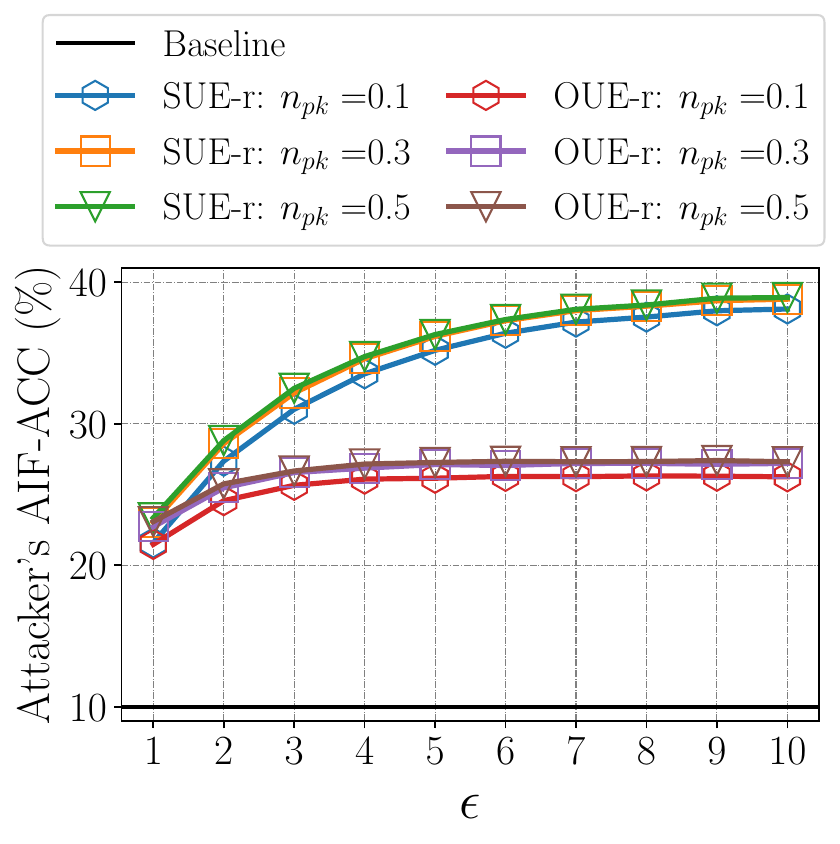}
  \caption{PK model with RS+FD[UE-r] protocols.}
\end{subfigure}
\begin{subfigure}{.33\textwidth}
  \centering
  \includegraphics[width=1\linewidth]{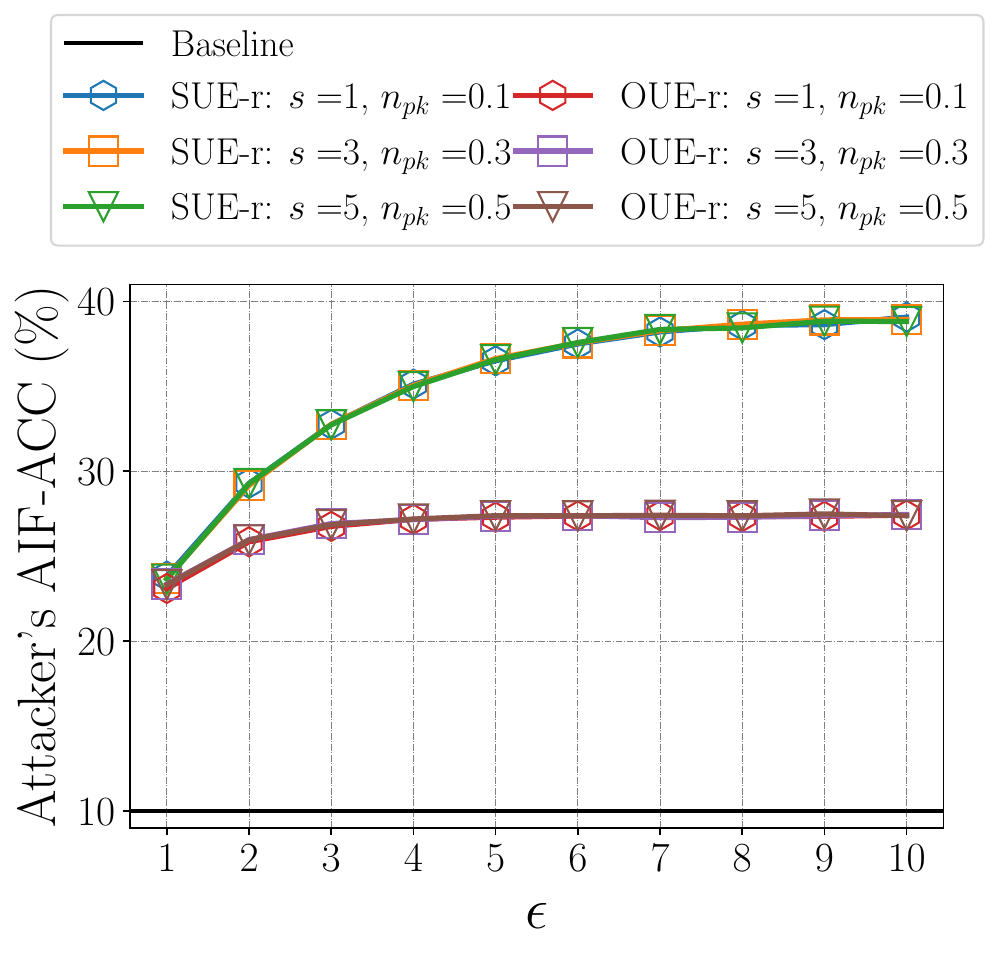}
  \caption{Hybrid model with RS+FD[UE-r] protocols.}
\end{subfigure}
\caption{Attacker's attribute inference accuracy (AIF-ACC) on the Adult dataset with three attack models (i.e., NK, PK and hybrid) and five protocols (i.e., RS+FD[GRR], RS+FD[SUE-z], RS+FD[OUE-z], RS+FD[SUE-r] and RS+FD[OUE-r]), varying $\epsilon$, the number of synthetic profiles $s$ the attacker generates and the number of compromised profiles $n_{pk}$ the attacker has access to.}
\label{fig:attack_rspfd_adult}
\end{figure*}

\begin{figure*}
\begin{subfigure}{.33\textwidth}
  \centering
  \includegraphics[width=1\linewidth]{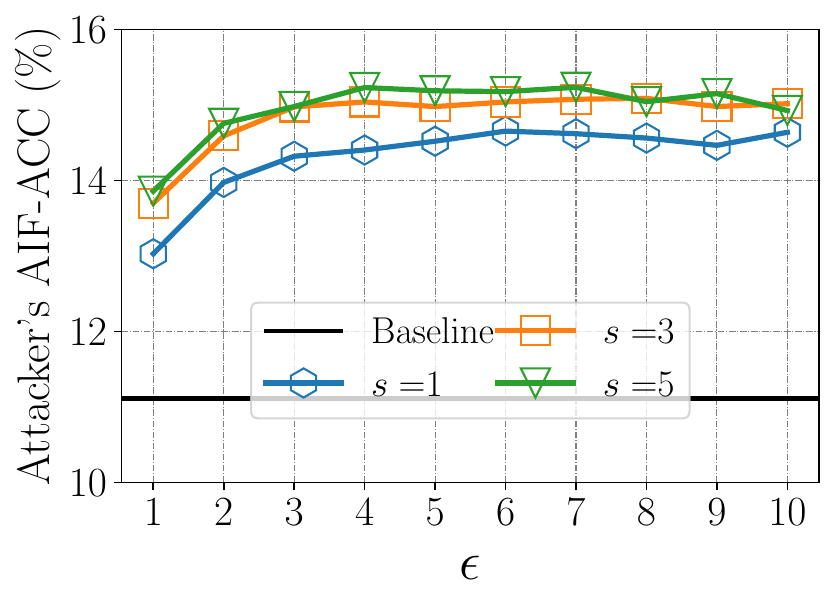}
  \caption{NK model with RS+FD[GRR] protocol.}
\end{subfigure}%
\begin{subfigure}{.33\textwidth}
  \centering
  \includegraphics[width=1\linewidth]{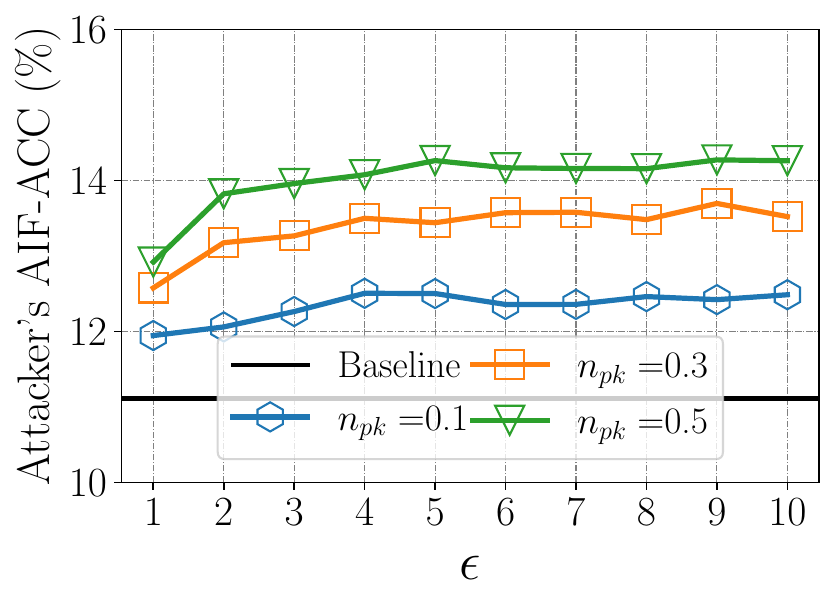}
  \caption{PK model with RS+FD[GRR] protocol.}
\end{subfigure}
\begin{subfigure}{.33\textwidth}
  \centering
  \includegraphics[width=1\linewidth]{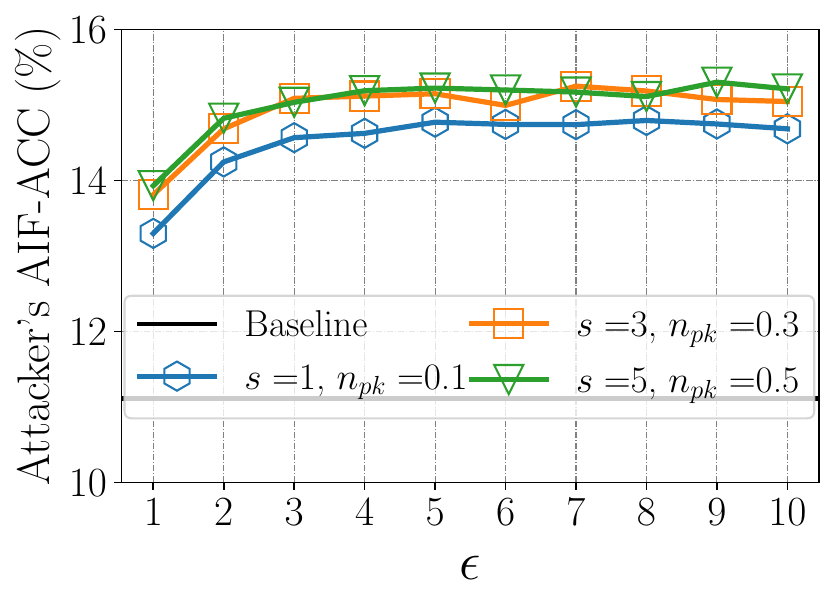}
  \caption{Hybrid model with RS+FD[GRR] protocol.}
\end{subfigure}
\\
\begin{subfigure}{.33\textwidth}
  \centering
  \includegraphics[width=1\linewidth]{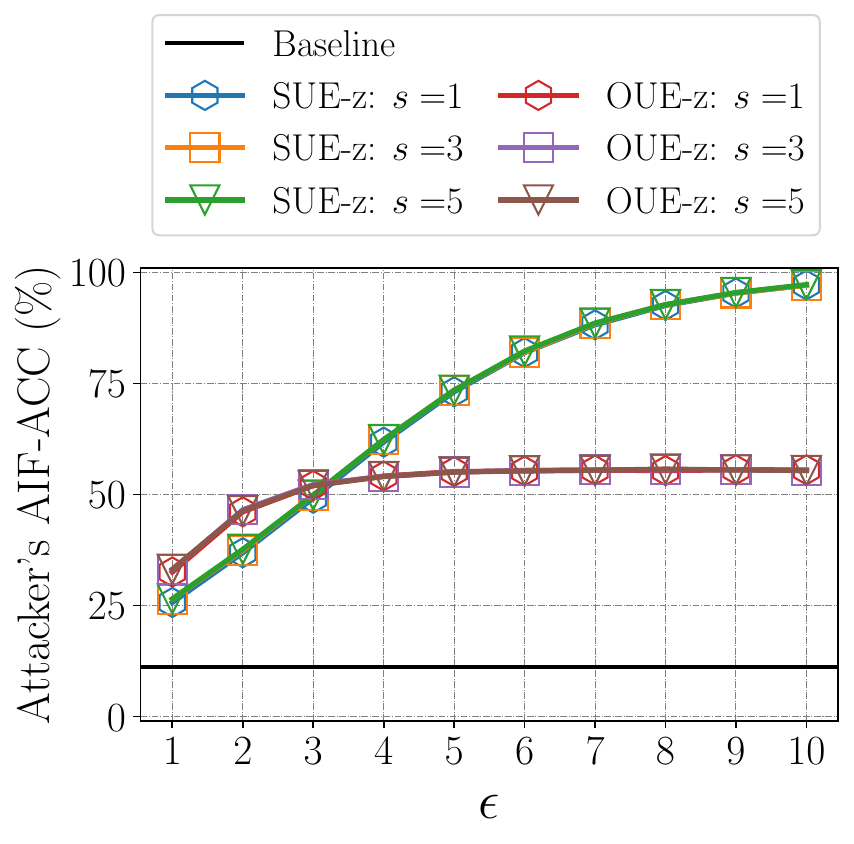}
  \caption{NK model with RS+FD[UE-z] protocols.}
\end{subfigure}%
\begin{subfigure}{.33\textwidth}
  \centering
  \includegraphics[width=1\linewidth]{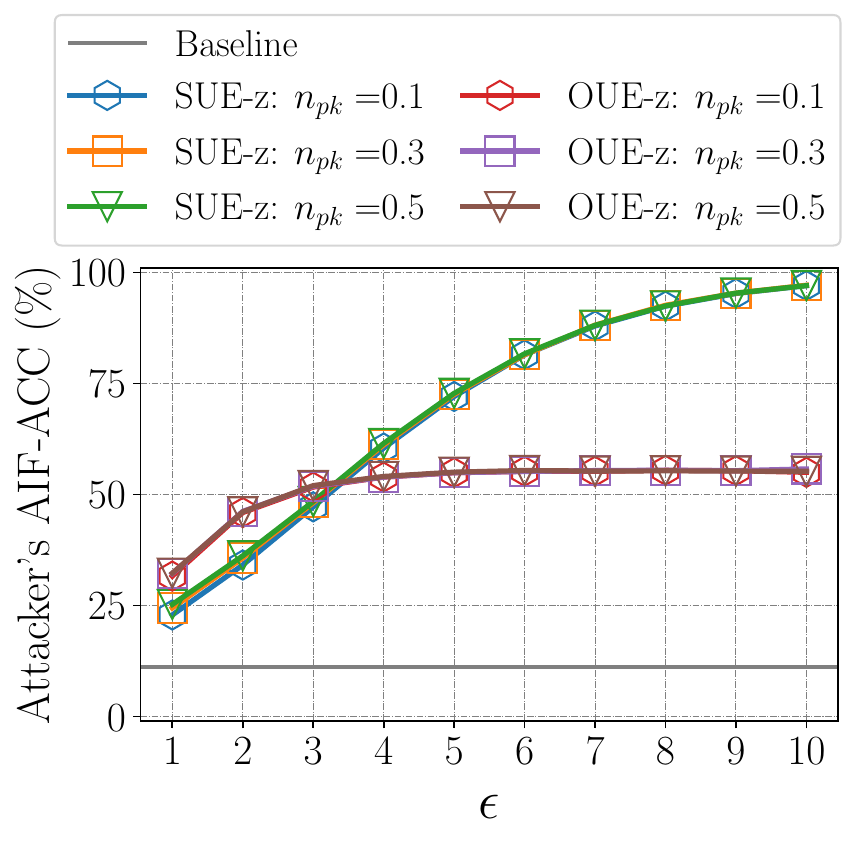}
  \caption{PK model with RS+FD[UE-z] protocols.}
\end{subfigure}
\begin{subfigure}{.33\textwidth}
  \centering
  \includegraphics[width=1\linewidth]{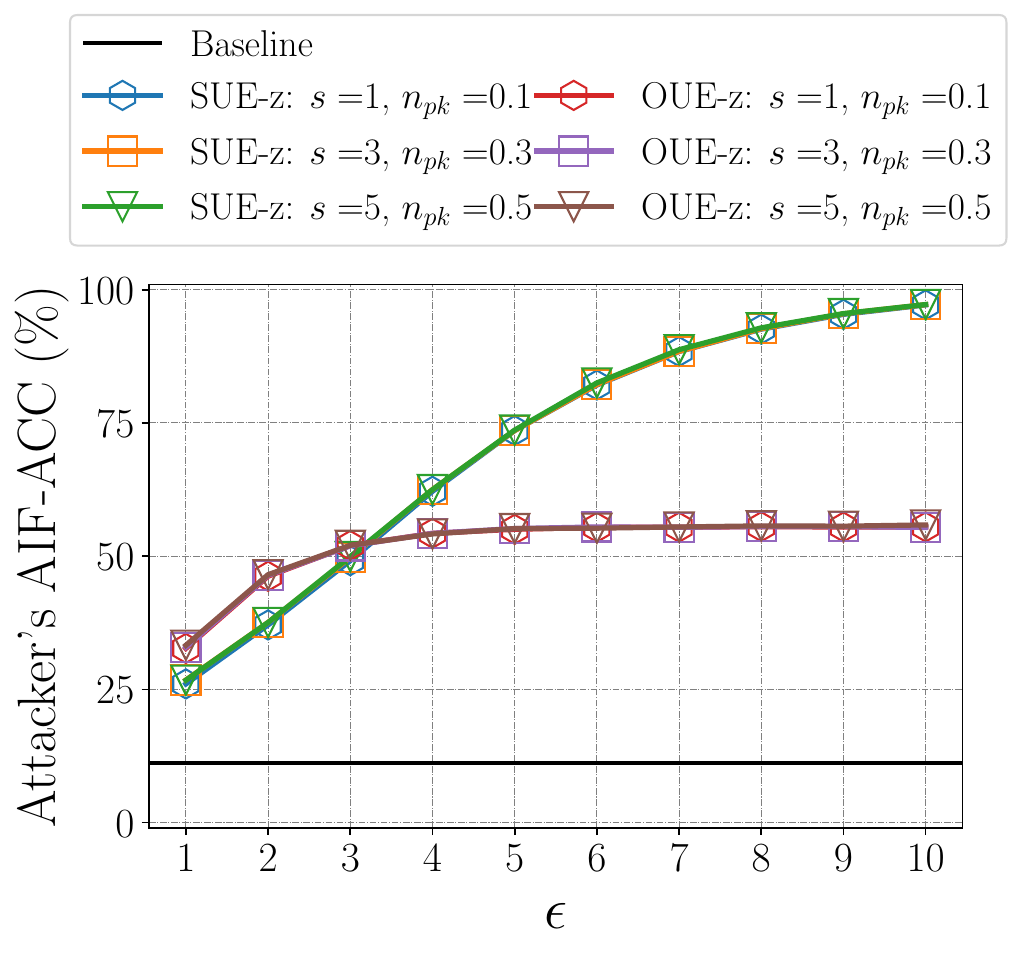}
  \caption{Hybrid model with RS+FD[UE-z] protocols.}
\end{subfigure}
\\
\begin{subfigure}{.33\textwidth}
  \centering
  \includegraphics[width=1\linewidth]{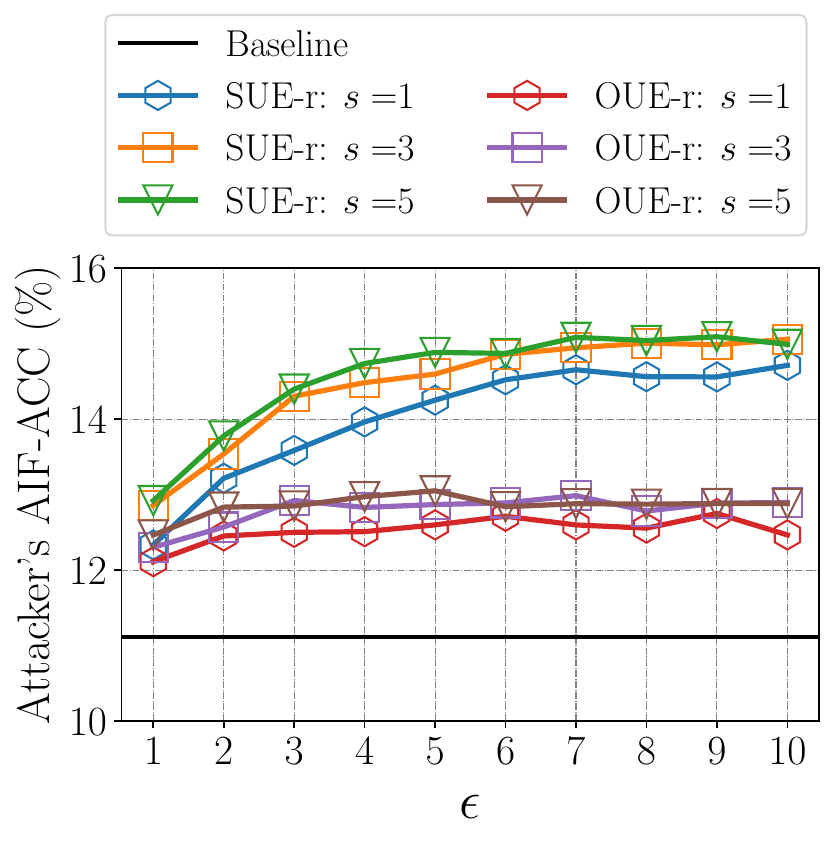}
  \caption{NK model with RS+FD[UE-r] protocols.}
\end{subfigure}%
\begin{subfigure}{.33\textwidth}
  \centering
  \includegraphics[width=1\linewidth]{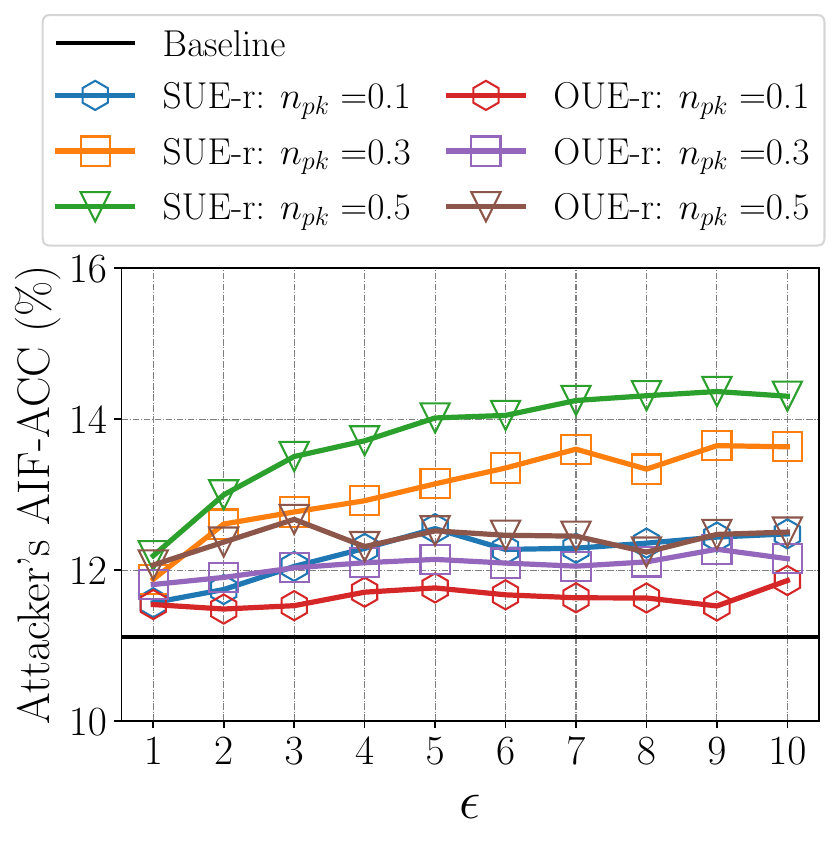}
  \caption{PK model with RS+FD[UE-r] protocols.}
\end{subfigure}
\begin{subfigure}{.33\textwidth}
  \centering
  \includegraphics[width=1\linewidth]{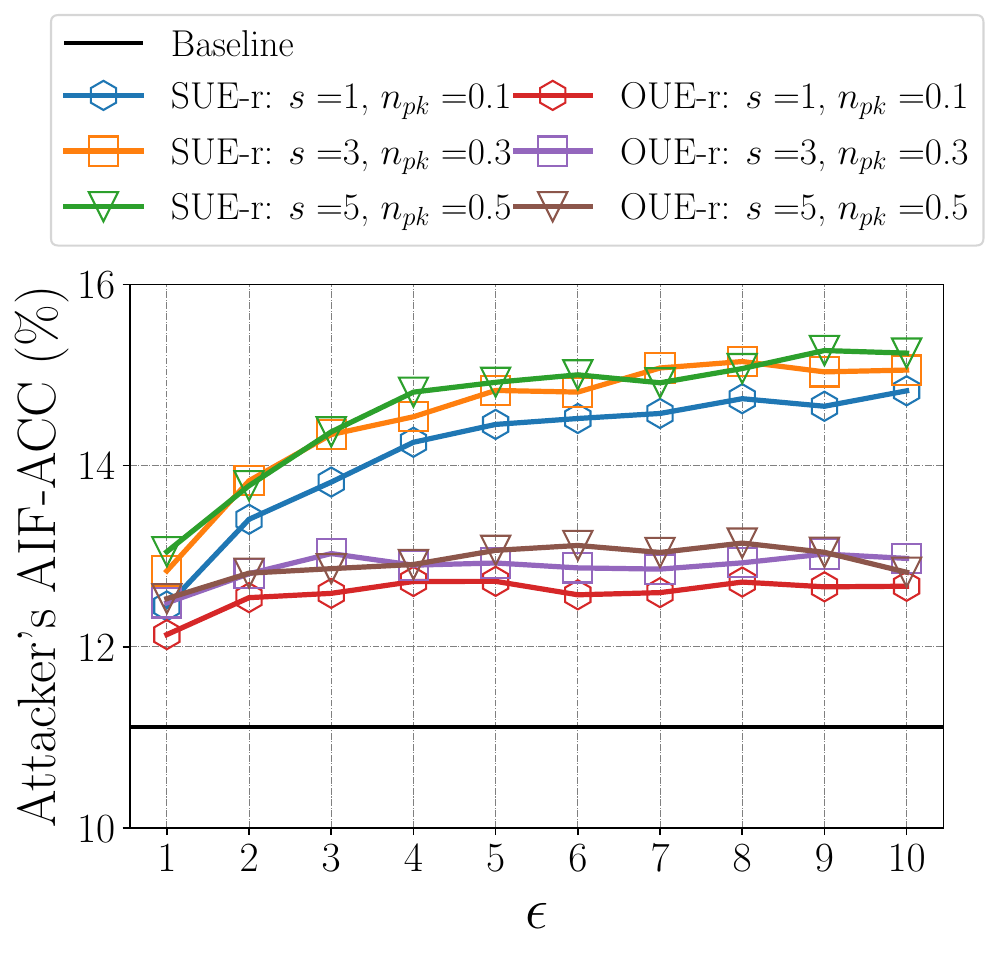}
  \caption{Hybrid model with RS+FD[UE-r] protocols.}
\end{subfigure}
\caption{Attacker's attribute inference accuracy (AIF-ACC) on the Nursery dataset with three attack models (i.e., NK, PK and hybrid) and five protocols (i.e., RS+FD[GRR], RS+FD[SUE-z], RS+FD[OUE-z], RS+FD[SUE-r] and RS+FD[OUE-r]), varying $\epsilon$, the number of synthetic profiles $s$ the attacker generates and the number of compromised profiles $n_{pk}$ the attacker has access to.}
\label{fig:attack_rspfd_nursery}
\end{figure*}

\section{Additional Results for Section~\ref{sub:results_rsprfd}} \label{appE:add_rsprfd}

This section provides additional results for the utility and privacy analysis of our proposed RS+RFD solution. 

\subsection{Multidimensional Frequency Estimation}
We compare the utility of our RS+RFD and RS+FD~\cite{Arcolezi2021_rs_fd} solutions for frequency estimation of multiple attributes following the experimental evaluation described in Section~\ref{sub:mult_freq_rsrfd} by using:

\begin{itemize}

    \item \textbf{Dataset.} We use the Adult~\cite{uci} ($d=10$ attributes, $n=45,222$ and $\textbf{k}=[74, 7, 16, 7, 14, 6, 5, 2, 41, 2]$).
    
    \item \textbf{LDP protocol within RS+FD.} RS+FD[GRR], RS+FD[SUE-r] and RS+FD[OUE-r], described in Section~\ref{sub:rspfd_sol}. 
    
    \item \textbf{LDP protocol within RS+RFD.} RS+RFD[GRR], RS+RFD[SUE-r] and RS+RFD[OUE-r], described in Section~\ref{sub:RSpRFD}. 
    
    \item \textbf{``Correct'' priors.} To simulate ``Correct'' prior distributions $\tilde{\textbf{f}}=[\tilde{f}_1, \tilde{f}_2, \ldots, \tilde{f}_d]$ to be used to generate realistic fake data with RS+RFD, we perturb the real frequency of each attribute $j\in[d]$ with the standard Laplace mechanism~\cite{Dwork2006,Dwork2006DP,dwork2014algorithmic} in centralized DP satisfying $\epsilon=0.1/d$ (\ie, split $\epsilon=0.1$ by $d$ attributes).
    
    \item \textbf{``Incorrect'' priors.} To simulate ``Incorrect'' scenarios where prior distributions are wrongly specified, we use the following distributions: Dirichlet distributions (DIR) with parameter $\mathbf{1}$, Zipf distributions (ZIPF) with parameter $s=1.01$ and Exponential distributions (EXP) with $\lambda=1$.
    For both ZIPF and EXP distributions, one hundred thousand samples are generated, and for each attribute $j\in[d]$, we reconstruct the histogram with $k_j$ buckets.
    
    \item \textbf{Analytical analysis.} We use the \textit{approximate} variance to measure the utility loss of our protocols by setting $f(v_i)=0$ in Eq.~\eqref{var:rs+rfd_grr} for RS+RFD[GRR] and in Eq.~\ref{var:rs+rfd_ue_r} for RS+RFD[UE-r].
    
    \item \textbf{Evaluation metrics.} To compare with~\cite{Arcolezi2021_rs_fd}, we vary $\epsilon$ in the interval $\epsilon=[\ln{(2)},\ln{(3)},\ldots,\ln{(7)}]$ and we measure the quality of the estimated frequencies with the averaged mean squared error metric: $MSE_{avg} = \frac{1}{d} \sum_{j \in [d]} \frac{1}{|A_j|} \sum_{v \in A_j}(f(v) - \hat{f}(v) )^2$.

\end{itemize}

Fig.~\ref{fig:mult_freq_results_adults} illustrates for all methods analytical (averaged approximate variance values) and experimental (averaged MSE metric) results varying $\epsilon$ for ``Correct'' (a and b plots) and ``Incorrect'' (c to h) priors for multidimensional frequency estimation with the RS+RFD and RS+FD solutions with the Adult dataset.

For all plots of Fig.~\ref{fig:mult_freq_results_adults}, one can notice that the experimental evaluations are consistent with the numerical evaluation of the variance for each protocol.
Moreover, for ``Correct'' priors, similar to Fig.~\ref{fig:mult_freq_results}, one can observe that the $MSE_{avg}$ metric of our proposed RS+RFD protocols consistently and considerably outperforms the utility of their respective version within the RS+FD solution. 
The intuition is that since random noise is drawn from realistic prior distributions, the fake data also contributes to the estimation of the attribute. 
With ``Incorrect'' priors (\eg, DIR), our RS+RFD protocols still outperforms the RS+FD protocols, with the exception of RS+RFD[OUE-r] with similar utility to RS+FD[OUE-r] in low privacy regimes (similar to Fig.~\ref{fig:mult_freq_results}).
Last, with both ``Incorrect'' priors following ZIPF and EXP distributions, the $MSE_{avg}$ (approximate variance, respectively) improved considerably in comparison with RS+FD protocols following uniform distribution, especially for our RS+RFD[GRR] protocol.

\begin{figure*}[!htb]
\begin{subfigure}{1\columnwidth}
  \centering
  \includegraphics[width=0.73\linewidth]{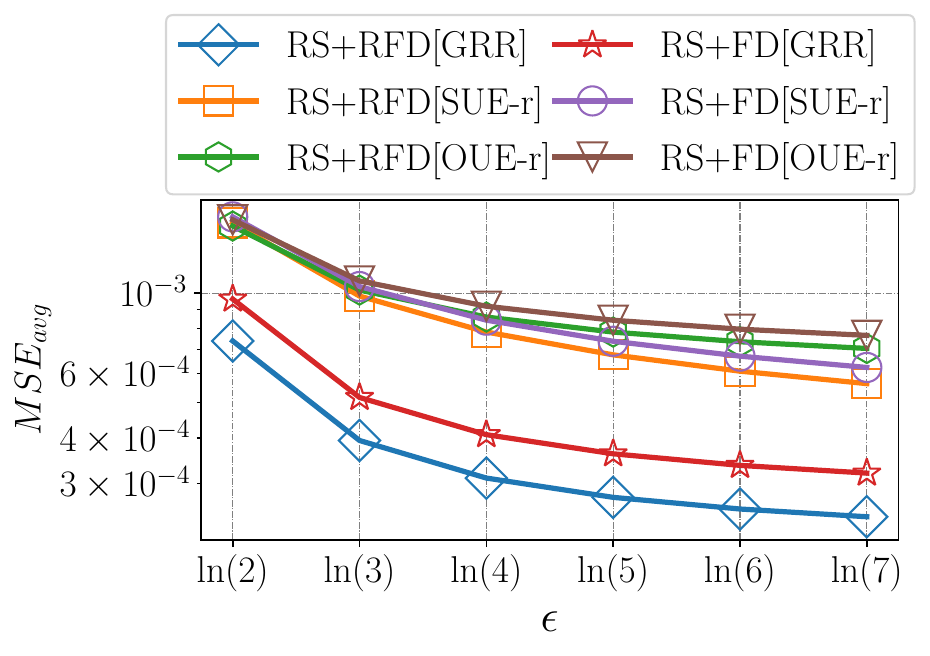}
  \caption{Analytical: ``Correct'' priors.}
\end{subfigure}%
\begin{subfigure}{1\columnwidth}
  \centering
  \includegraphics[width=0.73\linewidth]{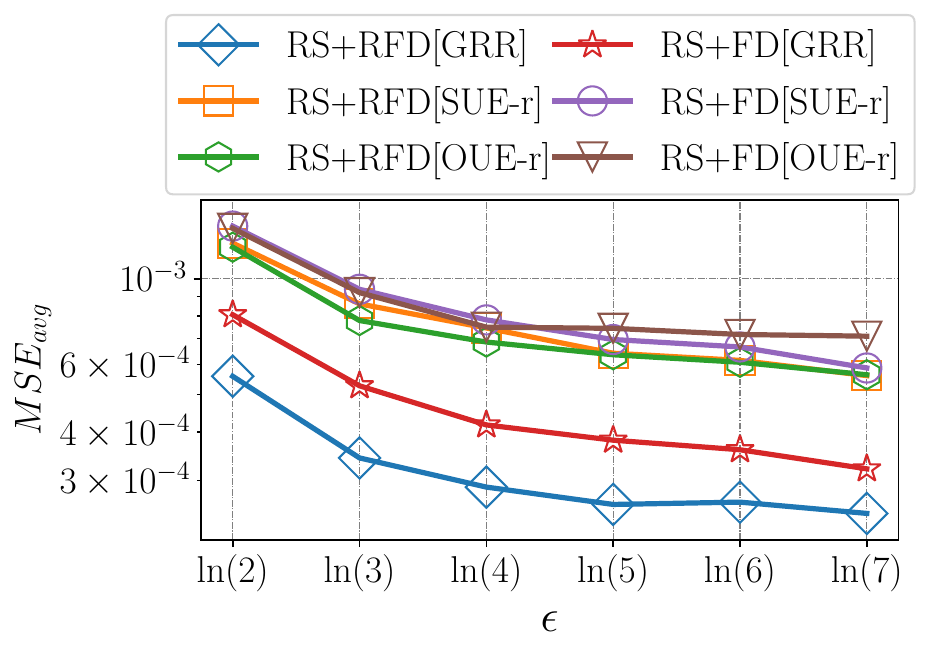}
  \caption{Experimental: ``Correct'' priors.}
\end{subfigure}\\
\begin{subfigure}{1\columnwidth}
  \centering
  \includegraphics[width=0.73\linewidth]{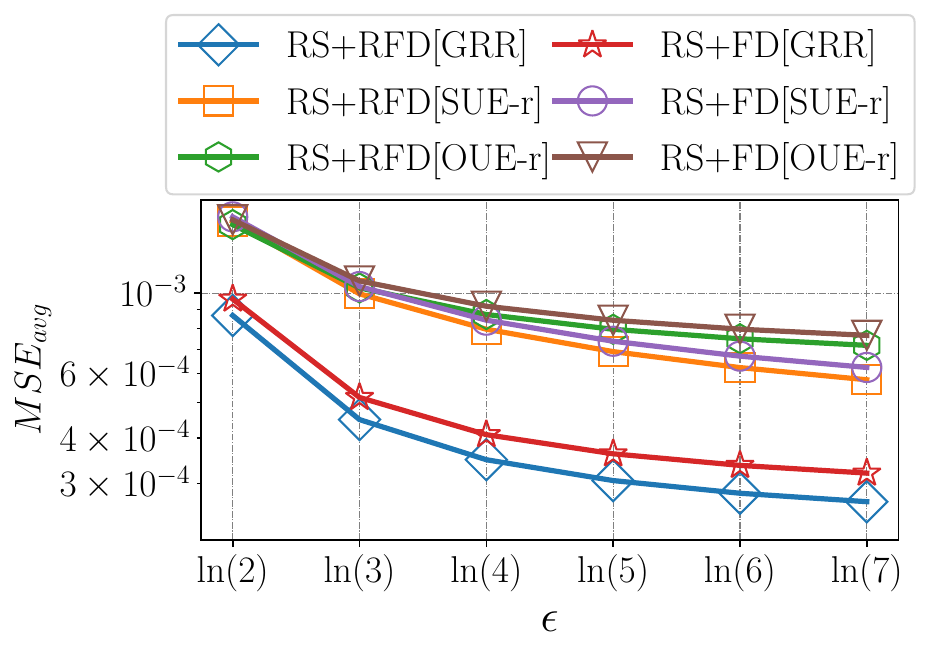}
  \caption{Analytical: ``Incorrect'' DIR priors.}
\end{subfigure}%
\begin{subfigure}{1\columnwidth}
  \centering
  \includegraphics[width=0.73\linewidth]{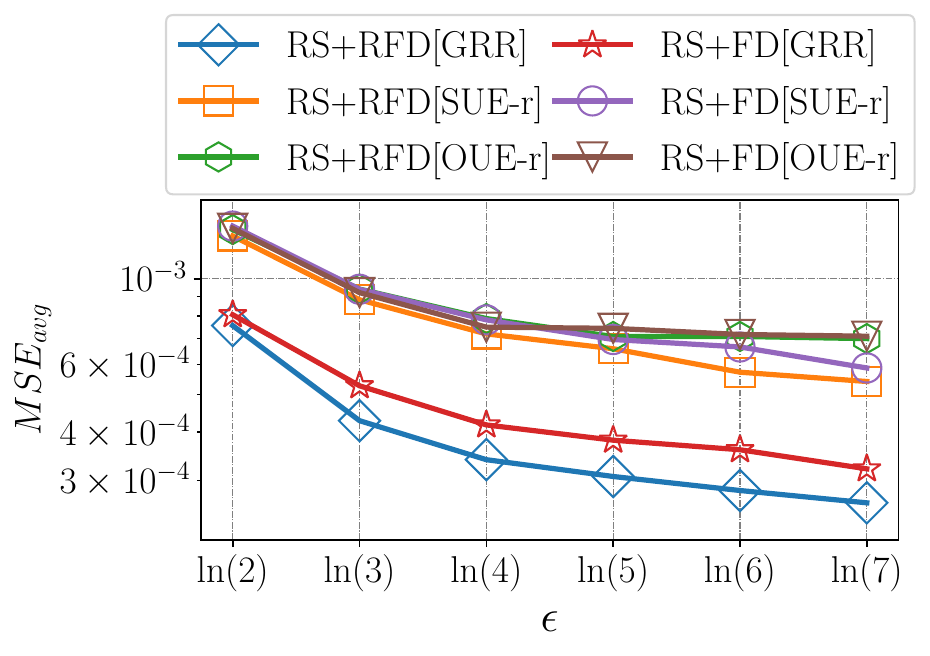}
  \caption{Experimental: ``Incorrect'' DIR priors.}
\end{subfigure}\\
\begin{subfigure}{1\columnwidth}
  \centering
  \includegraphics[width=0.73\linewidth]{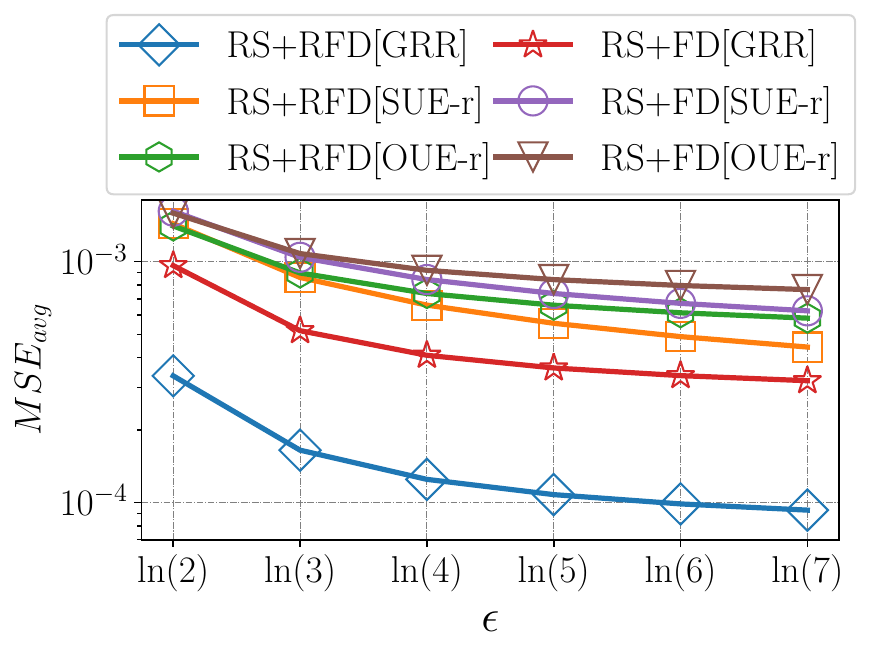}
  \caption{Analytical: ``Incorrect'' ZIPF priors.}
\end{subfigure}%
\begin{subfigure}{1\columnwidth}
  \centering
  \includegraphics[width=0.73\linewidth]{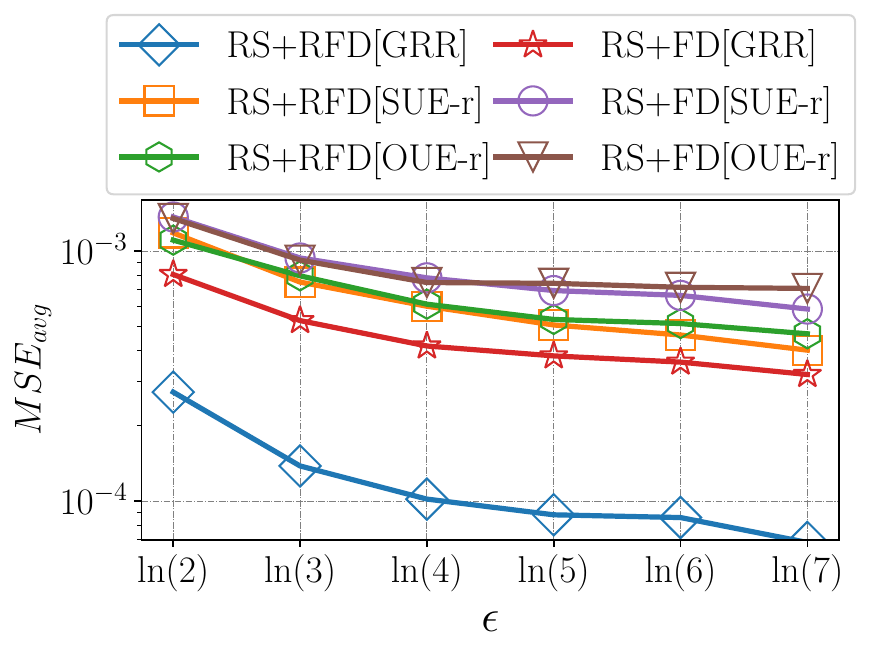}
  \caption{Experimental: ``Incorrect'' ZIPF priors.}
\end{subfigure}\\
\begin{subfigure}{1\columnwidth}
  \centering
  \includegraphics[width=0.73\linewidth]{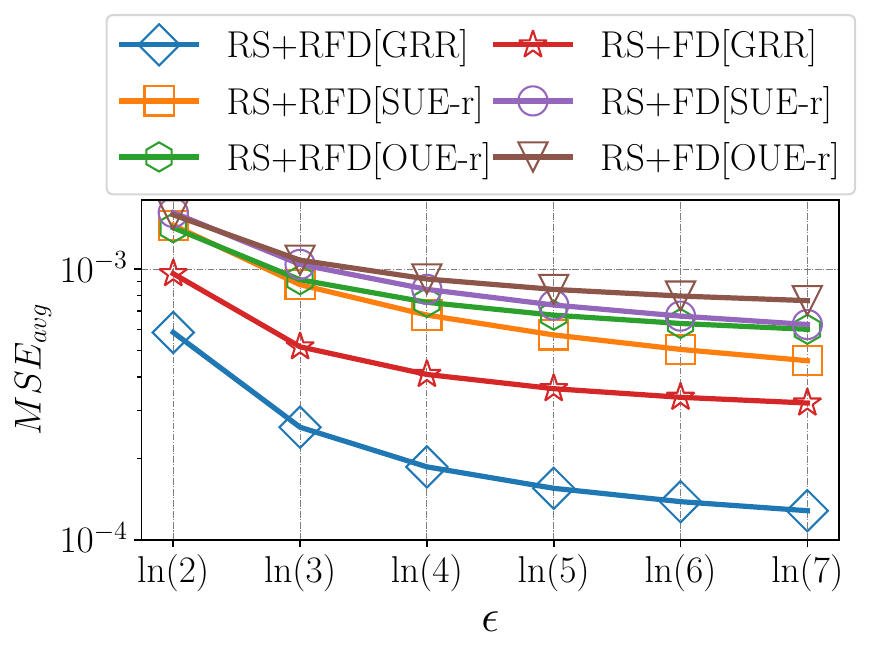}
  \caption{Analytical: ``Incorrect'' EXP priors.}
\end{subfigure}%
\begin{subfigure}{1\columnwidth}
  \centering
  \includegraphics[width=0.73\linewidth]{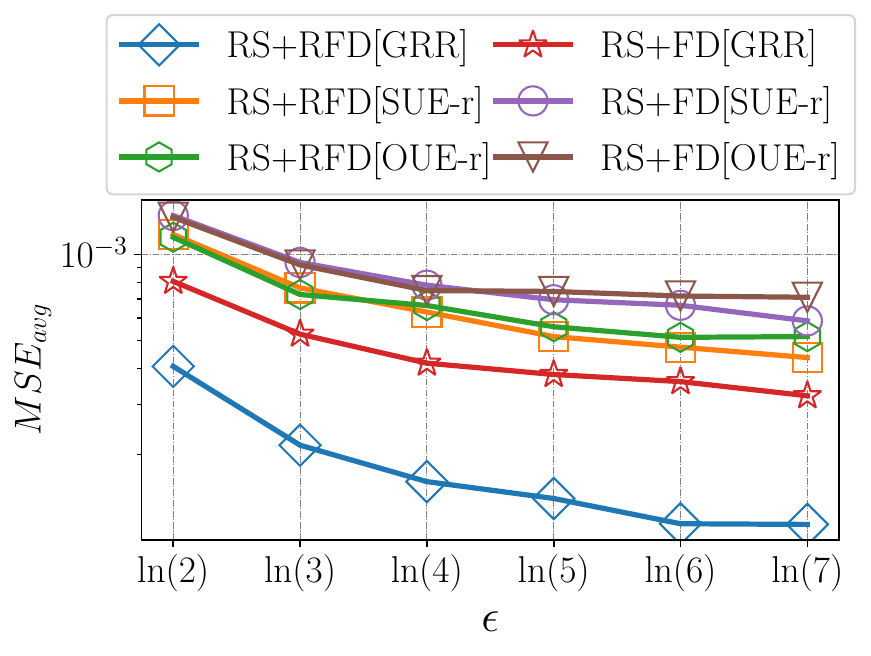}
  \caption{Experimental: ``Incorrect'' EXP priors.}
\end{subfigure}\\
\caption{Analytical (averaged approximate variance values) and experimental (averaged MSE metric) results varying $\epsilon$ for ``Correct'' (a and b plots) and ``Incorrect'' (c to h) priors for multidimensional frequency estimation with the RS+RFD and RS+FD solutions with the Adult dataset.}
\label{fig:mult_freq_results_adults}
\end{figure*}

\subsection{Inference of Sampled Attribute Attack}
Next, we compare the robustness of our RS+RFD protocols against the inference of the sampled attribute attack following the experimental evaluation described in Section~\ref{sub:att_inf_rsprfd} by using:

\begin{itemize}
    \item \textbf{Classifier.} We use the state-of-the-art XGBoost~\cite{XGBoost} algorithm to predict the sampled attribute of users in a multiclass classification framework (\ie, $d$ attributes) with default parameters.
    
    \item \textbf{Dataset.} We use the ACSEmployement~\cite{ding2021retiring} ($d=18$ attributes, $n=10,336$ and $\textbf{k}=[92, 25, 5, 2, 2, 9, 4, 5, 5, 4, 2, 18, 2, 2, 3, 9, 3, 6]$) dataset.
    
    \item \textbf{LDP protocol within RS+RFD.} RS+RFD[GRR], RS+RFD[SUE-r] and RS+RFD[OUE-r], described in Section~\ref{sub:RSpRFD}. 
    
    \item \textbf{``Correct'' priors.} To simulate ``Correct'' prior distributions $\tilde{\textbf{f}}=[\tilde{f}_1, \tilde{f}_2, \ldots, \tilde{f}_d]$ to be used to generate realistic fake data with RS+RFD, we perturb the real frequency of each attribute $j\in[d]$ with the standard Laplace mechanism~\cite{Dwork2006,Dwork2006DP,dwork2014algorithmic} in centralized DP satisfying $\epsilon=0.1/d$ (\ie, split $\epsilon=0.1$ by $d$ attributes).
    
    \item \textbf{``Incorrect'' priors.} To simulate ``Incorrect'' scenarios where prior distributions are wrongly specified, we use the following distributions: Dirichlet distributions (DIR) with parameter $\mathbf{1}$, Zipf distributions (ZIPF) with parameter $s=1.01$ and Exponential distributions (EXP) with $\lambda=1$.
    For both ZIPF and EXP distributions, one hundred thousand samples are generated, and for each attribute $j\in[d]$, we reconstruct the histogram with $k_j$ buckets.
    
    \item \textbf{Attribute inference model.} All three protocols are evaluated with the No Knowledge (NK) attack model of Section~\ref{sub:atk_models_rspfd}.

\end{itemize}

\begin{figure*}[!htb]
\begin{subfigure}{.33\textwidth}
  \centering
  \includegraphics[width=1\linewidth]{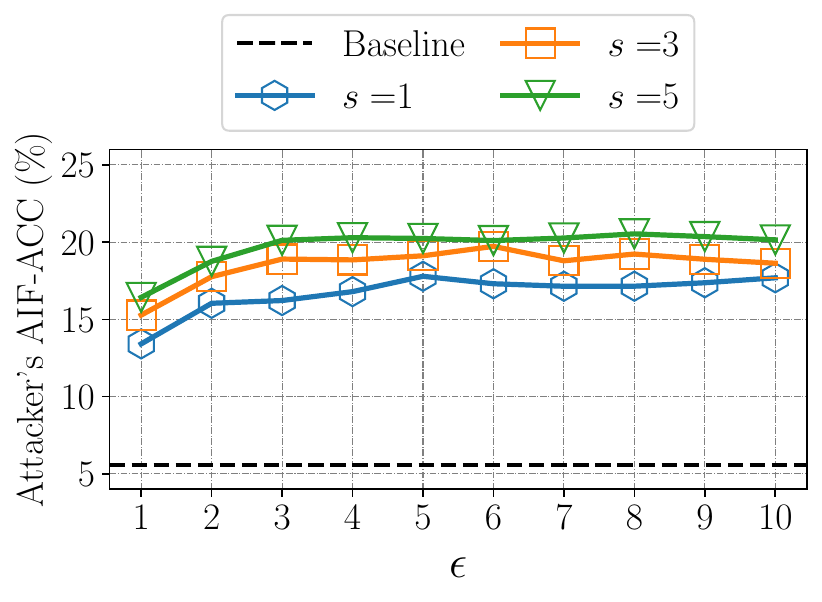}
  \caption{RS+RFD[GRR] with ``Incorrect'' DIR priors.}
\end{subfigure}%
\begin{subfigure}{.33\textwidth}
  \centering
  \includegraphics[width=1\linewidth]{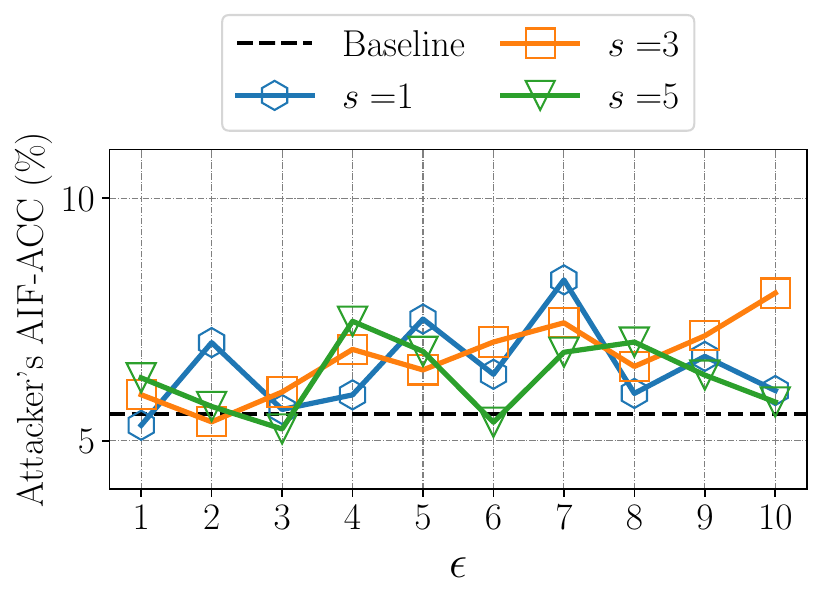}
  \caption{RS+RFD[GRR] with ``Incorrect'' ZIPF priors.}
\end{subfigure}
\begin{subfigure}{.33\textwidth}
  \centering
  \includegraphics[width=1\linewidth]{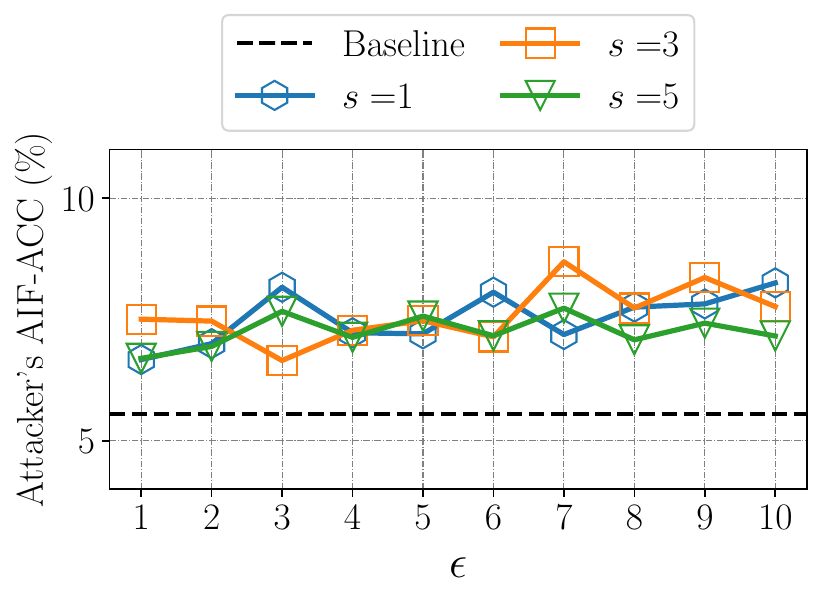}
  \caption{RS+RFD[GRR] with ``Incorrect'' EXP priors.}
\end{subfigure}
\\
\begin{subfigure}{.33\textwidth}
  \centering
  \includegraphics[width=1\linewidth]{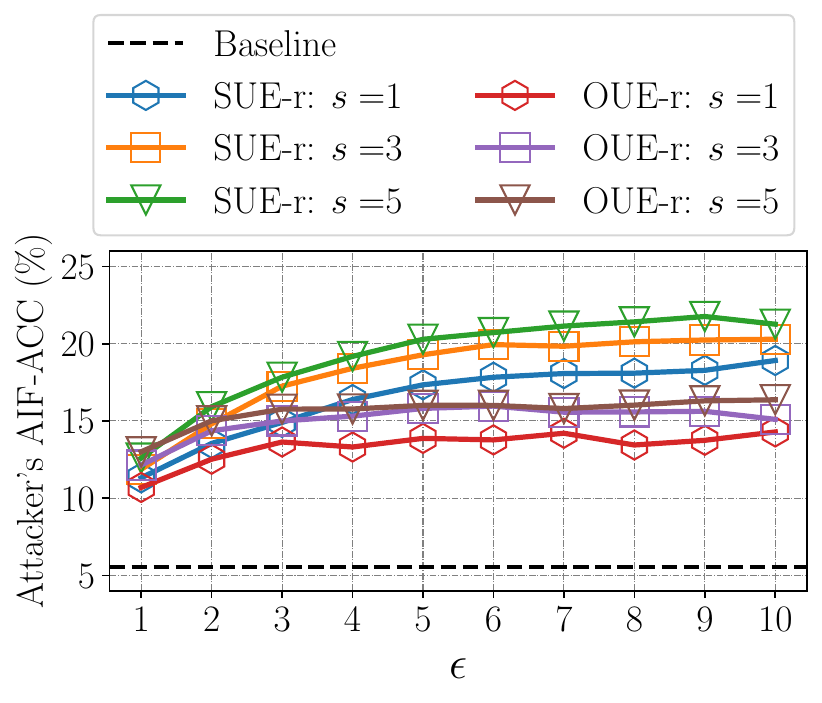}
  \caption{RS+RFD[UE-r] with ``Incorrect'' DIR priors.}
\end{subfigure}%
\begin{subfigure}{.33\textwidth}
  \centering
  \includegraphics[width=1\linewidth]{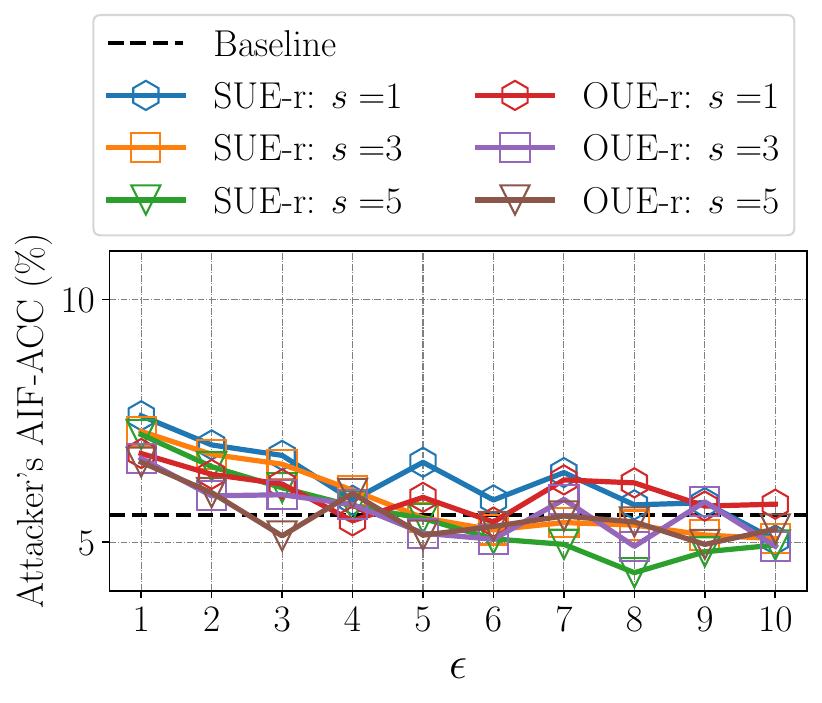}
  \caption{RS+RFD[UE-r] with ``Incorrect'' ZIPF priors.}
\end{subfigure}
\begin{subfigure}{.33\textwidth}
  \centering
  \includegraphics[width=1\linewidth]{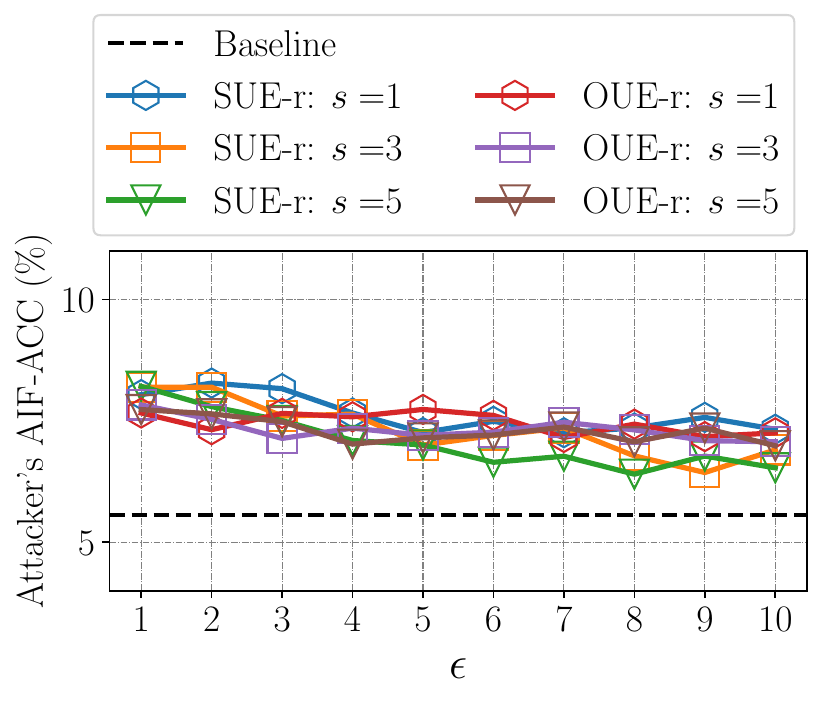}
  \caption{RS+RFD[UE-r] with ``Incorrect'' EXP priors.}
\end{subfigure}
\caption{Attacker's AIF-ACC on the ACSEmployement dataset with the No Knowledge (NK) attack model and our three protocols (\ie, RS+RFD[GRR], RS+RFD[SUE-r] and RS+RFD[OUE-r] with ``Incorrect'' priors), varying $\epsilon$, the number of synthetic profiles $s$ the attacker generates and the prior distribution (\ie, DIR, ZIPF and EXP).}
\label{fig:attack_rsprfd_inc_prior}
\end{figure*}

Fig.~\ref{fig:attack_rsprfd_inc_prior} illustrates the attacker's AIF-ACC metric on the ACSEmployement dataset with the NK attack model and our three protocols (\ie, RS+RFD[GRR], RS+RFD[SUE-r] and RS+RFD[OUE-r] with ``Incorrect'' priors), varying $\epsilon$, the number of synthetic profiles $s$ the attacker generates and the ``Incorrect'' prior distribution (\ie, DIR, ZIPF and EXP). 
We remark that the non-stability in the plots of Fig.~\ref{fig:attack_rsprfd_inc_prior} is due to different sources of randomness: different random ``Incorrect'' distributions $\mathbf{\tilde{f}}$, $\epsilon$-LDP randomization, fake data generation and the XGBoost algorithm. 

In Fig.~\ref{fig:attack_rsprfd_inc_prior}, for both ZIPF and EXP distributions, one can notice that our RS+RFD protocols considerably decrease the attacker's AIF-ACC when comparing with their respective RS+FD version in Fig.~\ref{fig:attack_rspfd}. 
More specifically, the RS+FD[GRR] and RS+FD[UE-r] protocols presented at most $\sim 25\%$ of AIF-ACC within the NK attack model (see Fig.~\ref{fig:attack_rspfd}).
On the other hand, our RS+RFD[GRR] and RS+RFD[UE-r] protocols with both ZIPF and EXP ``Incorrect'' priors presented at most $\sim 8\%$ of AIF-ACC (\cf{} Fig.~\ref{fig:attack_rsprfd_inc_prior}) within the same NK attack model, \ie, a 3x decrease factor.
Overall, the lowest performance was achieved when using DIR ``Incorrect'' priors, which showed small reductions compared to the NK curves in Fig.~\ref{fig:attack_rspfd} (\ie, RS+FD with prior uniforms).

\end{document}